\crefname{section}{\textsection}{\textsection}
\crefname{subsection}{\textsection}{\textsection}
\crefname{subsubsection}{\textsection}{\textsection}
\crefname{paragraph}{\textparagraph}{\textparagraph}
\crefname{thm}{Theorem}{Theorems}
\crefname{assumption}{Assumption}{Assumptions}
\crefname{prop}{Proposition}{Propositions}
\newcommand{\one}{\mathds{1}}
\newcommand{\N}{\mathbb{N}}
\newcommand{\C}{\mathbb{C}}
\newcommand{\R}{\mathbb{R}}
\newcommand{\vecp}{\vec{P}}
\def\D{\mathrm{d}}
\newcommand{\Hnele}{{H^{\rm{Nel}}_\varepsilon}}
\newcommand{\Hnelm}{{H^{\rm{Nel}}_\mu}}
\newcommand{\Hpfm}{{H^{\rm{PF}}_\mu}}
\renewcommand{\vec}{\mathbf}
\renewcommand{\le}{\leqslant}
\renewcommand{\leq}{\leqslant}
\renewcommand{\geq}{\geqslant}
\newcommand{\lf}{\left}
\newcommand{\ri}{\right}
\newcommand{\disp}{\displaystyle}
\newcommand{\braket}[2]{\lf\langle #1|#2 \ri\rangle}
\newcommand{\braketr}[2]{\lf\langle #1\lf|#2\ri. \ri\rangle}
\newcommand{\meanlr}[3]{\lf\langle #1\lf|#2\ri|#3\ri\rangle}
\newcommand{\eps}{\varepsilon}
\newcommand{\diff}{\mathrm{d}}
\newcommand{\sigmav}{\bm{\sigma}}
\numberwithin{equation}{section}
\newcommand{\bdm}{\begin{displaymath}}
\newcommand{\edm}{\end{displaymath}}
\newcommand{\bay}{\begin{array}{c}}
\newcommand{\eay}{\end{array}}
\newcommand{\ben}{\begin{enumerate}}
\newcommand{\een}{\end{enumerate}}
\newcommand{\beq}{\begin{equation}}
\newcommand{\eeq}{\end{equation}}
\newcommand{\beqn}{\begin{eqnarray}}
\newcommand{\eeqn}{\end{eqnarray}}
\newcommand{\bml}[1]{\begin{multline} #1 \end{multline}}
\newcommand{\bmln}[1]{\begin{multline*} #1 \end{multline*}}
\renewcommand{\Re}{\mathrm{Re}}
\newcommand{\lyxmathsym}[1]{\ifmmode\begingroup\def\b@ld{bold}
  \text{\ifx\math@version\b@ld\bfseries\fi#1}\endgroup\else#1\fi}
\theoremstyle{plain}
\newtheorem*{assumption*}{\protect\assumptionname}
\theoremstyle{plain}
\newtheorem{assumption}{\protect\assumptionname}
\theoremstyle{plain}
\newtheorem{thm}{\protect\theoremname}[section]
\newtheorem{defn}[thm]{\protect\definitionname}
\theoremstyle{definition}
\newtheorem{rem}[thm]{\protect\remarkname}
\theoremstyle{plain}
\newtheorem{prop}[thm]{\protect\propositionname}
\theoremstyle{plain}
\newtheorem{lem}[thm]{\protect\lemmaname}
\theoremstyle{plain}
\providecommand{\assumptionname}{Assumption}
\providecommand{\definitionname}{Definition}
\providecommand{\lemmaname}{Lemma}
\providecommand{\propositionname}{Proposition}
\providecommand{\remarkname}{Remark}
\providecommand{\theoremname}{Theorem}
\providecommand{\assumptionname}{Assumption}
\providecommand{\definitionname}{Definition}
\providecommand{\lemmaname}{Lemma}
\providecommand{\propositionname}{Proposition}
\providecommand{\questionname}{Question}
\providecommand{\remarkname}{Remark}
\providecommand{\theoremname}{Theorem}
\author[S. Breteaux]{S{\'e}bastien Breteaux}
\address[S. Breteaux]{Universit{\'e} de Lorraine, CNRS, IECL, F-57000 Metz, France}
\email{sebastien.breteaux@univ-lorraine.fr}
\author[M. Correggi]{Michele Correggi}
\address[M. Correggi]{Dipartimento di Matematica, Politecnico di Milano, P.zza Leonardo da Vinci, 32, 20133, Milano, Italy}
\email{michele.correggi@gmail.com}
\urladdr{https://sites.google.com/view/michele-correggi}
\author[M. Falconi]{Marco Falconi}
\address[M. Falconi]{Dipartimento di Matematica, Politecnico di Milano, P.zza Leonardo da Vinci, 32, 20133, Milano, Italy}
\email{marco.falconi@polimi.it}
\urladdr{https://www.mfmat.org/}
\author[J. Faupin]{J{\'e}r{\'e}my Faupin}
\address[J. Faupin]{Universit{\'e} de Lorraine, CNRS, IECL, F-57000 Metz, France}
\email{jeremy.faupin@univ-lorraine.fr}
\begin{document}

\title[Quantum Point Charges and Quasi-classical Electromagnetic
Fields]{Quantum Point Charges Interacting with Quasi-classical
  Electromagnetic Fields}
\bibliographystyle{natalpha}
\begin{abstract}
  We study effective models describing systems of quantum particles
  interacting with quantized (electromagnetic) fields in the quasi-classical
  regime, i.e., when the field's state shows a large average number of
  excitations. Once the field's degrees of freedom are traced out on
  factorized states, the reduced dynamics of the particles' system is
  described by an effective Schr\"{o}dinger operator keeping track of the
  field's state. We prove that, under suitable assumptions on the latter,
  such effective models are well-posed even if the particles are point-like,
  that is no ultraviolet cut-off is imposed on the interaction with quantum
  fields.
\end{abstract}

\maketitle

\section{Introduction}

It is widely known that models describing quantum particles in interaction
with quantized fields are ill-posed if the former are assumed to be
point-like or, equivalently, no ultraviolet cut-off is imposed on the
interaction with the fields \cite{spohn2004dcp}. A typical and paradigmatic
example is provided by the Nelson model \cite{nelson1964jmp}, where nucleons
are linearly coupled to a scalar quantized field: the ultraviolet
regularization can in this case be removed up to the extraction of an infinite
self-energy and a suitable renormalization procedure \cite{nelson1964jmp}. A
more relevant model is the Pauli-Fierz (PF) Hamiltonian \cite{pauli1938nc},
describing quantum particles interacting with the electromagnetic radiation,
which is well-posed only if the large frequencies of the radiation are
suitably cut off. The removal of such ultraviolet cut-off is one of the major
open problems in non-relativistic Quantum ElectroDynamics (QED) (see,
\emph{e.g.}, \cite[\textsection 19.3]{spohn2004dcp}). In this work, we aim at tackling this
problem in the {\it quasi-classical regime} recently introduced in
\citep{correggi2017ahp,carlone2021sima,correggi2017arxiv,correggi2023jems,correggi2023apde}.

The quasi-classical regime consists of an average number of field's
excitations that is much larger than $ 1 $ (we use natural units in which $
\hbar = 1 $): more precisely, a quasi-classical field state $ \Psi_{\eps} $
satisfies \bdm \lf\langle \mathcal{N} \ri\rangle_{\Psi_\eps} \simeq
\tfrac{1}{\eps}, \qquad \mbox{for } 0 < \eps \ll 1, \edm where \bdm
\mathcal{N} = \int a^{\dagger}(k) a(k) \diff k \edm is the number of the
field's excitations. When this is the case, we can consider the commutator
between the canonical variables $ [a(k), a^{\dagger}(k')] = \delta(k-k') $ to
be negligible and introduce rescaled variables $ a^{\sharp}_{\eps} : =
\sqrt{\eps} a^{\sharp} $, so that \beq
\label{eq: commutation}
\lf[ a_{\eps}(k), a^{\dagger}_{\eps}(k') \ri] = \eps \delta(k-k').  \eeq Such
variables are the ones we are going to use throughout the paper. Their
semiclassical nature is made apparent in the vanishing of the commutator as $
\eps \to 0 $, so that the field can be well approximated by its classical
counterpart.

Let us specify now the setting in more details. We consider a bipartite
quantum system whose space of states is \beq
\label{eq: Hilbert}
\mathscr{H} := L^2(\mathbb{R}^3; \C^{2s+1}) \otimes \mathscr{F}_{\eps}, \eeq
where $ \mathscr{F}_{\eps} $ is a suitable Fock space describing the field's
degrees of freedom and $ s \in \tfrac{1}{2} \N $ stands for the spin of the
particle. We are assuming for simplicity that there is a single quantum
particle interacting with the field but the model can be easily generalized
to many-body systems. The Hamiltonian for the full system is denoted by $
\mathbb{H}_{\eps} $ and contains a non-trivial interaction term, i.e., not
factorized: we aim to address models of non-relativistic QED and therefore
the PF Hamiltonian, but for the sake of providing a simpler and pedagogical
example we will also discuss the Nelson Hamiltonian. However, the general
scheme is independent of the specific details of the Hamiltonian $
\mathbb{H}_{\eps} $.  Our main goal is indeed to study the reduced operators
obtained by tracing out the field's degrees of freedom on a product state of
the form \bdm \psi \otimes \Psi_{\eps} \in \mathscr{H}, \edm where $ \Psi_{\eps} \in
\mathscr{F}_{\eps} $ is a quasi-classical state in the sense specified
above. We thus consider the quadratic form \beq
\label{eq: expectation}
\mathcal{Q}_{\eps}[\psi] : = \meanlr{\psi \otimes \Psi_{\eps}}{\mathbb{H}_{\eps}}{\psi \otimes \Psi_{\eps}}_{\mathscr{H}} -
\meanlr{\psi \otimes \Psi_{\eps}}{\diff \Gamma_{\eps}(\omega)}{\psi \otimes \Psi_{\eps}}_{\mathscr{H}}, \eeq
and study its limit as $ \eps \to 0 $, where $ \diff \Gamma_{\eps}(\omega) $ stands for
the second quantization of the dispersion relation $ \omega $.

More precisely, we are going to show that, while the operator $ \mathbb{H}_{\eps} $ is
in general well defined only in presence of a suitable ultraviolet
regularization, the form $ \mathcal{Q}_{\eps}[\psi] $ is well posed even without such an
ultraviolet cut-off, provided the field's state $ \Psi_{\eps} $ is regular
enough. Then, under the same assumptions on $ \Psi_{\eps} $, we prove that the
quadratic form $ \mathcal{Q}_{\eps} $ converges as $ \eps \to 0 $ to a quadratic form $
\mathcal{Q}_{\mu} $ depending on a classical Wigner measure $\mu$ on the one-excitation
space of the field: \bdm \mathcal{Q}_{\eps} \xrightarrow[\eps \to 0]{\Gamma} \mathcal{Q}_{\mu}, \edm where
the convergence is in the sense of $ \Gamma-$convergence of functionals
\cite{dalmaso1993pnde}. Furthermore, $ \mathcal{Q}_{\mu} $ uniquely identifies a
self-adjoint Schr\"{o}dinger operator $ H_{\mu} $ characterizing the particle's
reduced dynamics. Since (weak and strong) $ \Gamma-$convergence of quadratic forms
is equivalent
\citep[][\textsection13]{dalmaso1993pnde%%chapter on G-convergence
} to strong resolvent convergence of the associated operators, we deduce that
the reduced particle's dynamics converges as $ \eps \to 0 $ to the one
generated by $ H_{\mu} $. In addition, if the particle is trapped, i.e., a
confining potential is present, then the convergence of the generators is
lifted to norm resolvent sense. We already point out that in the case of the
PF model, in order to prove the above convergence, we will have to perform a
vacuum renormalization and remove some energy diverging as $ \eps \to 0 $, or,
equivalently, consider the normal ordered version of the PF Hamiltonian $
\mathbb{H}_{\eps} $. 
One of the tools we use to handle quadratic forms without ultraviolet cutoff is the use of suitable Lorentz spaces, in a similar fashion as in the works \cite{BFP1,BFP2}.

\medskip

\paragraph{\textbf{Organisation of the paper}}We present the statements of the main results in  \cref{sec:main-results}.  \cref{sec:preliminaries} is devoted to some functional inequalities in Lorentz spaces along with the semiclassical analysis framework used along the paper.  \cref{sec:Nelson} presents the proofs of our results for the Nelson model while  \cref{sec:Pauli-Fierz} contains the proofs of our results for the Pauli-Fierz model. Finally, in  \cref{sec:Gamma-conv}, we prove a $\Gamma$-convergence result which we need in the core of the article.

\section{Main Results}\label{sec:main-results}

We present here our main results. In order to provide a precise statement we
have first to address the notion of convergence in the quasi-classical limit
and provide a definition of the Wigner measures associated to (families of)
field states $ \Psi_{\eps} $. Then, we first state the results concerning the Nelson
model and next discuss non-relativistic QED.

We recall that the Hilbert space on which $ \mathbb{H}_{\eps} $ acts is $ L^2(\mathbb{R}^3;
\C^{2s+1}) \otimes \mathscr{F}_{\eps} $, where $ \mathscr{F}_{\eps} $ is the
symmetric Fock space constructed over the one-excitation space $ \mathfrak{h}^{\sharp} $,
i.e., 
\begin{equation}
\label{eq: fock}
\mathscr{F}_{\eps}=\Gamma_{\varepsilon}(\mathfrak{h}^{\sharp}) = \bigoplus_{n \in \N} {\mathfrak{h}^{\sharp}}^{\otimes_{\mathrm{s}} n},  
\end{equation}
where
\begin{equation}
  \mathfrak{h}^{\mathrm{Nel}} : = L^2(\R^3),	\qquad		\mathfrak{h}^{\mathrm{PF}} : = L^2(\R^3; \C^2).	
\end{equation}
The Hamiltonian of the complete system is assumed to have the formal
structure \beq \label{eq:defH}\mathbb{H}^{\sharp}_{\eps} = H^{\sharp}_0 + \diff \Gamma_{\eps}(\omega) + \mathbb{H}^{\sharp}_{I, \eps},
\eeq where $ H^{\sharp}_0 $ is the particle's Hamiltonian acting non-trivially
only on $ L^2(\R^3; \C^{2s+1}) $, $ \Gamma_{\eps} $ the second quantization map
with canonical variables satisfying \eqref{eq: commutation}, $ \omega $ the
field's dispersion relation and $ \mathbb{H}^{\sharp}_{I, \eps} $ a non-factorized
operators describing the particle-field interaction. Here and in the sequel, in the case where $\mathfrak{h}^\sharp=\mathfrak{h}^{\mathrm{Nel}}$, we identify
a map $\omega^{\alpha} :\mathbb{R}^3\to\mathbb{R}$, for $ \alpha \in \R $, with the operator of multiplication by $\omega^{\alpha}$ on
$L^2(\mathbb{R}^3)$. Likewise, in the case where $\mathfrak{h}^\sharp=\mathfrak{h}^{\mathrm{PF}}$, we identify
$\omega^{\alpha}$ with the operator $\omega^{\alpha} \:\mathds{1}_{\mathbb{C}^2}$ on
$L^2(\mathbb{R}^3;\mathbb{C}^2)$. We then recall that the operator $\diff\Gamma_\varepsilon(\omega^{\alpha})$ in
\eqref{eq:defH} is defined by
\begin{equation*}
	\lf. \mathrm{d}\Gamma_{\varepsilon}(\omega^{\alpha}) \ri|_{\mathfrak{h^{\sharp}}^{\otimes_{\mathrm{s}} n}}=\varepsilon\sum_{k=1}^n\mathds{1}_{\mathfrak{h^{\sharp}}^{\otimes k-1}}\otimes \omega^{\alpha} \otimes \mathds{1}_{\mathfrak{h^{\sharp}}^{\otimes n-k}}.
\end{equation*}

Concerning the dispersion relation $ \omega $, we are going to assume the
following properties, which are satisfied by the typical choices $ \omega(k) = |k|
$ or $ \omega(k) = \sqrt{k^2 + m^2} $.

\renewcommand{\theassumption}{$(\mathrm{A}_{\omega})$}
\begin{assumption}
  \label{hyp:A-omega}
  The map $\omega:\mathbb{R}^{3}\to\mathbb{R}_{+}$ is measurable, it grows at least linearly, i.e., 
  	\beq
  		  \liminf_{|k| \to +\infty} \frac{\omega(k)}{|k|} > 0,
	\eeq
	and it admits an (unbounded) inverse $
  \omega^{-1} $ with dense domain $ \mathscr{D}(\omega^{-1}) \subset \mathfrak{h}^{\sharp} $.
\end{assumption}
 
\subsection{Quasi-classical limit}

We recall some facts on semiclassical measures (see,
\emph{e.g.}, \cite{ammari2008ahp,AmmariNier09,AmmariNier11,AmmariNier15,falconi2017arxiv}). Analogously to the notation
introduced above, $ \mathfrak{h}^{\sharp}_{\omega} $ denote the one-excitation spaces
\begin{equation}
  \mathfrak{h}^{\mathrm{Nel}}_{\omega} : = L^2(\R^3, \omega(k) \diff k),	\qquad		\mathfrak{h}^{\mathrm{PF}}_{\omega} : = L^2(\R^3, \omega(k) \diff k; \C^2).	
\end{equation}
More generally, we define the spaces
$\mathfrak{h}^{\sharp}_{\omega^{\alpha}}$ as the weighted $L^2$-spaces with
weight $\omega^{\alpha}$, $\alpha\in \mathbb{R}$. Observe that for all
$\alpha\in \mathbb{R}$, $\mathfrak{h}^{\sharp}\cap
\mathfrak{h}_{\omega^{\alpha}}^{\sharp}$ is dense in both
$\mathfrak{h}^{\sharp}$ and $\mathfrak{h}^{\sharp}_{\omega^{\alpha}}$ thanks
to \cref{hyp:A-omega}. The Weyl operator associated to Nelson and Pauli-Fierz
fields reads
\begin{equation}
  \label{eq: weyl}
  W_{\varepsilon}(z) :=e^{i(a^{\dagger}_{\varepsilon}(z)+a_{\varepsilon}(z))}\; 
\end{equation}
for $ z \in \mathfrak{h}^{\sharp} $. To shorten the notation, in the following we will omit the
label $ {\mathrm{Nel/PF}} $ distinguishing between the Nelson and Pauli-Fierz
models, when the statement applies to both cases. We denote by
$\mathscr{P}(\mathfrak{h})$ the set of Borel probability measures on $\mathfrak{h}$.

\begin{defn}[Semiclassical convergence]
  \label{def:muNelson}
  \mbox{}		\\
  Given a family of normalized microscopic states $ \lf\{ \Psi_{\varepsilon}
  \ri\}_{\varepsilon\in (0,1)} \subset \mathscr{F}_{\eps} $, let us define
  the associated set of quasi-classical
  $\mathfrak{h}_{\omega^{\alpha}}$-Wigner measures $
  \mathscr{M}_{\omega^{\alpha}}(\Psi_{\eps}) \subset
  \mathscr{P}(\mathfrak{h}_{\omega^{\alpha}})$, $\alpha\in \mathbb{R}$, as
  the subset of all probability measures $ \mu $, such that $ \exists \lf\{
  \eps_n \ri\}_{n \in \N} $, $ \eps_n \xrightarrow[n \to +\infty]{} 0 $, so
  that
  \begin{equation}
    \label{eq: convergence}
    \lim_{n\to + \infty} \braketr{\Psi_{\varepsilon_n}}{W_{\varepsilon_n}(\eta) \Psi_{\varepsilon_n}}_{\mathscr{F}_{\eps_n}} = \widehat{\mu}(\eta) : = \int_{\mathfrak{h}_{\omega^{\alpha}}} \mathrm{d}\mu(z) \: e^{2i\Re \braketr{\omega^{-\alpha/2} \eta}{\omega^{\alpha/2} z}_{\mathfrak{h}}} 
  \end{equation}
  for all $\eta\in \mathfrak{h}_{\omega^{-\alpha}}\cap \mathfrak{h}$, which
  we also denote for short as $ \Psi_{\varepsilon_n} \xrightarrow[n \to +
  \infty]{\omega^{\alpha}\mathrm{-sc}} \mu $.
\end{defn}

To ensure that the sequence of field states we are considering admits at
least one limit point, i.e., the set of associated Wigner measures $
\mathscr{M}_{\omega}(\Psi_{\eps}) $ is non-empty, we assume a uniform control on
these states in $\varepsilon$. The precise statement of such a control depends on the
model and therefore we make two different assumptions for the Nelson and PF
models, respectively. We preliminary recall that \bdm \D \Gamma_{\varepsilon}^{(2)}(\omega\otimes\omega)=\D
\Gamma_{\varepsilon}(\omega)^{2}-\varepsilon \,\D \Gamma_{\varepsilon}(\omega^{2}).  \edm

\renewcommand{\theassumption}{$(\mathrm{A}^{\mathrm{Nel}}_{\Psi})$}
\begin{assumption}
  \label{hyp:A-Nel}
  The family $ \lf\{ \Psi_{\varepsilon} \ri\}_{\varepsilon \in (0,1)} \subset
  \Gamma_{\eps}(L^2(\mathbb{R}^3))$ is such that \beq
  \meanlr{\Psi_{\varepsilon}}{1+\,\D
    \Gamma_{\varepsilon}(\omega)}{\Psi_{\varepsilon}}_{\mathscr{F}_{\eps}}
  \leq C \eeq uniformly in $ \varepsilon $.
\end{assumption}

\renewcommand{\theassumption}{$(\mathrm{A}^{\mathrm{PF}}_{\Psi})$}
\begin{assumption}
  \label{hyp:A-PF}
  The family $ \lf\{ \Psi_{\varepsilon} \ri\}_{\varepsilon \in (0,1)} \subset
  \Gamma_{\eps}(L^2(\mathbb{R}^3; \C^2))$ is such that \beq
  \meanlr{\Psi_{\varepsilon}}{1+\,\D \Gamma_{\varepsilon}(\omega) + \D
    \Gamma_{\varepsilon}^{(2)}(\omega\otimes\omega))}{\Psi_{\varepsilon}}_{\mathscr{F}_{\eps}}
  \leq C \eeq uniformly in $ \varepsilon $.
\end{assumption}

We anticipate that for any family of states $ \lf\{ \Psi_{\eps} \ri\}_{\eps}
\subset \mathscr{F}_{\eps} $ as in \cref{def:muNelson}, such that \cref{hyp:A-Nel}
or \cref{hyp:A-PF} holds, we have that $\mathscr{M}_{\omega}(\Psi_{\eps}) \neq \varnothing$
\citep{ammari2008ahp,falconi2017ccm}, i.e., there exists at least one $ \mu \in
\mathscr{M}_{\omega}(\Psi_{\eps}) $, such that \bdm \Psi_{\varepsilon_n} \xrightarrow[n \to + \infty]{\omega
  \mathrm{-sc}} \mu.  \edm In addition, for any such converging sequence
$\lf\{\Psi_{\varepsilon_n} \ri\}_{n\in \mathbb{N}}$ and $ \mu $,
\begin{equation}
  \braketr{\Psi_{\varepsilon_n}}{a_{\varepsilon_n}^{*}(g)\Psi_{\varepsilon_n}}_{\mathscr{F}_{\eps_n}} \xrightarrow[n\to \infty]{} \int_{\mathfrak{h}_{\omega}} \braketr{\omega^{1/2} z}{\omega^{-1/2} g}_{\mathfrak{h}} \, \D \mu(z)\;,
\end{equation}
for all $g\in \mathfrak{h}_{\omega^{-1}}$ (see \cref{prop:semicl-analys-1} below).

\subsection{Nelson model}

We consider a spinless non-relativistic particle linearly coupled to a
quantized scalar field. The space of states \eqref{eq: Hilbert} takes in this
case the following form
\begin{equation}
  \mathscr{H}^{\mathrm{Nel}} = L^2(\mathbb{R}^3)\otimes \mathscr{F}^{\mathrm{Nel}}_{\eps} =L^2(\mathbb{R}^3)\otimes \Gamma_{\eps}(\mathfrak{h}^{\mathrm{Nel}}),	\qquad		\mathfrak{h}^{\mathrm{Nel}} = L^2(\R^3).
\end{equation}
The energy of the total quantum system is described by a Nelson-type
Hamiltonian, given by
\begin{equation}
  \mathbb{H}_{\varepsilon}^{\mathrm{Nel}}= \vec{P}^2 + U(x) + \mathrm{d}\Gamma_\varepsilon(\omega) + \Phi_\varepsilon( e^{i x \cdot k} \omega^{-\frac12}\chi ).
\end{equation}
Here $ \vec{P} :=-i{\nabla}_x$ is the momentum\footnote{We use boldface
  letters to denote vectors in $ \R^3 $, whenever we need to stress the
  vector nature of the object.} of the electron,
$U:\mathbb{R}^3\to\mathbb{R}$ is a real external potential, and
\begin{equation*}
  \Phi_\varepsilon(e^{ix\cdot k} \omega^{-\frac12}\chi) := a^*_\varepsilon(e^{ix\cdot k} \omega^{-\frac12}\chi)+a_\varepsilon(e^{ix\cdot k} \omega^{-\frac12}\chi)=\int_{\mathbb{R}^3} \frac{\chi(k)}{\omega^{\frac12}(k)}\big(e^{ix\cdot k}a^*_\varepsilon(k)+e^{-ix\cdot k}a_\varepsilon(k)\big)\mathrm{d}k,
\end{equation*}
is the field operator corresponding to the interaction between the
non-relativistic particle and the field. The function
$\chi:\mathbb{R}^3\to\mathbb{R}$ is an ultraviolet cut-off, which might be
required for the Hamiltonian to identify a self-adjoint operator in the
Hilbert space $\mathscr{H}^{\mathrm{Nel}}$, but which, in our paper, can
subsequently be put equal to $1$.

Before stating the assumptions on $U$ and $\chi$, let us denote by
$U_+:=\max(U,0)$ and $U_-:=\max(-U,0)$ the positive and negative parts
of $U$, respectively.

\renewcommand{\theassumption}{$(\mathrm{A}_{U})$}
\begin{assumption}
  \label{hyp:A-U}
  The potential $U:\mathbb{R}^{3}\to\mathbb{R}$ is such that $U_{+} \in
  L_{\mathrm{loc}}^{1}(\R^3) $ and $ U_{-} $ is KLMN form-bounded w.r.t. $ -
  \Delta $, i.e., there exists $ a \in (0,1) $ and $ b \in \R $ such that
  \beq \meanlr{\psi}{U_-}{\psi} \leq a \meanlr{\psi}{-\Delta}{\psi} + b \lf\|
  \psi \ri\|^2, \qquad \forall \psi \in H^1(\R^3).  \eeq
\end{assumption}

Concerning the ultraviolet cut-off function, we preliminary recall the
definition of weak $ L^p $ spaces: for $1\le p<\infty$, we denote by $
L^{p,\infty}(\mathbb{R}^{3})$ the set of (equivalence classes of) measurable
functions $f:\mathbb{R}^{3}\to\mathbb{C}$ such that the quasi-norm \beq \lf\|
f \ri\|_{L^{p,\infty}}:=p \lf\| \bigl| \lf\{|f|>t \ri\} \bigr|^{1/p}\,t
\ri\|_{L^{\infty}((0,\infty),\D t/t)} \eeq is finite, where, for any
measurable set $ S $, $ |S| $ stands for its Lebesgue measure.

\renewcommand{\theassumption}{$(\mathrm{A}_{\chi})$}
\begin{assumption}
  \label{hyp:A-chi}
  The function $\chi:\mathbb{R}^{3}\to\mathbb{R}$ is such that $\chi/\omega
  \in L^{3,\infty}(\R^3)$.
\end{assumption}

We remark that the function $ \chi \equiv 1 $ can be easily seen to satisfy
the above \cref{hyp:A-chi}, at least in the physically relevant cases $
\omega(k) = \sqrt{k^2+m^2} $ and $ \omega(k) = \lvert k  \rvert_{}^{} $, since \bdm \lf\| 1/\omega
\ri\|_{L^{3,\infty}(\R^3)} \leq 3 \lf\| \bigl| \lf\{1/|k| > t \ri\}
\bigr|^{1/3}\,t \ri\|_{L^{\infty}((0,\infty),\D t/t)} = 3^{2/3} (4\pi)^{1/3}.
\edm More generally, it can be readily seen that $ 1/\omega \in
L^{3,\infty}(\R^3) $ for any $ \omega $ satisfying \cref{hyp:A-omega}. Hence,
\cref{hyp:A-chi} does not require a decay for large $ |k| $ of the function $
\chi $, as it occurs for ultraviolet cut-offs.

We will see in \cref{sec:Nelson} that the following quadratic form on the Schwartz space $\mathcal{S}(\mathbb{R}^3)$ indeed defines a function $V_\varepsilon(x)$ belonging to some Lorentz space
\begin{equation}
  \label{eq:def-Vepsilon}
\int_{\mathbb{R}^3} V_\varepsilon |\psi|^2  := \int_{\mathbb{R}^3} 2\Re \Big( (2\pi)^{3/2} \overline{\mathcal{F}(|\psi|^2)(k)} \frac{\chi(k)}{\sqrt{\omega(k)}} \langle \Psi_\varepsilon | a_\varepsilon^*(k)\Psi_\varepsilon \rangle_{\mathscr{F}_\varepsilon} \Big)\, \mathrm{d}k\,,\quad \forall\psi \in \mathcal{S}(\mathbb{R}^3)\,,
\end{equation}
where  $ \mathcal{F} $ denotes the Fourier transform on $ L^2(\R^d) $, i.e.,
\begin{equation} \label{eqn:Fourier}
	\mathcal{F}(f)(k) : = \frac{1}{(2\pi)^{d/2}} \int_{\R^d} e^{- i k \cdot x} f(x)\, \diff x\, . 
\end{equation}
If both ${\chi}/{\omega}$ and ${\chi}/{\sqrt{\omega}}$ are in $L^2_k(\R^3)$ then the following explicit expression holds
\begin{equation}
   V_{\varepsilon}(x) = \meanlr{\Psi_{\varepsilon}}{2\mathrm{Re}\:a_{\varepsilon}^{*}(e^{ix\cdot k}\tfrac{\chi}{\sqrt{\omega}})}{\Psi_{\varepsilon}}_{\mathscr{F}_{\eps}}\,,
\end{equation}
so that the expectation of $ \mathbb{H}_{\eps}^{\mathrm{Nel}} - \diff
\Gamma_{\eps}(\omega) $ on the factorized state $ \psi \otimes \Psi_{\eps} $
as in \eqref{eq: expectation} reads
\begin{equation}
  Q_{\eps}[\psi] : = \meanlr{\psi}{\Hnele}{\psi}_{L^2_x},	\qquad	\Hnele=-\Delta+U+V_{\varepsilon}.
\end{equation}
Finally, given $\mu\in \mathscr{P}(\mathfrak{h}^{\mathrm{Nel}}_{\omega})$, we will also see in \cref{sec:Nelson} that the following quadratic form on the Schwartz space $\mathcal{S}(\mathbb{R}^3)$ defines a function $V_\mu$ belonging to some Lorentz space
\begin{equation}
  \label{eq:def-Vmu}
\int_{\mathbb{R}^3} V_\mu |\psi|^2 := \int_{\mathfrak{h}_\omega} 2\Re  \Big\langle  \sqrt{\omega}\,z\Big|  (2\pi)^{3/2} \overline{\mathcal{F}(|\psi|^2)} \frac{\chi}{\omega} \Big\rangle_{L^2_k} \,\mathrm{d}\mu(z) \,,\quad \forall\psi \in \mathcal{S}(\mathbb{R}^3)\,.
\end{equation}
If ${\chi}/{\omega}$ is in $L^2_k(\R^3)$ then the following explicit expression holds
\begin{equation}
  V_{\mu}(x)= 2 \mathrm{Re}\,\int_{\mathfrak{h}_{\omega}} \left\langle \omega^{1/2}z \right \vert\left . \tfrac{\chi}{\omega}e^{ik\cdot x} \right\rangle_{L^2_k} \, \D  \mu(z) \, .
\end{equation}

Our main result for the Nelson model is the following. Let us preliminarily
introduce a slightly modified notion of $\Gamma$-convergence, adapted to our
needs.
\begin{defn}
Given $\mathcal{E}_n$, with $n\in\mathbb{N}\cup \{\infty\}$, functionals from  $\mathfrak{h}^{\sharp}$ to $\mathbb{R}$ defined on a suitable common dense domain $\mathscr{Q}\subset\mathfrak{h}^{\sharp}$ we write
$$\mathcal{E}_{n}[\cdot ]\xrightarrow[n\to \infty]{\Gamma} \mathcal{E}_\infty[\cdot ]\;,$$
if and only if the following two statements hold:
\begin{itemize}
\item {$\mathrm{[}\Gamma-\limsup\mathrm{]}$} For any $\psi\in \mathscr{Q}$, there exists one sequence
  $\{\psi_n\}_{n\in \mathbb{N}} \subset \mathscr{Q}$ such that $\psi_n\xrightarrow[n\to \infty]{\mathfrak{h}^{\sharp}}\psi$,
  and
  \begin{equation*}
    \limsup_{n\to \infty}\mathcal{E}_n[\psi_n]\leq \mathcal{E}_\infty[\psi]\;;
  \end{equation*}
\item {$\mathrm{[}\Gamma-\liminf\mathrm{]}$} For any sequence $\{\psi_n\}_{n\in \mathbb{N}}\subset \mathscr{Q}$, such
  that $\psi_n\xrightarrow[n\to \infty]{\mathrm{w-}\mathfrak{h}^{\sharp}}\psi$, $\psi\in \mathscr{Q}$,
  \begin{equation*}
    \liminf_{n\to \infty}\mathcal{E}_n[\psi_n]\geq  \mathcal{E}_\infty[\psi]\;.
  \end{equation*}
\end{itemize}

\end{defn}
The convergence~$\mathcal{E}_{n}[\cdot ]\xrightarrow[n\to \infty]{\Gamma} \mathcal{E}_\infty[\cdot ]$ holds if and only if $\mathcal{E}_{n}$ $\Gamma$-converges to~$\mathcal{E}_\infty$ both in the weak and strong topologies of~$\mathfrak{h}^{\sharp}$.

Let us recall that with this definition, the $\Gamma$-convergence of quadratic
forms bounded from below is equivalent to strong resolvent convergence of the
associated self-adjoint operators \citep[][Theorem
13.6]{dalmaso1993pnde%precise theorem
}. If, in addition, the operators have compact resolvent, then the
$\Gamma$-convergence is equivalent to norm resolvent convergence.

\begin{thm}[Convergence of $ H^{\mathrm{Nel}}_{\varepsilon}$]
  \label{thm:Quasiclassical-limit-Nelson}
  \mbox{}		\\
  Suppose that \cref{hyp:A-omega,hyp:A-chi,hyp:A-U,hyp:A-Nel} hold. For any $\mu\in
  \mathscr{M}^{\mathrm{Nel}}(\Psi_\varepsilon)$ and for any sequence
  $\{\varepsilon_n\}_{n\in \mathbb{N}}$, $\varepsilon_n\to 0$, such that
  $\Psi_{\eps_n} \xrightarrow[n \to+\infty]{\mathrm{sc}} \mu$, then, \beq
  H^{\mathrm{Nel}}_{\varepsilon_n}=-\Delta+U+V_{\varepsilon_n}\qquad\text{and}\qquad
  \Hnelm=-\Delta+U+V_{\mu} \eeq define symmetric closed quadratic forms with
  form
  domain $$\mathcal{Q}:=\mathcal{Q}(H^{\mathrm{Nel}}_{\varepsilon_n})=\mathcal{Q}(\Hnelm)=H^{1}(\R^3)
  \cap L^{2}(\R^3, U_{+}\, \D x)$$ and hence define self-adjoint operators on
  domains $\mathcal{D}(H^{\mathrm{Nel}}_{\varepsilon_n}),$ $\mathcal{D}(\Hnelm)\subseteq
  \mathcal{Q} $, respectively. Moreover, as quadratic forms, \beq
  \meanlr{\varphi}{H^{\mathrm{Nel}}_{\varepsilon_n}}{\varphi} \xrightarrow[n\to
  \infty]{\Gamma} \meanlr{\varphi}{\Hnelm}{\varphi}\,.  \eeq Consequently,
  $H^{\mathrm{Nel}}_{\varepsilon_n}$ converges to $\Hnelm$ in strong resolvent sense.  If
  in addition $U$ is confining, i.e. $U_{+}(x)\to\infty$ as $|x|\to\infty$,
  then $H^{\mathrm{Nel}}_{\varepsilon_n}$ converges to $\Hnelm$ in norm resolvent sense.
\end{thm}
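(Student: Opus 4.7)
The plan is to reduce the theorem to three ingredients: (i) the functions $V_{\varepsilon_n}$ and $V_{\mu}$ both lie in $L^{3,\infty}(\R^3)$ with norms uniformly controlled by $\|\chi/\omega\|_{L^{3,\infty}}$ and the Fock-norm bound of \cref{hyp:A-Nel}; (ii) any $V\in L^{3,\infty}(\R^3)$ is a KLMN form perturbation of $-\Delta+U$ on $\mathcal{Q}=H^1(\R^3)\cap L^2(\R^3,U_+\D x)$ with arbitrarily small relative bound, so that both forms are closed and lower semibounded on $\mathcal{Q}$ and define self-adjoint operators via KLMN; (iii) the required $\Gamma$-convergence, which after subtraction of the common $-\Delta+U$ reduces to $\int V_{\varepsilon_n}|\psi_n|^2\to \int V_\mu|\psi|^2$.

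To carry out (i)--(ii), I interpret the distributional pairing \eqref{eq:def-Vepsilon} via Plancherel, combine the pointwise Cauchy--Schwarz bound $|\langle a_{\varepsilon}^{*}(f)\rangle_{\Psi_\varepsilon}|\leq \|f/\sqrt{\omega}\|_{L^2}\langle\D\Gamma_\varepsilon(\omega)\rangle_{\Psi_\varepsilon}^{1/2}$ with H\"older's inequality in Lorentz spaces and the sharp Sobolev embedding $\dot H^1(\R^3)\hookrightarrow L^{6,2}(\R^3)$, which through the product rule $L^{6,2}\cdot L^2\hookrightarrow L^{3/2,1}$ gives $\||\psi|^2\|_{L^{3/2,1}}\lesssim\|\nabla\psi\|_{2}\|\psi\|_{2}$, to deduce
\begin{equation*}
\Bigl| \int_{\R^3} V_{\varepsilon}|\psi|^2 \Bigr| \lesssim \|\chi/\omega\|_{L^{3,\infty}}\,\langle\D\Gamma_\varepsilon(\omega)\rangle_{\Psi_\varepsilon}^{1/2}\,\|\nabla\psi\|_{2}\,\|\psi\|_{2}.
\end{equation*}
The same reasoning applied to \eqref{eq:def-Vmu}, with $\langle\D\Gamma_\varepsilon(\omega)\rangle_{\Psi_\varepsilon}$ replaced by the first moment $\int\|\omega^{1/2}z\|_{\mathfrak{h}}^{2}\D\mu(z)\leq C$ guaranteed by \cref{hyp:A-Nel} through the semiclassical limit, yields the identical bound for $V_\mu$. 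Young's inequality then upgrades it into $\delta\|\nabla\psi\|_{2}^{2}+C_\delta\|\psi\|_{2}^{2}$ with arbitrary $\delta>0$, so that both $V_{\varepsilon_n}$ and $V_\mu$ are KLMN-small relative to $-\Delta+U$; KLMN closes the forms on $\mathcal{Q}$ and produces the self-adjoint operators $\Hnele$ and $\Hnelm$.

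For the $\Gamma$-$\limsup$ I use the constant sequence $\psi_n\equiv\psi$, first for $\psi\in\mathcal{S}(\R^3)$ so that $h_\psi:=(2\pi)^{3/2}\overline{\mathcal{F}(|\psi|^2)}\,\chi/\sqrt{\omega}$ lies in $\mathfrak{h}\cap\mathfrak{h}_{\omega^{-1}}$ by rapid decay of $\mathcal{F}(|\psi|^2)$, and invoke the semiclassical limit for $\langle a_{\varepsilon_n}^{*}(h_\psi)\rangle_{\Psi_{\varepsilon_n}}$ recalled after \cref{hyp:A-PF} to get $\int V_{\varepsilon_n}|\psi|^2\to\int V_\mu|\psi|^2$; the uniform Lorentz bound together with density of $\mathcal{S}$ in $\mathcal{Q}$ in the form norm then yields a diagonal recovery sequence for arbitrary $\psi\in\mathcal{Q}$. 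For the $\Gamma$-$\liminf$ I take $\psi_n\rightharpoonup\psi$ weakly in $L^2$ with $\liminf_n Q_{\varepsilon_n}[\psi_n]<\infty$: the KLMN bound forces a subsequence bounded in $H^1\cap L^2(U_+\D x)$, whence strong convergence in $L^2_{\mathrm{loc}}$ by Rellich and identification of the weak $H^1$ limit with $\psi$. Weak lower semicontinuity of $\|\nabla\cdot\|_{2}^{2}$ and Fatou for $\int U_+|\cdot|^2$ handle the kinetic and potential parts, while the interaction contribution is treated by the splitting
\begin{equation*}
\int V_{\varepsilon_n}|\psi_n|^2 - \int V_\mu|\psi|^2 = \int V_{\varepsilon_n}\bigl(|\psi_n|^2-|\psi|^2\bigr) + \int\bigl(V_{\varepsilon_n}-V_\mu\bigr)|\psi|^2,
\end{equation*}
whose first summand vanishes by combining the uniform $L^{3,\infty}$ bound on $V_{\varepsilon_n}$ with strong $L^1_{\mathrm{loc}}$ convergence of $|\psi_n|^2$ and a tightness bound at infinity itself coming from the Lorentz estimate, while the second summand tends to zero by the $\Gamma$-$\limsup$ argument applied at the fixed element $\psi$. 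The abstract version of this step is deferred to \cref{sec:Gamma-conv}.

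Strong resolvent convergence of $\Hnele$ to $\Hnelm$ is then automatic from $\Gamma$-convergence of uniformly lower-semibounded quadratic forms by \cite[Theorem~13.6]{dalmaso1993pnde}. When $U$ is confining, $-\Delta+U$ has compact resolvent, a property preserved under KLMN-small perturbations, so $\Hnele$ and $\Hnelm$ also have compact resolvents and the $\Gamma$-convergence upgrades to norm resolvent convergence by the same reference. The main technical obstacle in this programme is the very first step: establishing the $L^{3,\infty}$ bound on $V_{\varepsilon_n}$ and $V_\mu$ under only the hypothesis $\chi/\omega\in L^{3,\infty}$, \emph{without} the stronger condition $\chi/\sqrt{\omega}\in L^2$ that would make the explicit formulas \eqref{eq:def-Vepsilon}--\eqref{eq:def-Vmu} directly meaningful; this is precisely where the Lorentz-space machinery of \cite{BFP1,BFP2} enters the argument.
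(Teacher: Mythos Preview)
Your treatment of the form bounds and the $\Gamma$-$\limsup$ is essentially the paper's argument: the Lorentz estimate $\lvert\langle\psi|V_{\varepsilon}|\varphi\rangle\rvert\lesssim\|\mathcal{F}(\bar\psi\varphi)\|_{L^{6,2}}$ combined with \cref{lem:control-of-FPsiPhi-by-Sobolev} yields the KLMN smallness, and the constant recovery sequence works (the paper applies \cref{prop:Convergence-V} directly for every $\psi\in\mathcal{Q}$, so your Schwartz-plus-density detour is unnecessary, but harmless).

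The gap is in your $\Gamma$-$\liminf$. You assert that $\int V_{\varepsilon_n}\bigl(|\psi_n|^2-|\psi|^2\bigr)\to 0$ from the uniform $L^{3,\infty}$ bound, local $L^1$ convergence of $|\psi_n|^2$, and an unspecified ``tightness bound at infinity from the Lorentz estimate''. This convergence is \emph{false} in general: take $V_n(x)=|x-ne_1|^{-1}$ (uniformly in $L^{3,\infty}$, weakly convergent to $0$) and $\psi_n=\phi(\cdot-ne_1)$ with $\phi\in C_0^\infty$; then $\psi_n\rightharpoonup 0$ in $H^1$ but $\int V_n|\psi_n|^2=\int|\phi|^2/|y|\,\D y$ is a nonzero constant. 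No Lorentz estimate produces tightness of $|\psi_n|^2$ at infinity without a confining potential. The same issue afflicts $-\int U_-|\psi_n|^2$, which you do not address at all: Fatou goes the wrong way for this negative term.

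The paper's \cref{prop:2} circumvents this by a genuinely different device. Via \cref{lemma:1} one writes, for the nonnegative form $Q_n:=\langle\,\cdot\,|H_n+\lambda|\,\cdot\,\rangle$,
\[
Q_n[\psi_n]=\sup_{\phi\in\mathcal{Q}}\Re\,Q_n[\phi,2\psi_n-\phi]\ \geq\ \Re\,Q_n[\psi_\kappa,2\psi_n-\psi_\kappa]
\]
for a fixed $\psi_\kappa\in C_0^\infty$ approximating $\psi$. The point is that the right-hand side is now \emph{linear} in $\psi_n$ with the other leg compactly supported, so weak $H^1$ convergence plus Rellich (applied only on $\mathrm{supp}\,\psi_\kappa$) suffice to pass to the limit in every term; one then lets $\kappa\to 0$. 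This replaces your unavailable global compactness by a localization built into the variational identity, and it simultaneously handles $U_-$ and $V_n$ without separating them.
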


\begin{rem}[Ultraviolet renormalization]
  \label{rem:renormalization}
  \mbox{}	\\
  In the work \citep{correggi2024inprep} it is shown that the ultraviolet
  renormalization of the Nelson model commutes with the quasi-classical
  limit. However, the former calls for the extraction of an infinite particle
  self-energy, and the introduction of a suitable dressing transformation
  that modifies substantially the properties of the microscopic Hamiltonian
  as well as of its quasi-classical limit. In this framework, the above
  result shows that, on product states and at the
  level of the quadratic form, such a renormalization procedure is actually
  not needed. Indeed, one can easily figure out that
  \cref{thm:Quasiclassical-limit-Nelson} entails that (see \cref{hyp:A-chi}
  and discussion thereafter), if one imposes an ultraviolet cut-off $ \chi_{\Lambda} $
  such that $ \chi_{ \Lambda} \to 1 $, as $ \Lambda \to + \infty $, then, at the level of the reduced
  quadratic form, the limits $ \eps \to 0 $ and $ \Lambda \to +\infty $ yield the same
  result irrespective of the order in which they are taken. Furthermore, if
  the cut-off parameter $ \Lambda = \Lambda(\eps) $ depends on $ \eps $ and $ \Lambda(\eps) \to +
  \infty $ as $ \eps \to 0 $, i.e., the ultraviolet renormalization is performed at
  the same time as the quasi-classical limit, the rate of the former does not
  matter.
\end{rem}

\begin{rem}[Wigner measures]
  \label{rem:1}
  \mbox{}	\\
  As discussed above, the set of Wigner measures of a generic family of
  states $\{\Psi_{\varepsilon}\}_{\varepsilon\in (0,1)}$ might contain more
  than a single point. In that case, the family of operators
  $H^{\mathrm{Nel}}_{\varepsilon_n}$ depends on the choice of sequence
  $\{\varepsilon_n\}_{n\in \mathbb{N}}$ determining the limit point
  $\mu$. However, if the set of Wigner measures consists of a single point,
  then the $\Gamma$-convergence holds as $\varepsilon\to 0$.
\end{rem}

\begin{rem}[Pseudo-relativistic kinetic energy]
  \label{rem:pseudo}
  \mbox{}	\\
  A close inspection of the proof shows that the result easily extends to
  Nelson-type Hamiltonians with pseudo-relativistic kinetic energy, i.e., for
  $ H_0=\sqrt{ - \Delta + \nu^2} +U$, $ \nu \geq 0 $, up to a straightforward modification of \cref{hyp:A-U}.
\end{rem}

% A straightforward consequence of \cref{thm:Quasiclassical-limit-Nelson} above
% is that the reduced dynamics generated by $ H_{\eps}^{\mathrm{Nel}} $
% conveges as $ \eps \to 0 $ up to subsequences to $e^{- i H^{\mathrm{Nel}}_\mu t}
% $ in strong operator sense.

\subsection{Pauli-Fierz model}

We next consider a second, physically more relevant setting, where the spin
of the non-relativistic particle is taken into account, the radiation is
described by the quantized electromagnetic field in the Coulomb gauge and the
particle-field interaction is given by Pauli coupling. For the sake of simplicity we set $ s = \frac{1}{2} $ but the extension to different values of the spin is straightforward. The space states is
then
\begin{equation}
  \mathscr{H}^{\mathrm{PF}} = L^2(\mathbb{R}^3;\mathbb{C}^2)\otimes \Gamma_{\eps}(\mathfrak{h}),	\qquad	\mathfrak{h}^{\mathrm{PF}} = L^2(\mathbb{R}^3;\mathbb{C}^2), 
\end{equation}
and the Hamiltonian is given by
\begin{equation}
  \label{eq: Ham PF}
  \mathbb{H}^{\mathrm{PF}}_{\varepsilon}=\big(\sigmav\cdot(\vec{P}-\vec{\mathbb{A}}_{\varepsilon}(x))\big)^{2}+U(x) + \mathrm{d}\Gamma_\varepsilon(\omega),
\end{equation}
where the vector of Pauli matrices $\sigmav=(\sigma_1,\sigma_2,\sigma_3)$ is
\begin{equation*}
  \sigma_{1}=\begin{pmatrix}0 & 1\\
    1 & 0
  \end{pmatrix}\,,\;\sigma_{2}=\begin{pmatrix}0 & -i\\
    i & 0
  \end{pmatrix}\,,\;\sigma_{3}=\begin{pmatrix}1 & 0\\
    0 & -1
  \end{pmatrix}\,.
\end{equation*}
The vector potential of the quantized electromagnetic field is of the form
\begin{equation*}
  \vec{\mathbb{A}}_{\varepsilon}(x)=\sum_{\lambda=1,2}\int_{\mathbb{R}^3} \frac{\chi(k)}{\omega^{\frac12}(k)} \vec{e}_{\lambda}(k) \lf( e^{ix\cdot k}a^\dagger_{\lambda,\varepsilon}(k)+e^{-ix\cdot k}a_{\lambda,\varepsilon}(k) \ri)\mathrm{d}k,
\end{equation*}
with $(\vec{e}_1(k),\vec{e}_2(k))$ polarization vectors, such that
$(\vec{e}_1(k),\vec{e}_2(k),k/|k|)$ forms an orthonormal basis of
$\mathbb{R}^3$ for all $k\neq0$. The creation and annihilation operators
$a^{\dagger}_{\lambda,\varepsilon}$, $a_{\lambda,\varepsilon}$ are now labelled by the
polarization directions $ \lambda, \lambda' \in \lf\{1,2\ri\} $ and satisfy
the canonical commutation relations
\begin{equation*} \lf[ a^\sharp_{\lambda,\varepsilon}(k) ,
  a^\sharp_{\lambda',\varepsilon}(k') \ri] = 0 , \qquad \lf[
  a_{\lambda,\varepsilon}(k) , a^\dagger_{\lambda',\varepsilon}(k') \ri] =
  \varepsilon \delta_{\lambda\lambda'}\delta( k - k' ),
\end{equation*}
for $k,k'\in\mathbb{R}^3$ and $\lambda,\lambda'\in\{1,2\}$. To shorten
notations, we will sometimes write \bdm
\vec{\mathbb{A}}_{\varepsilon}(x)=a^*_\varepsilon(\vec{w}_x)+
a_\varepsilon(\vec{w}_x), \edm with
\begin{equation}\label{eq:defwx}
  \vec{w}_x(k,\lambda):=\frac{\chi(k)}{\omega^{\frac12}(k)}\vec{e}_{\lambda}(k)e^{ix\cdot k}.
\end{equation}

The Hamiltonian \eqref{eq: Ham PF} is not normal ordered, which implies that
its vacuum energy may diverge in absence of an ultraviolet cut-off. For this
reason, we actually work with its Wick-ordered counterpart \bml{
  :\mathbb{H}_\varepsilon: \, = (\sigmav\cdot\vec{P})^{2}-(\sigmav\cdot\vec{P})(\sigmav\cdot\vec{\mathbb{A}}_{\varepsilon}(x))-(\sigmav\cdot\vec{\mathbb{A}}_{\varepsilon}(x))(\sigmav\cdot\vec{P})\\
  +a^*_\varepsilon(\vec{w}_x)a^*_\varepsilon(\vec{w}_x)+a_\varepsilon(\vec{w}_x)a_\varepsilon(\vec{w}_x)+2a^*_\varepsilon(\vec{w}_x)a_\varepsilon(\vec{w}_x)+
  U(x).  } Note that $:\mathbb{H}_\varepsilon:$ and $\mathbb{H}_\varepsilon$
only differ by an $\varepsilon$-dependent constant,
\begin{equation*}
  :\mathbb{H}_\varepsilon: \, = \mathbb{H}_\varepsilon - \lf[a_\varepsilon(\vec{w}_x),a^*_\varepsilon(\vec{w}_x) \ri] = \mathbb{H}_\varepsilon - 2\varepsilon \lf\|\omega^{-1/2}\chi \ri\|^2_{L^2_k},
\end{equation*}
which formally vanishes in the limit $ \eps \to 0
$. %In the sequel, to simplify notations, we write $\mathbb{H}_\varepsilon$ for the Wick-ordered Hamiltonian.

% In the case of the PF model we have to slightly strengthen \cref{hyp:A-chi}
% and replace it with the following
%
% \renewcommand{\theassumption}{$(\mathrm{A}^{\mathrm{PF}}_{\Psi})$}
% \begin{assumption}
%   \label{hyp:A-PF}
%   The family $(\Psi_{\varepsilon})_{\varepsilon}
%   \subset\Gamma_s(L^2(\mathbb{R}^3;\mathbb{C}^2))$ is such that the family
%   $\langle\Psi_{\varepsilon},(1+\D \Gamma_{\varepsilon}(\omega) + \D
%   \Gamma_{\varepsilon}^{(2)}(\omega\otimes\omega))\,\Psi_{\varepsilon}\rangle$
%   is uniformly bounded in $ \varepsilon $, where $\D
%   \Gamma_{\varepsilon}^{(2)}(\omega\otimes\omega)=\D
%   \Gamma_{\varepsilon}(\omega)^{2}-\varepsilon \,\D
%   \Gamma_{\varepsilon}(\omega^{2})$.
% \end{assumption}

The counterparts of the potential $ V_{\eps} $ defined in \eqref{eq:def-Vepsilon}  for the PF model are a vector potential~$ \vec{A}_\varepsilon$ along with a potential $W_\varepsilon$ defined through the following quadratic form on the Schwartz space $\mathcal{S}(\mathbb{R}^3)$. For $ \vec{A}_\varepsilon$, with $\vec{w}_x$ is defined in \eqref{eq:defwx}:
\begin{equation}\label{eq:def-Aepsilon}
\int_{\mathbb{R}^3}|\psi|^2 \vec{A}_\varepsilon := \int_{\mathbb{R}^6} 2\Re \Big( |\psi(x)|^2 \ \vec{w}_x(k) \ \langle \Psi_\varepsilon | a_\varepsilon^*(k)\Psi_\varepsilon \rangle_{\mathscr{F}_\varepsilon} \Big)\,\mathrm{d}x \, \mathrm{d}k\,,\quad \forall\psi \in \mathcal{S}(\mathbb{R}^3)\,.
\end{equation}
If both ${\chi}/{\omega}$ and ${\chi}/{\sqrt{\omega}}$ are in $L^2_k(\R^3)$ then the following explicit expression holds
\beq 
   \vec{A}_{\varepsilon}(x) := \braketr{ \Psi_{\varepsilon}}{\vec{\mathbb{A}}_{\varepsilon}(x)\: \Psi_{\varepsilon}}_{\mathscr{F}_{\eps}}
  =
  \braketr{\Psi_{\varepsilon}}{\lf(a^\dagger_\varepsilon(\vec{w}_x)+a_\varepsilon(\vec{w}_x)\ri)\:\Psi_{\varepsilon}}_{\mathscr{F}_{\eps}}\,.
\eeq
For $W_\varepsilon$:
\begin{multline}\label{eq:def-Wepsilon}
\int_{\mathbb{R}^3}|\psi|^2 \ W_\varepsilon := \int_{\mathbb{R}^9}   |\psi(x)|^2 \ 2\Re \Big( \vec{w}_x(k)\cdot\vec{w}_x(k') \ \langle a_\varepsilon(k)a_\varepsilon(k') \Psi_\varepsilon | \Psi_\varepsilon \rangle_{\mathscr{F}_\varepsilon} \\
+ \vec{w}_x(k)\cdot \overline{\vec{w}_x(k')}\langle a_\varepsilon(k)\Psi_\varepsilon | a_\varepsilon(k')\Psi_\varepsilon \rangle_{\mathscr{F}_\varepsilon} \Big)\,\mathrm{d}x \, \mathrm{d}k \, \mathrm{d}k'\,,\quad \forall\psi \in \mathcal{S}(\mathbb{R}^3)\,.
\end{multline}
If both ${\chi}/{\omega}$ and ${\chi}/{\sqrt{\omega}}$ are in $L^2_k(\R^3)$ then the following explicit expression holds
\begin{multline}
   W_{\varepsilon}(x)  := \braketr{ \Psi_{\varepsilon}}{ :\big(\vec{\mathbb{A}}_{\varepsilon}(x)\big)^{2}: \Psi_{\varepsilon}}_{\mathscr{F}_{\eps}} \\
  = \braketr{ \Psi_{\varepsilon}}{\lf( a^\dagger_\varepsilon(\vec{w}_x) a^\dagger_\varepsilon(\vec{w}_x) + a_\varepsilon(\vec{w}_x)a_\varepsilon(\vec{w}_x) + 2 a^\dagger_\varepsilon(\vec{w}_x)a_\varepsilon(\vec{w}_x)\ri)  \Psi_{\varepsilon}}_{\mathscr{F}_{\eps}}\,.
\end{multline}
The reduced Hamiltonian has
then the following expression: \beq H^{\mathrm{PF}}_{\varepsilon}
=(\sigmav\cdot\vec{P})^{2}-(\sigmav\cdot\vec{P})(\sigmav\cdot\vec{A}_{\varepsilon}(x))-(\sigmav\cdot\vec{A}_{\varepsilon}(x))(\sigmav\cdot\vec{P})+W_{\varepsilon}(x)+U(x).
\eeq Its quasi-classical counterpart is identified by a measure $ \mu \in
\mathscr{M}^{\mathrm{PF}}(\Psi_\varepsilon)$ in the set of Wigner measures
associated to $ \lf\{ \Psi_{\eps} \ri\}_{\eps \in (0,1)} $ and such that, up to the extraction
of a subsequence, \bdm \Psi_{\eps_n} \xrightarrow[n \to+\infty]{\mathrm{sc}}
\mu, \edm and reads
\begin{equation}\label{eq:def-Hmu-PF}
  H^{\mathrm{PF}}_{\mu}:=(\sigmav\cdot\vec{P})^{2}-(\sigmav\cdot\vec{P})(\sigmav\cdot\vec{A}_{\mu}(x))-(\sigmav\cdot\vec{A}_{\mu}(x))(\sigmav\cdot\vec{P})+W_{\mu}(x)+U(x),
\end{equation}
where the multiplication operators by $\vec{A}_{\mu}(x)$ and $W_\mu(x)$ are defined through the following quadratic forms on $\mathcal{S}(\mathbb{R}^3)$ (see \cref{sec:Pauli-Fierz}):
\begin{equation}\label{eq:def-Amu}
\int_{\mathbb{R}^3} \mathbf{A}_\mu \ |\psi|^2 := \int_{\mathfrak{h}^{\mathrm{PF}}} 2\mathrm{Re}\Big( \int_{\mathbb{R}^6} \overline{z(k)} \ \mathbf{w}_x(k) \ |\psi(x)|^2  \,\mathrm{d}x \,\mathrm{d}k \Big)\mathrm{d}\mu(z)
\end{equation}
and
\begin{equation}\label{eq:def-Wmu}
  \int_{\mathbb{R}^3} W_{\mu}\ |\psi|^2 := \int_{\mathfrak{h}^{\mathrm{PF}}_{\omega}}  \int_{\mathbb{R}^9}  4\mathrm{Re}(\overline{z(k_1)} \ \vec{w}_{x}(k_1)) \mathrm{Re}(\overline{z(k_2)} \ \vec{w}_{x}(k_2))  |\psi(x)|^2\ \mathrm{d}x \ \mathrm{d}k_1\ \mathrm{d}k_2  \ \diff\mu(z)\, .
\end{equation}
with $\psi\in\mathcal{S}(\mathbb{R}^3)$.
In the simple case that ${\chi}/{\sqrt{\omega}}$ is in $L^2_k(\R^3)$ then the following explicit expressions hold
\begin{equation}
  \vec{A}_{\mu}(x):=\int_{\mathfrak{h}_{\omega}^{\mathrm{PF}}}2\mathrm{Re}\, \braketr{ \omega^{1/2} z}{\omega^{-1/2}\vec{w}_{x}}_{L^2_k} \diff\mu(z)\, ,
\end{equation}
\begin{equation}
  W_{\mu}(x):=\int_{\mathfrak{h}^{\mathrm{PF}}_{\omega}} \lf( 2\mathrm{Re}\, \braketr{ \omega^{1/2}z}{\omega^{-1/2}\vec{w}_{x}}_{L^2_k} \ri)^{2} \diff\mu(z)\, .
\end{equation}

To prove our result on the Pauli-Fierz model, we need to slightly strengthen \cref{hyp:A-chi}:
\renewcommand{\theassumption}{$(\mathrm{A}'_{\chi})$}
\begin{assumption}
  \label{hyp:A'-chi}
  The function $\chi:\mathbb{R}^{3}\to\mathbb{R}$ is such that $|k|\chi/\omega
  \in L^{\infty}(\R^3)$.
\end{assumption}
Similarly as for \cref{hyp:A-chi}, the function $\chi\equiv 1$ satisfies \cref{hyp:A'-chi} in the physically relevant cases $
\omega(k) = \sqrt{k^2+m^2} $ and $ \omega(k) = \lvert k  \rvert_{}^{} $.

Our main result is then:
\begin{thm}[Convergence of $H^{\mathrm{PF}}_{\varepsilon}$]
  \label{thm:Quasi-classical-limit-Pauli-Fierz}
  \mbox{}	\\
  Suppose that \cref{hyp:A-omega,hyp:A'-chi,hyp:A-U,hyp:A-PF} hold. Then, for any
  $\mu\in \mathscr{M}^{\mathrm{PF}}(\Psi_\varepsilon)$ and for any sequence $\{\varepsilon_n\}_{n\in \mathbb{N}}$,
  $\varepsilon_n\to 0$, such that $\Psi_{\eps_n} \xrightarrow[n \to+\infty]{\mathrm{sc}} \mu$,   $H^{\mathrm{PF}}_{\varepsilon_n}$ and $\Hpfm$ define symmetric
  closed quadratic forms with form domain 
  \beq
  	\mathcal{Q}= H^{1}(\R^3) \cap L^{2}(\R^3,
  U_{+}\, \D x; \C^2),
	\eeq
	 and hence define self-adjoint operators on domains
  $\mathcal{D}(H^{\mathrm{PF}}_n),$ $\mathcal{D}(\Hpfm)\subseteq \mathcal{Q} $, respectively. Moreover, as
  quadratic forms,
  \beq
    \meanlr{\varphi}{H^{\mathrm{PF}}_{\varepsilon_n}}{ \varphi} \xrightarrow[n\to
    \infty]{\Gamma} \meanlr{\varphi}{\Hpfm}{\varphi}\,.
  \eeq
  Consequently, $H^{\mathrm{PF}}_{\varepsilon_n}$ converges to $\Hpfm$ in strong resolvent
  sense. If in addition $U$ is confining, i.e. $U_{+}(x)\to\infty$ as
  $|x|\to\infty$, then $H^{\mathrm{PF}}_{\varepsilon_n}$ converges to $\Hpfm$ in norm
  resolvent sense.
\end{thm}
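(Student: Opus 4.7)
The plan is to parallel the strategy used for the Nelson model in \cref{sec:Nelson} and reduce the analysis to uniform estimates on the effective potentials $\vec{A}_{\eps}$, $W_{\eps}$ and on their classical counterparts $\vec{A}_{\mu}$, $W_{\mu}$, performed in appropriate (weak) Lorentz spaces. First, starting from the defining quadratic forms \eqref{eq:def-Aepsilon}--\eqref{eq:def-Wepsilon} and their $\mu$-analogues \eqref{eq:def-Amu}--\eqref{eq:def-Wmu}, I would use the Lorentz--Young-type inequalities of \cref{sec:preliminaries}, together with the field-state control provided by \cref{hyp:A-PF} and the cut-off assumption \cref{hyp:A'-chi}, to establish
\[
  \lf\| \vec{A}_{\eps} \ri\|_{L^{3,\infty}} + \lf\| W_{\eps} \ri\|_{L^{3/2,\infty}} + \lf\| \nabla\times\vec{A}_{\eps} \ri\|_{L^{3/2,\infty}} \leq C,
\]
uniformly in $\eps$, the curl bound requiring precisely the extra decay $|k|\chi/\omega\in L^{\infty}$ of \cref{hyp:A'-chi} to absorb the $|k|$ factor produced by the curl symbol.

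Via the Sobolev--Lorentz embedding $H^{1}(\R^{3})\hookrightarrow L^{6,2}(\R^{3})$ and Hölder's inequality in Lorentz spaces, the estimates above translate into infinitesimal $-\Delta$-relative form-boundedness of each of the three interaction contributions $\mean{\psi}{\vec{A}_{\eps}\cdot\vec{P}}{\psi}$, $\mean{\psi}{\sigmav\cdot(\nabla\times\vec{A}_{\eps})}{\psi}$, and $\mean{\psi}{W_{\eps}}{\psi}$. Combined with \cref{hyp:A-U} and the KLMN theorem, this yields closedness, symmetry and lower-boundedness of the quadratic form associated to $\Hpfe$ on $\mathcal{Q}=H^{1}(\R^{3})\cap L^{2}(\R^{3},U_{+}\,\D x;\C^{2})$, identifying $\Hpfe$ as a self-adjoint operator. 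Replacing $\Psi_{\eps}$-expectations by $\mu$-integrals throughout, the same argument applies to $\Hpfm$.

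For the $\Gamma$-convergence I would treat the three interaction terms separately. The $\Gamma$-$\limsup$ inequality along the diagonal sequence $\psi_{n}=\psi\in\mathcal{Q}$ follows from term-by-term convergence of expectations: the linear coupling and the curl term reduce to the one-body semiclassical convergence $\braketr{\Psi_{\eps_n}}{a^{*}_{\eps_{n}}(g)\Psi_{\eps_n}}\to\int\braketr{\omega^{1/2}z}{\omega^{-1/2}g}\,\D\mu(z)$ recalled after \cref{def:muNelson}, while $\mean{\psi}{W_{\eps_{n}}}{\psi}\to\mean{\psi}{W_{\mu}}{\psi}$ requires the analogous two-excitation convergence of $\braketr{\Psi_{\eps}}{a_{\eps}a_{\eps}\Psi_{\eps}}$ and $\braketr{\Psi_{\eps}}{a^{*}_{\eps}a_{\eps}\Psi_{\eps}}$. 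At this stage the strengthened control on $\D\Gamma_{\eps}^{(2)}(\omega\otimes\omega)$ provided by \cref{hyp:A-PF} is decisive, as it furnishes the uniform second-moment bound allowing one to push the Wigner measure through bilinear field observables. The $\Gamma$-$\liminf$ inequality for $\psi_{n}\rightharpoonup\psi$ in $\mathcal{Q}$ is then handled via the abstract $\Gamma$-convergence result of \cref{sec:Gamma-conv}, applied once to the kinetic-plus-potential part (standard weak lower semicontinuity) and once to the field-induced perturbation, after recasting the latter as the pairing of $|\psi_{n}|^{2}$ with the weak-$*$ convergent effective potentials.

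The hard part is precisely the simultaneous control of the quadratic piece $W_{\eps}$ and of the magnetic term $\sigmav\cdot(\nabla\times\vec{A}_{\eps})$ without ultraviolet cut-off: these two contributions carry the full burden of the UV singularity of the Pauli--Fierz Hamiltonian, and it is exactly the combination of Lorentz-space bounds, the strengthened cut-off decay \cref{hyp:A'-chi}, the second-moment field control \cref{hyp:A-PF}, and the normal ordering (which removes the divergent vacuum energy $2\eps\|\omega^{-1/2}\chi\|_{L^{2}_{k}}^{2}$) that keeps the whole construction finite as $\eps\to 0$. Once $\Gamma$-convergence is established, \cite[Theorem~13.6]{dalmaso1993pnde} delivers strong resolvent convergence of $\Hpfe$ to $\Hpfm$; in the confining case the embedding $\mathcal{Q}\hookrightarrow L^{2}$ is compact, so $\Hpfm$ has compact resolvent and strong resolvent convergence automatically upgrades to norm resolvent convergence.
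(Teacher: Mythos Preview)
Your strategy matches the paper's: uniform Lorentz-type bounds on the effective potentials give KLMN form-boundedness and self-adjointness on $\mathcal{Q}$; the $\Gamma$-$\limsup$ follows on the constant sequence via one- and two-excitation semiclassical convergence (\cref{rem:PF qc}); and the $\Gamma$-$\liminf$ is delegated to the abstract \cref{prop:2}.

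Two technical points need correction. First, the Lorentz exponents you claim are not what the hypotheses deliver. From $\chi/\omega\in L^{3,\infty}$, $|k|\chi/\omega\in L^{\infty}$, and the field controls in \cref{hyp:A-PF}, what one actually obtains (\cref{prop:estimate-Aepsilon,prop:estimate-Wepsilon,prop:est Beps}) are operator bounds
\[
\vec{A}_{\eps}\,|\vecp|^{-1/2}\in\mathscr{B}(L^2),\qquad W_{\eps}\,|\vecp|^{-1}\in\mathscr{B}(L^2),\qquad |\vecp|^{-3/4}\vec{B}_{\eps}\,|\vecp|^{-3/4}\in\mathscr{B}(L^2),
\]
corresponding (by duality with \cref{lem:control-of-FPsiPhi-by-Sobolev}) to $\mathcal{F}\vec{A}_{\eps}\in L^{6/5,2}$, $\mathcal{F} W_{\eps}\in L^{3/2,2}$, and $\vec{B}_{\eps}\in L^{2}$, not to $\vec{A}_{\eps}\in L^{3,\infty}$, $W_{\eps}\in L^{3/2,\infty}$, $\vec{B}_{\eps}\in L^{3/2,\infty}$. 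In particular, the curl: combining $|k|\chi/\omega\in L^{\infty}$ with the $L^{2}_{k}$ control of $\sqrt{\omega}\,\lVert a_{\eps}(k)\Psi_{\eps}\rVert$ yields only $\vec{B}_{\eps}\in L^{2}$; no $L^{3/2,\infty}$ bound is available. The correct bounds still give infinitesimal $-\Delta$-form-boundedness, but the precise exponents are what one feeds into \eqref{eq:unif-bounds} when invoking \cref{prop:2} (the paper takes $\delta=3/8$, driven by the $\vec{B}$-estimate).

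Second, your description of the $\Gamma$-$\liminf$ is off. \cref{prop:2} is applied \emph{once} to the full operator $H_{n}=-\Delta+U+V_{n}$ with $V_{n}$ collecting all of $W_{n}+\nabla\cdot\vec{A}_{n}+\sigmav\cdot\vec{B}_{n}$; you cannot treat the kinetic part by weak lower semicontinuity and the perturbation separately, since the perturbation carries no sign. The mechanism inside \cref{prop:2} is the sup-representation of \cref{lemma:1} combined with Rellich--Kondrachov compactness on balls and a commutator estimate, not a direct pairing of $|\psi_{n}|^{2}$ with weak-$*$ limits of the potentials.
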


\begin{rem}[Ultraviolet renormalization]
  \label{rem:renormalization2}
  \mbox{}	\\
  Analogously to \cref{rem:renormalization}, the above
  \cref{thm:Quasi-classical-limit-Pauli-Fierz} implies (see \cref{hyp:A-chi}
  and discussion thereafter) that one can remove the ultraviolet cut-off at
  the level of the energy form and such an operation commutes with the
  quasi-classical limit $ \eps \to 0 $. However, unlike for the Nelson model,
  the rate at which the cut-off is removed does matter: recalling that the
  result above applies to the normal ordered form of the PF Hamiltonian, the
  vacuum energy we are implicitly extracting is \bdm - 2\varepsilon \lf\|\omega^{-1/2}\chi_{\Lambda}
  \ri\|^2_{L^2_k} = - C \eps \Lambda^2 (1 + o_{\Lambda}(1)) \edm in the simple case of a
  sharp cut-off $ \chi_{\Lambda} = \one_{|k| \leq \Lambda} $. Hence, such an energy remains
  bounded as $ \eps \to 0 $, if $ \Lambda \lesssim \frac{1}{\sqrt{\eps}} $, suggesting that
  the ultraviolet renormalization is actually trivial if the cut-off is
  removed slowly enough.
\end{rem}
\begin{rem}[Quantum and classical divergences in non-relativistic electrodynamics]
  \label{rem:qcdivergences}
  \mbox{}     \\
  The choice $\chi \equiv 1$ corresponds to a point distribution of the electric
  charge of the quantum Schr\"{o}dinger particle. On the one hand, at the quantum
  level, a point charge yields an energy unbounded from below due to the
  interaction with the \emph{quantized} field. On the other hand, in
  classical electrodynamics, a point charge yields an energy unbounded from
  below due to the Coulomb singularity of the electrostatic potential. The
  model in between, namely with a quantum point charge and a classical
  electromagnetic field has an energy bounded from below, at least for
  regular enough vector potentials of the electromagnetic
  field. \cref{thm:Quasi-classical-limit-Pauli-Fierz} agrees with such a
  physical picture, since we can derive the Hamiltonian of a point charge in
  a classical electromagnetic field from a fully quantized non-relativistic
  electrodynamics, under suitable assumptions.
\end{rem}

\section{Preliminaries}\label{sec:preliminaries}

\subsection{Functional inequalities in Lorentz spaces}

Our proofs rely on suitable functional inequalities in Lorentz spaces. For
the convenience of the reader we recall basic facts about Lorentz spaces
(see, \emph{e.g.}, \cite{ONeil63,Yap69,Lemarie-Rieusset02,BezLeeNakamuraSawano17} or \cite[1.4.19]{Grafakos08} for more details).
For $1\le p<\infty$ and $1\le
q\leq\infty$, the Lorentz spaces $L^{p,q} :=L^{p,q}(\mathbb{R}^{d})$ are
defined as the set of (equivalence classes of) measurable functions
$f:\mathbb{R}^{d}\to\mathbb{C}$ such that the quasi-norm
\beq
  \lf\|f\ri\|_{L^{p,q}}:=p^{1/q} \lf\| \lf| \{|f|>t\} \ri|^{1/p}\,t \ri\|_{L^{q}((0,\infty),\D
    t/t)}
\eeq
is finite.
For $1\le p<\infty$ and $1\le q_{1}\leq q_{2}\le\infty$, the continuous
embedding $L^{p,q_{1}}\subseteq L^{p,q_{2}}$ holds. Moreover, $L^{p,p}$
identifies with the Lebesgue space~$L^{p}$.
In the sequel, the notation $ \lesssim $ stands for the inequality $ \leq $ up to a multiplicative constant which is independent of the chosen function.

We use the following generalizations of H\"older and Young's inequality in
Lorentz spaces.
For $1\leq p_{1},p_{2}<\infty$, $1\le q_{1},q_{2}\le\infty$, H\"older's
inequality states that
\[
  \lf\|f_{1}f_{2} \ri\|_{L^{p,q}} \lesssim \lf\|f_{1} \ri\|_{L^{p_{1},q_{1}}} \lf\|f_{2} \ri\|_{L^{p_{2},q_{2}}},\qquad\frac{1}{p}=\frac{1}{p_{1}}+\frac{1}{p_{2}}\,,\quad\frac{1}{q}=\frac{1}{q_{1}}+\frac{1}{q_{2}}\,,
\]
whenever the r.h.s.~is finite.
Young's inequality states that, for $1<p,p_{1},p_{2}<\infty$, $1\le
q_{1},q_{2}\le\infty$,
\[
  \lf\|f_{1}*f_{2} \ri\|_{L^{p,q}}\lesssim \lf\|f_{1} \ri\|_{L^{p_{1},q_{1}}} \lf\|f_{2} \ri\|_{L^{p_{2},q_{2}}}\,,\qquad1+\frac{1}{p}=\frac{1}{p_{1}}+\frac{1}{p_{2}}\,,\quad\frac{1}{q}=\frac{1}{q_{1}}+\frac{1}{q_{2}}\,.
\]

Then, we have the following property of the Lorentz norms:

\begin{lem}
  \label{lem:control-of-FPsiPhi-by-Sobolev}
  If $\alpha_j\geq0$, $ j \in \{1,2\}$,
  $(\alpha_1+\alpha_2)/d=1/p\in(0,1]$ and $1\leq q\leq \infty$, then, for all
  $\psi_j \in \dot{H}^{\alpha_j}$,
  \begin{equation}
    \lf\|\mathcal{F}(\psi_1\psi_2) \ri\|_{L^{p,q}}\lesssim \lf\|\psi_1 \ri\|_{\dot{H}^{\alpha_1}} \lf\|\psi_2 \ri\|_{\dot{H}^{\alpha_2}}\,.
  \end{equation}
\end{lem}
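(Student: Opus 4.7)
My plan is to reduce the statement to a convolution estimate on the Fourier side and apply the Lorentz-space versions of Hölder's and Young's inequalities already recalled in the paper. Using the convention \eqref{eqn:Fourier}, the identity $\mathcal{F}(\psi_1\psi_2)=(2\pi)^{-d/2}\,\hat\psi_1 * \hat\psi_2$ reduces the problem to bounding $\|\hat\psi_1 * \hat\psi_2\|_{L^{p,q}}$ in terms of $\|\psi_j\|_{\dot{H}^{\alpha_j}}=\||\cdot|^{\alpha_j}\hat\psi_j\|_{L^2}$.

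The key step is to factor $\hat\psi_j(\xi)=|\xi|^{-\alpha_j}g_j(\xi)$ with $g_j:=|\xi|^{\alpha_j}\hat\psi_j\in L^2(\mathbb{R}^d)$ and $\|g_j\|_{L^2}=\|\psi_j\|_{\dot H^{\alpha_j}}$. A direct computation of the distribution function of $|\xi|^{-\alpha_j}$ yields $|\xi|^{-\alpha_j}\in L^{d/\alpha_j,\infty}(\mathbb{R}^d)$ for $\alpha_j\in(0,d)$. Hölder's inequality in Lorentz spaces (using $L^2=L^{2,2}$) then gives
\begin{equation*}
\|\hat\psi_j\|_{L^{p_j,2}}\lesssim\||\xi|^{-\alpha_j}\|_{L^{d/\alpha_j,\infty}}\|g_j\|_{L^2}\lesssim\|\psi_j\|_{\dot H^{\alpha_j}},\qquad \frac{1}{p_j}:=\frac{\alpha_j}{d}+\frac{1}{2},
\end{equation*}
while for $\alpha_j=0$ the same conclusion holds trivially since $\hat\psi_j\in L^2=L^{2,2}$. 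Young's inequality in Lorentz spaces, applied to the convolution of $\hat\psi_1$ and $\hat\psi_2$, then produces
\begin{equation*}
\|\hat\psi_1*\hat\psi_2\|_{L^{p,1}}\lesssim \|\hat\psi_1\|_{L^{p_1,2}}\|\hat\psi_2\|_{L^{p_2,2}},
\end{equation*}
with $1+1/p=1/p_1+1/p_2=(\alpha_1+\alpha_2)/d+1$, which matches the required exponent $1/p=(\alpha_1+\alpha_2)/d$, and with second exponent $1/2+1/2=1$. The continuous embedding $L^{p,1}\hookrightarrow L^{p,q}$ for every $q\in[1,\infty]$ upgrades the bound to arbitrary $q$, as claimed.

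The main technical hurdle is ensuring that the exponents lie in the admissible range for Young's inequality as stated in the paper, namely $1<p,p_1,p_2<\infty$. This translates into $\alpha_j<d/2$ and $\alpha_1+\alpha_2<d$, which comfortably covers the range needed in the applications to the Nelson and Pauli--Fierz models (where $d=3$ and $\alpha_j\le1$). The boundary cases $\alpha_j=d/2$ or $\alpha_1+\alpha_2=d$ are genuinely more delicate and would require either a sharper endpoint version of Young's inequality in Lorentz spaces, or a separate argument combining Sobolev embedding $\dot H^{\alpha_j}\hookrightarrow L^{r_j}$ with $1/r_j=1/2-\alpha_j/d$ and Hausdorff--Young on the product $\psi_1\psi_2$; I expect the treatment of these boundary exponents to be the one point where extra care is required.
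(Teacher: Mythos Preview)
Your proof is correct and follows essentially the same route as the paper's: factor $\hat\psi_j=|k|^{-\alpha_j}\cdot(|k|^{\alpha_j}\hat\psi_j)$, apply H\"older in Lorentz spaces to put $\hat\psi_j$ into $L^{p_j,\cdot}$, then Young on the convolution $\hat\psi_1*\hat\psi_2$. The only cosmetic difference is where the embedding $L^{p,q_1}\hookrightarrow L^{p,q_2}$ for $q_1\le q_2$ is invoked---the paper uses $L^{2,2}\subseteq L^{2,2q}$ on each factor before applying Young with second indices $(2q,2q)$, whereas you apply Young with second indices $(2,2)$ and then embed $L^{p,1}\hookrightarrow L^{p,q}$ on the output; your observation about the endpoint exponents $\alpha_j=d/2$ and $p=1$ lying outside the range of the stated Young inequality applies equally to the paper's argument.
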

\begin{proof}
  Using the Young inequality in Lorentz spaces with $\frac{1}{p_j} =
  \frac{\alpha_j}{d}+\frac{1}{2}$ yields
  \begin{equation}
    \lf\|\mathcal{F}(\psi_1\psi_2)\ri\|_{L^{p,q}}
    \lesssim \lf\|\psi_1\ri\|_{L^{p_1,2q}} \lf\|\psi_2\ri\|_{L^{p_2,2q}}\,
  \end{equation}
  we can then estimate the two terms on the right hand side in the same way,
  using the H\"older inequality in Lorentz spaces, and the continuity of the
  embedding $L^{2,2}\subseteq L^{2,2q}$,
  \begin{equation}
    \lf\|\psi_j\ri\|_{L^{p_1,2q}} \lesssim  \lf\| |k|^{-\alpha_j} \ri\|_{L^{d/\alpha_j,\infty}}  \lf\| |k|^{-\alpha_j} \psi_j \ri\|_{L^{2,2q}} 
    \lesssim   \lf\| |k|^{-\alpha_j} \psi_j \ri\|_{L^{2,2}} = \lf\|\psi_j \ri\|_{\dot{H}^{\alpha_j}}
  \end{equation}
  which achieves the proof.
\end{proof}

From now on we set the space dimension $d$ equal to 3. We state a simple Lorentz space estimate, which is useful both for the Nelson model and
the Pauli-Fierz model:
\begin{lem}
  \label{lem:Bound-psi-phi-chi-over-omega}
  If \cref{hyp:A-omega,hyp:A-chi} hold, and $\psi, \varphi$ are Lebesgue measurable functions such that $\psi, \varphi, \bar{\psi}\varphi  \in  \mathcal{S}'(\mathbb{R}^3)$, and $\mathcal{F}(\bar{\psi}\varphi) \in L^{6,2} $,
  then
  \[
    \lf\|\overline{\mathcal{F}}(\overline{\psi}\varphi)\frac{\chi}{\omega} \ri\|_{L^{2}}\lesssim \lf\|\mathcal{F}(\bar{\psi}\varphi) \ri\|_{L^{6,2}}\,.
  \]
  % and the right hand side can be controlled either as
  % \[
  %   \|\mathcal{F}(\bar{\psi}\varphi)\|_{L^{6,2}}\leq\|\hat{\psi}\|_{L^{12/7,4}}\|\hat{\varphi}\|_{L^{12/7,4}}
  %   \lesssim \|\psi\|_{\dot{H}^{1/4}}\|\varphi\|_{\dot{H}^{1/4}}\,.
  % \]
  % or
  % \[
  %   \|\mathcal{F}(\bar{\psi}\varphi)\|_{L^{6,2}}\lesssim\|\psi\|_{L^{2}}\|\varphi\|_{\dot{H}^{1/2}}\,.
  % \]
\end{lem}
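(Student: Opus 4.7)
The strategy I would use is a one-line application of H\"older's inequality in Lorentz spaces, with the information on $\chi/\omega$ supplied by \cref{hyp:A-chi}. Specifically, I would identify $L^2(\mathbb{R}^3)=L^{2,2}(\mathbb{R}^3)$ and apply H\"older in Lorentz spaces with exponent pair $(p_1,q_1)=(6,2)$ for the Fourier factor and $(p_2,q_2)=(3,\infty)$ for the cut-off factor. These exponents match because
\[
  \tfrac{1}{6}+\tfrac{1}{3}=\tfrac{1}{2},\qquad \tfrac{1}{2}+\tfrac{1}{\infty}=\tfrac{1}{2},
\]
so that
\[
  \lf\|\overline{\mathcal{F}}(\overline{\psi}\varphi)\,\tfrac{\chi}{\omega}\ri\|_{L^{2}}
  \;\lesssim\;
  \lf\|\overline{\mathcal{F}}(\overline{\psi}\varphi)\ri\|_{L^{6,2}}\,
  \lf\|\tfrac{\chi}{\omega}\ri\|_{L^{3,\infty}}.
\]
The second factor is then absorbed into the implicit constant by \cref{hyp:A-chi}, which precisely states that $\chi/\omega\in L^{3,\infty}(\R^3)$.

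To conclude, I would observe that $\overline{\mathcal{F}}$ differs from $\mathcal{F}$ only by the reflection $k\mapsto -k$ (and possibly a complex conjugation, depending on the convention chosen for the bar), both of which preserve the distribution function of the modulus and therefore preserve every Lorentz quasi-norm. Hence
\[
  \lf\|\overline{\mathcal{F}}(\overline{\psi}\varphi)\ri\|_{L^{6,2}}
  =\lf\|\mathcal{F}(\overline{\psi}\varphi)\ri\|_{L^{6,2}}
  =\lf\|\mathcal{F}(\bar{\psi}\varphi)\ri\|_{L^{6,2}},
\]
yielding the claimed estimate.

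I do not anticipate any genuine obstacle: the entire argument reduces to matching the Lorentz H\"older exponents with the precise integrability $\chi/\omega\in L^{3,\infty}$ granted by \cref{hyp:A-chi}. The only subtle point, compared with \cref{lem:control-of-FPsiPhi-by-Sobolev}, is that the second Lorentz index must be chosen as $q=2$ on the $L^{6,\cdot}$ factor, which is dictated by the fact that $\chi/\omega$ lives only in the weak space $L^{3,\infty}$ (hence $q_2=\infty$), and this is exactly why the hypothesis on the Fourier side is phrased as membership of $L^{6,2}$.
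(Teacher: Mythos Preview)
Your proposal is correct and matches the paper's proof essentially verbatim: the paper applies H\"older's inequality in Lorentz spaces with the same exponent splitting $(6,2)$ and $(3,\infty)$ and then invokes \cref{hyp:A-chi}. Your additional remark that $\overline{\mathcal{F}}$ and $\mathcal{F}$ have the same Lorentz quasi-norms is a harmless clarification the paper leaves implicit.
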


\begin{proof}
  H\"older's inequality in Lorentz spaces and \cref{hyp:A-chi} yield
  \begin{equation*}
    \lf\|\overline{\mathcal{F}}(\overline{\psi}\varphi)\tfrac{\chi}{\omega} \ri\|_{L^{2}}\leq \lf\|\overline{\mathcal{F}}(\overline{\psi}\varphi) \ri\|_{L^{6,2}} \lf\|\tfrac{\chi}{\omega} \ri\|_{L^{3,\infty}}
    % \leq\|\overline{\mathcal{F}}(\overline{\psi})*\overline{\mathcal{F}}(\varphi)\|_{L^{6,2}}\|\frac{\chi}{\omega}\|_{L^{3,\infty}}
    \,,
  \end{equation*}
  which is the claimed result.
  % Then Young's inequalities in Lorentz spaces provides both
  % \begin{equation*}
  %   \|\overline{\mathcal{F}}(\overline{\psi})*\overline{\mathcal{F}}(\varphi)\|_{L^{6,2}}\leq\|\hat{\psi}\|_{L^{12/7,4}}\|\hat{\varphi}\|_{L^{12/7,4}}
  % \end{equation*}
  % and
  % \begin{equation*}
  %   \|\overline{\mathcal{F}}(\overline{\psi})*\overline{\mathcal{F}}(\varphi)\|_{L^{6,2}}\leq\|\hat{\psi}\|_{L^{2,2}}\|\hat{\varphi}\|_{L^{3/2,\infty}}\,.
  % \end{equation*}
  % The insertion of some power of $|k|$ and another application of
  % H\"older's inequality in Lorentz spaces yield two estimates in Sobolev
  % spaces
  % \[
  %   \|\hat{\varphi}\|_{L^{12/7,4}}=\||k|^{-1/4}|k|^{1/4}\hat{\varphi}\|_{L^{12/7,4}}\leq\||k|^{-1/4}\|_{L^{12,\infty}}\||k|^{1/4}\hat{\varphi}\|_{L^{2,4}}
  %   \lesssim
  %   \||k|^{1/4}\hat{\varphi}\|_{L^{2,2}}\leq\|\varphi\|_{\dot{H}^{1/4}}\,,
  % \]
  % and
  % \[
  %   \|\hat{\varphi}\|_{L^{3/2,\infty}}=\||k|^{-1/2}|k|^{1/2}\hat{\varphi}\|_{L^{3/2,\infty}}\leq\||k|^{-1/2}\|_{L^{6,\infty}}\||k|^{1/2}\hat{\varphi}\|_{L^{2,\infty}}
  %   \lesssim
  %   \||k|^{1/2}\hat{\varphi}\|_{L^{2,2}}\leq\|\varphi\|_{\dot{H}^{1/2}}\,,
  % \]
  % and the result is proved.
\end{proof}

The Lorentz spaces estimates we use for the Pauli-Fierz model are given
below.
\begin{lem}
  \label{lem:Bound-psi-phi-chi-over-omega-1} If \cref{hyp:A-omega} and \eqref{hyp:A-chi} hold, and $\psi, \varphi$ are Lebesgue measurable functions such $\psi, \varphi, \bar{\psi}\varphi  \in  \mathcal{S}'(\mathbb{R}^3)$, and $\mathcal{F}(\bar{\psi}\varphi) \in L^{3,2} $, then
  \[
    \lf\|\overline{\mathcal{F}}(\overline{\psi}\varphi)(k-k')\frac{\chi(k)}{\omega(k)}\frac{\chi(k')}{\omega(k')} \ri\|_{L_{k,k'}^{2}}
    \lesssim \lf\|\mathcal{F}(\bar{\psi}\varphi) \ri\|_{L^{3,2}}
    % \leq\|\hat{\psi}\|_{L^{3/2,4}}\|\hat{\varphi}\|_{L^{3/2,4}}\,.
  \]
  % and $\|\hat{\varphi}\|_{L^{3/2,4}}\leq\|\varphi\|_{\dot{H}^{1/2}}$,
  % \[
  %   \|\overline{\mathcal{F}}(\overline{\psi}\varphi)(k-k')\frac{\chi(k)}{\omega(k)}\frac{\chi(k')}{\omega(k')}\|_{L_{k,k'}^{2}}\lesssim\|\mathcal{F}(\bar{\psi}\varphi)\|_{L^{3,2}}\leq\|\hat{\psi}\|_{L^{2}}\|\hat{\varphi}\|_{L^{6/5,\infty}}\,.
  % \]
  % and $\|\hat{\varphi}\|_{L^{6/5,\infty}}\leq\|\varphi\|_{\dot{H}^{1}}$.
  and $ \lf\|\mathcal{F}(\bar{\psi}\varphi) \ri\|_{L^{3,2}}$ can be controlled either
  by $ \lf\|\psi \ri\|_{\dot{H}^{1/2}} \lf\|\varphi \ri\|_{\dot{H}^{1/2}}$ or
  $ \lf\|\psi \ri\|_{L^{2}} \lf\|\varphi \ri\|_{\dot{H}^{1}}$.
\end{lem}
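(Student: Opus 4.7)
The plan is to reduce the $L^2_{k,k'}$ estimate to a scalar Hölder/Young computation in Lorentz spaces. Setting $f:=\chi/\omega$ and $h:=\overline{\mathcal{F}}(\overline{\psi}\varphi)$, we have
\[
\lf\|h(k-k')f(k)f(k') \ri\|_{L^{2}_{k,k'}}^{2}
=\int_{\mathbb{R}^{3}}u(k)\,(|h|^{2}*u)(k)\,\D k,
\qquad u:=f^{2},
\]
where $*$ denotes convolution on $\mathbb{R}^{3}$. The goal thus becomes to bound this pairing in terms of $\|h\|_{L^{3,2}}$, using only the fact that $f\in L^{3,\infty}$ by \cref{hyp:A-chi}, and hence $u\in L^{3/2,\infty}$ by Hölder in Lorentz spaces.

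Next I would choose the Lorentz exponents. Applying Hölder in Lorentz with the dual pair $L^{3/2,\infty}$ and $L^{3,1}$,
\[
\int u\,(|h|^{2}*u)\,\D k\lesssim \lf\|u \ri\|_{L^{3/2,\infty}}\,\lf\||h|^{2}*u \ri\|_{L^{3,1}},
\]
and then Young's inequality in Lorentz spaces with $1+\tfrac{1}{3}=\tfrac{1}{3/2}+\tfrac{1}{3/2}$ and $1=\tfrac{1}{1}+\tfrac{1}{\infty}$ gives
\[
\lf\||h|^{2}*u \ri\|_{L^{3,1}}\lesssim \lf\||h|^{2} \ri\|_{L^{3/2,1}}\,\lf\|u \ri\|_{L^{3/2,\infty}}.
\]
The identity $\||g|^{2}\|_{L^{p,q}}=\|g\|_{L^{2p,2q}}^{2}$ (a straightforward change of variable in the distribution-function representation of the Lorentz quasi-norm) then yields
\[
\lf\||h|^{2} \ri\|_{L^{3/2,1}}=\lf\|h \ri\|_{L^{3,2}}^{2},
\qquad
\lf\|u \ri\|_{L^{3/2,\infty}}=\lf\|f \ri\|_{L^{3,\infty}}^{2}=\lf\|\chi/\omega \ri\|_{L^{3,\infty}}^{2}.
\]
Combining these inequalities and taking square roots produces the claimed bound, since the factor $\|\chi/\omega\|_{L^{3,\infty}}^{2}$ is finite and independent of $\psi,\varphi$; note that passing from $\overline{\mathcal{F}}$ to $\mathcal{F}$ is harmless because the Lorentz quasi-norm is invariant under reflection $k\mapsto -k$.

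For the second statement, I would apply \cref{lem:control-of-FPsiPhi-by-Sobolev} with $d=3$, $p=3$, $q=2$, which requires $\alpha_{1}+\alpha_{2}=1$. The two choices $(\alpha_{1},\alpha_{2})=(\tfrac{1}{2},\tfrac{1}{2})$ and $(\alpha_{1},\alpha_{2})=(0,1)$ give the two announced controls, using $\dot H^{0}=L^{2}$ and $\|\overline{\psi}\|_{\dot H^{s}}=\|\psi\|_{\dot H^{s}}$.

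The main obstacle is really only the bookkeeping: verifying that the Lorentz exponents balance in both the Hölder and the Young step so that the single piece of information $f\in L^{3,\infty}$ closes against the target norm $L^{3,2}$ on $h$. The identity $\||g|^{2}\|_{L^{p,q}}=\|g\|_{L^{2p,2q}}^{2}$ is crucial here: it is precisely what aligns the second Lorentz index $q=2$ with the first index $p=3$ on $h$, given the $q_{1}=1$ forced by duality against $u\in L^{3/2,\infty}$.
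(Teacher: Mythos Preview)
Your proof is correct and follows essentially the same route as the paper's: both rewrite the $L^2_{k,k'}$ norm squared as $\int u\,(|h|^2*u)$ with $u=\chi^2/\omega^2$ and $h=\overline{\mathcal{F}}(\overline{\psi}\varphi)$, then apply H\"older in Lorentz with the pair $L^{3/2,\infty}$--$L^{3,1}$, followed by Young's inequality with $L^{3/2,1}*L^{3/2,\infty}\hookrightarrow L^{3,1}$, and finally the identity $\||h|^2\|_{L^{3/2,1}}=\|h\|_{L^{3,2}}^2$; the second statement is handled identically via \cref{lem:control-of-FPsiPhi-by-Sobolev}. Your write-up is simply more explicit about the convolution identity and the bookkeeping of exponents.
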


\begin{proof}
  By H\"older and Young's inequalities in Lorentz spaces
  \begin{multline*}
    % \Big\|\overline{\mathcal{F}}(\overline{\psi}\varphi)(k-k')\frac{\chi(k)}{\omega(k)}\frac{\chi(k')}{\omega(k')}\Big\|_{L_{k,k'}^{2}}^2&
    \lf\| \lf(|\overline{\mathcal{F}}(\overline{\psi}\varphi)|^2*\tfrac{\chi^2}{\omega^2} \ri)\tfrac{\chi^2}{\omega^2} \ri\|_{L^1} 
    \lesssim \lf\|\tfrac{\chi^2}{\omega^2} \ri\|_{L^{3/2,\infty}} \lf\||\overline{\mathcal{F}}(\overline{\psi}\varphi)|^2*\tfrac{\chi^2}{\omega^2} \ri\|_{L^{3,1}} \\
    \lesssim \lf\|\tfrac{\chi}{\omega} \ri\|_{L^{3,\infty}}^2\Big\||\overline{\mathcal{F}}(\overline{\psi}\varphi)|^2\Big\|_{L^{3/2,1}} \lf\| \tfrac{\chi^2}{\omega^2} \ri\|_{L^{3/2,\infty}} 
    \lesssim \lf\|\tfrac{\chi}{\omega} \ri\|_{L^{3,\infty}}^4 \lf\| \overline{\mathcal{F}}(\overline{\psi}\varphi) \ri\|_{L^{3,2}} \,
  \end{multline*}
  and \cref{lem:control-of-FPsiPhi-by-Sobolev} yields the bounds on
  $ \lf\|\mathcal{F}(\bar{\psi}\varphi) \ri\|_{L^{3,2}}$.
  % then using Young's inequality in Lorentz spaces yields either
  % \[
  %   \|\overline{\mathcal{F}}(\overline{\psi})*\overline{\mathcal{F}}(\varphi)\|_{L^{3,2}}\leq\|\hat{\psi}\|_{L^{3/2,4}}\|\hat{\varphi}\|_{L^{3/2,4}}
  % \]
  % or
  % \[
  %   \|\overline{\mathcal{F}}(\overline{\psi})*\overline{\mathcal{F}}(\varphi)\|_{L^{3,2}}\leq\|\hat{\psi}\|_{L^{6/5,\infty}}\|\hat{\varphi}\|_{L^{2,2}}.
  % \]
  % Thanks to the H\"older inequality in Lorentz spaces
  % \[
  %   \|\hat{\varphi}\|_{L^{3/2,4}}=\||k|^{-1/2}|k|^{1/2}\hat{\varphi}\|_{L^{3/2,4}}\leq\||k|^{-1/2}\|_{L^{6,\infty}}\||k|^{1/2}\hat{\varphi}\|_{L^{2,4}}\lesssim\||k|^{1/2}\hat{\varphi}\|_{L^{2,2}}\leq\|\varphi\|_{\dot{H}^{1/2}}\,,
  % \]
  % and
  % \[
  %   \|\hat{\varphi}\|_{L^{6/5,\infty}}=\||k|^{-1}|k|\hat{\varphi}\|_{L^{6/5,\infty}}\leq\||k|^{-1}\|_{L^{3,\infty}}\||k|\hat{\varphi}\|_{L^{2,\infty}}\lesssim\||k|\hat{\varphi}\|_{L^{2}}\leq\|\varphi\|_{\dot{H}^{1}}\,,
  % \]
  % which ends the proof.
\end{proof}

\subsection{Semiclassical Analysis}
\label{sec:semicl-analys}

We present here the main tool of semiclassical analysis ensuring the existence of at least one Wigner measure (see \cref{def:muNelson}) associated to the family of field's states $ \lf\{ \Psi_\eps \ri\}_{\eps \in (0,1)} $, as well as the convergence of expectations of suitable observables (monomials of creations and annihilation operators). In order to properly state the latter, let us introduce a class\footnote{We use the following convention:  if either $ \ell $ or $ m = 0 $, no factor of the corresponding kind is present in the symbol.} $ \mathscr{S}_{\ell,m} $, $ \ell, m \in \N $, of (cylindrical) classical symbols of the form
\beq
	\label{eq: cylindrical}
	\mathcal{S}(z) = \prod_{i = 1}^{\ell} \prod_{j = \ell+1}^{\ell+m} \braket{\omega^{1/2} z}{\omega^{-1/2} g_i}_{\mathfrak{h}} \braket{\omega^{-1/2} g_j}{\omega^{1/2} z}_{\mathfrak{h}},
\eeq
where $ g_i \in \mathfrak{h}_{\omega^{-1}} $ for any $ i \in \lf\{1, \ldots, \ell + m \ri\} $ and $ z \in \mathfrak{h}_{\omega} $. The Wick quantization of such symbols is simply given by the normal ordered monomial
\beq
	\mathrm{Op}_{\varepsilon}^{\mathrm{Wick}}\lf(\mathcal{S}(z) \ri) = \prod_{i = 1}^{\ell} \prod_{j = \ell+1}^{\ell+m}  a_{\varepsilon}^{\dagger}(g_i) a_{\varepsilon}(g_j).
\eeq

\begin{prop}[Existence of Wigner measures]
  \label{prop:semicl-analys-1}  \mbox{}	\\
 	Suppose that \cref{hyp:A-omega} holds. If there exists $ \delta > 0 $ such that the family of states $ \lf\{ \Psi_{\eps} \ri\}_{\eps \in (0,1)} \subset \mathscr{F}_{\eps} $ satisfies
 	\begin{equation}
    \lf| \meanlr{\Psi_{\eps}}{\lf(1 + \mathrm{d}\Gamma_{\varepsilon}(\omega) \ri)^{\delta}}{\Psi_{\varepsilon}}_{\mathscr{F}_{\varepsilon}}  \ri| \leq C\;,
  	\end{equation}
 uniformly in $ \eps $, then $ \mathscr{M}_{\omega} \lf(\Psi_{\eps} \ri) \neq \varnothing $ and
 	\beq
  		\int_{\mathfrak{h}_{\omega}} \lf( 1 + \lf\| \omega^{1/2} z \ri\|^2_{\mathfrak{h}} \ri)^{\eta} \diff \mu(z) < + \infty,	
  		\qquad		\forall \eta \leq \delta.
	\eeq
	Furthermore, if $  \Psi_{\varepsilon_n} \xrightarrow[n \to + \infty]{\omega
  \mathrm{-sc}} \mu  $ and $ \mathcal{S}(z) \in \mathscr{S}_{\ell,m} $ is a symbol of the form \eqref{eq: cylindrical} with $ \frac{\ell + m}{2} < \delta $, then
	\beq    	
		\braket{\Psi_{\eps_n}}{\mathrm{Op}_{\varepsilon_n}^{\mathrm{Wick}} \lf( \mathcal{S}(z) \ri) \Psi_{\eps_n}}_{\mathscr{F}_{\eps_n}} \xrightarrow[n \to \infty]{}  \int_{\mathfrak{h}_{\omega}}  \mathcal{S}(z) \, \diff \mu(z).
	\eeq 	 
%	\begin{equation}
%  		\braketr{\Psi_{\varepsilon_n}}{\lf( a_{\varepsilon_n}^{\dagger}(g) \ri)^{\eta}  \Psi_{\varepsilon_n}}_{\mathscr{F}_{\eps_n}} \xrightarrow[n\to \infty]{} \int_{\mathfrak{h}_{\omega}} \lf| \braketr{\omega^{1/2} z}{\omega^{-1/2} g}_{\mathfrak{h}} \ri|^\eta \, \D \mu(z)\;,
%\end{equation} 
%	and for all $g \in \mathfrak{h}_{\omega^{-1}}$.
\end{prop}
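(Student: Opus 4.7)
The plan is to follow the standard quasi-classical framework for bosonic states developed by Ammari--Nier and its extensions \cite{ammari2008ahp,AmmariNier11,falconi2017ccm}, adapted here to the weighted one-excitation space $\mathfrak{h}_\omega$. The argument decomposes into three steps: extraction of a limiting measure via compactness of characteristic functions, integrability via Fatou, and Wick-monomial convergence via polarization and differentiation of the characteristic function.

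\textbf{Step 1: Extraction of $\mu$.} I would consider the Weyl characteristic functions
\[
	\hat\mu_\varepsilon(\eta) := \braketr{\Psi_\varepsilon}{W_\varepsilon(\eta)\Psi_\varepsilon}_{\mathscr{F}_\varepsilon}, \qquad \eta \in \mathfrak{h}_{\omega^{-1}} \cap \mathfrak{h},
\]
which are uniformly bounded by $1$. A Cantor diagonal extraction on a countable dense subset of $\mathfrak{h}_{\omega^{-1}} \cap \mathfrak{h}$ yields a subsequence $\varepsilon_n \to 0$ along which $\hat\mu_{\varepsilon_n}$ converges pointwise to a limit $\hat\mu$ inheriting the positive-type property. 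Continuity of $\hat\mu$ at the origin follows from the quantitative bound
\[
	\lf| \hat\mu_\varepsilon(\eta) - 1 \ri| \lesssim \lf\| \eta \ri\|_{\mathfrak{h}_{\omega^{-1}}}^{2\delta} + \varepsilon^\delta \lf\| \eta \ri\|_{\mathfrak{h}}^{2\delta},
\]
derived by combining $1 - \cos x \leq 2^{1-\delta}|x|^{2\delta}$ for $\delta \in (0,1]$, the operator inequality $\Phi_\varepsilon(\eta)^2 \lesssim \lf\|\eta\ri\|_{\mathfrak{h}_{\omega^{-1}}}^2 (1+\mathrm{d}\Gamma_\varepsilon(\omega)) + \varepsilon\lf\|\eta\ri\|_{\mathfrak{h}}^2$, operator monotonicity of $x \mapsto x^\delta$, and the hypothesis. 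The Bochner--Minlos theorem for cylindrical measures on a Hilbert space, combined with a Prokhorov tightness argument based on the $\delta$-moment bound, then identifies a Borel probability measure $\mu$ on $\mathfrak{h}_\omega$ whose Weyl transform is $\hat\mu$.

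\textbf{Step 2: Moment bounds.} For $0 \leq \eta \leq \delta$, I would approximate $(1 + \lf\| \omega^{1/2} z \ri\|_{\mathfrak{h}}^2)^\eta$ by a monotone sequence of bounded cylindrical symbols of the form $(1 + \lf\| P_N \omega^{1/2}z \ri\|^2)^\eta$ with $P_N$ finite-rank, apply the $\omega$-sc convergence along such approximants together with Fatou's lemma, and invoke Lyapunov's inequality
\[
	\braketr{\Psi_\varepsilon}{(1+\mathrm{d}\Gamma_\varepsilon(\omega))^\eta \Psi_\varepsilon}_{\mathscr{F}_\varepsilon} \leq \braketr{\Psi_\varepsilon}{(1+\mathrm{d}\Gamma_\varepsilon(\omega))^\delta \Psi_\varepsilon}_{\mathscr{F}_\varepsilon}^{\eta/\delta} \leq C^{\eta/\delta}.
\]
Monotone convergence on $N$ then yields the announced integrability.

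\textbf{Step 3: Wick monomial convergence and main obstacle.} For $\mathcal{S} \in \mathscr{S}_{\ell,m}$, the standard Wick calculus bound
\[
	\lf| \braketr{\Psi}{\mathrm{Op}^{\mathrm{Wick}}_\varepsilon(\mathcal{S}) \Psi}_{\mathscr{F}_\varepsilon} \ri| \lesssim \prod_k \lf\| g_k \ri\|_{\mathfrak{h}_{\omega^{-1}}} \braketr{\Psi}{(1+\mathrm{d}\Gamma_\varepsilon(\omega))^{(\ell+m)/2} \Psi}_{\mathscr{F}_\varepsilon}
\]
is uniformly finite under the hypothesis, thanks to the strict margin $(\ell+m)/2 < \delta$ and Jensen's inequality. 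A polarization identity reduces the claim to the convergence of $\braketr{\Psi_{\varepsilon_n}}{\Phi_{\varepsilon_n}(g)^{\ell+m} \Psi_{\varepsilon_n}}$, which is obtained by differentiating $\hat\mu_\varepsilon(sg)$ in $s \in \R$ at $s = 0$, the uniform moment bound justifying the exchange of limit and differentiation. The main obstacle lies in Step 1: tightness of the cylindrical measure must be established in the \emph{weighted} space $\mathfrak{h}_\omega$ (not the ambient $\mathfrak{h}$), which is precisely where the hypothesis $\delta > 0$ enters essentially; the strict inequality $(\ell+m)/2 < \delta$ in Step 3 plays a parallel role by providing the margin needed for dominated convergence in the Taylor expansion.
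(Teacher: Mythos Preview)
Your proposal is correct and follows the same approach as the paper: the paper's own proof consists solely of citing \cite[Thm.~6.2]{ammari2008ahp} for the case $\omega=1$ and \cite[Thm.~3.3]{falconi2017ccm} for the weighted extension and Wick-symbol convergence, and your three-step sketch is precisely an outline of what those references establish. There is no discrepancy in method; you have simply unpacked what the paper leaves to the literature.
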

\begin{proof}
  The result concerning the existence of Wigner measures for $ \omega = 1 $ is already stated in \cite[Thm. 6.2]{ammari2008ahp}. The extension to a generic $\omega\geq 0$ and the result about the convergence of cylindrical symbols can be found in \cite[Thm.\ 3.3]{falconi2017ccm}.
\end{proof}

\begin{rem}[Nelson model]
  \label{rem:Nelson qc}
  \mbox{}	\\
 	Combining \cref{prop:semicl-analys-1} above with \cref{hyp:A-Nel}, we immediately see that, if the latter holds, then not only the set of Wigner measures is non-empty, but also for any $ \mu \in \mathscr{M}_{\omega}(\Psi_{\eps}) $ and any  subsequence such that $ \Psi_{\varepsilon_n} \xrightarrow[n \to + \infty]{\omega
  \mathrm{-sc}} \mu $,
 	 \beq
 	 	\label{eq:exp creation}
  		\braketr{\Psi_{\eps_n}}{a_{\varepsilon_n}^{\dagger}(g)  \Psi_{\eps_n}}_{\mathscr{F}_{\eps_n}}  
		\xrightarrow[n \to \infty]{}  \disp\int_{\mathfrak{h}^{\mathrm{Nel}}_{\omega}} \braketr{\omega^{1/2} z}{\omega^{-1/2} g}_{\mathfrak{h}^{\mathrm{Nel}}} \, \diff \mu(z),
	 \eeq
	for all $g \in \mathfrak{h}^{\mathrm{Nel}}_{\omega^{-1}}$. Obviously, the analogous result for the expectation of $ a_{\varepsilon_n}(g) $ holds true. 
\end{rem}

\begin{rem}[PF model]
  \label{rem:PF qc}
  \mbox{}	\\
 	Analogously to \cref{rem:Nelson qc}, the combination of \cref{prop:semicl-analys-1}  with \cref{hyp:A-PF} guarantees the non-emptiness of the set of Wigner measures $ \mathscr{M}_{\omega}(\Psi_{\eps}) $, as well as the convergence (over subsequences) of the expectation values of monomials of degree up to $ 2 $ of creation and annihilation operators, {\it i.e.}, in addition to the analogue of  \eqref{eq:exp creation}, one also has that 
 	\begin{eqnarray}
  		&& \braketr{\Psi_{\eps_n}}{\lf( a_{\varepsilon_n}^{\dagger}(g) \ri)^2 \Psi_{\eps_n}}_{\mathscr{F}_{\eps_n}}  
		\xrightarrow[n \to \infty]{}  \disp\int_{\mathfrak{h}_{\omega}^{\mathrm{PF}}} \lf( \braketr{\omega^{1/2} z}{\omega^{-1/2} g}_{\mathfrak{h}^{\mathrm{PF}}} \ri)^2 \, \diff \mu(z),	\label{eq:exp creation squared} \\
		&& \braketr{\Psi_{\eps_n}}{a_{\varepsilon_n}^{\dagger}(g) a_{\varepsilon_n}(g) \Psi_{\eps_n}}_{\mathscr{F}_{\eps_n}}  \xrightarrow[n \to \infty]{}  \int_{\mathfrak{h}_{\omega}^{\mathrm{PF}}} \lf| \braketr{\omega^{-1/2} g}{\omega^{1/2} z}_{\mathfrak{h}^{\mathrm{PF}}} \ri|^2 \, \diff \mu(z), \label{eq:exp creation annihilation}
	 \end{eqnarray}
	for all $g \in \mathfrak{h}^{\mathrm{PF}}_{\omega^{-1}}$. 
\end{rem}

\section{The Nelson model: Proof of \cref{thm:Quasiclassical-limit-Nelson}}\label{sec:Nelson}

We present first the method on the Nelson model, where most ideas can already
be understood. Throughout this section, the spaces $L^{p}$, $L^{p,q}$ and $H^{s}$ have their
variables in $\mathbb{R}^{3}$.

\cref{lem:Bound-psi-phi-chi-over-omega} allows us to make sense of the
potential $V_\varepsilon$ (recall its definition in \eqref{eq:def-Vepsilon}) once the fields are traced out, and also yields a
useful estimate:
\begin{prop}[Estimate of $ V_\eps $]
	\label{prop:estimate-Vepsilon}
	\mbox{}	\\
	Suppose that \cref{hyp:A-omega,hyp:A-chi,hyp:A-Nel}
  hold. If $\mathcal{F}(\bar{\psi}\varphi) \in L^{6,2}$, then, 
  \begin{equation}\label{eq:bound_Veps}
    \lf| \braket{\psi}{V_{\varepsilon}\varphi}_{L_{x}^{2}} \ri| \lesssim \lf\|\mathcal{F}(\bar{\psi}\varphi) \ri\|_{L^{6,2}} \meanlr{\Psi_{\varepsilon}}{\D\Gamma_{\varepsilon}(\omega)}{\Psi_{\varepsilon}}^{1/2} \lf\|\Psi_{\varepsilon} \ri\|_{\mathscr{F}_{\eps}}\,.
  \end{equation}
  Moreover, %$|\vecp |^{-\frac{1}{4}}\:V_{\varepsilon}\:|\vecp  |^{-\frac{1}{4}}$ and 
  $V_{\varepsilon}\:|\vecp |^{-1/2} \in \mathscr{B}(L^{2}) $ uniformly in $\varepsilon$.
\end{prop}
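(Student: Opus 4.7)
The proof naturally splits in two: first establish the estimate \eqref{eq:bound_Veps} on the sesquilinear form associated with $V_{\varepsilon}$, and then bootstrap to the uniform-in-$\varepsilon$ boundedness of $V_{\varepsilon}\,|\vec{P}|^{-1/2}$ on $L^{2}$.

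For the first part, the plan is to polarize \eqref{eq:def-Vepsilon} and rewrite the resulting sesquilinear form, for $\psi,\varphi\in\mathcal{S}(\mathbb{R}^{3})$, as
$$\langle \psi, V_{\varepsilon}\varphi\rangle_{L^{2}_{x}} = 2\Re\Big((2\pi)^{3/2}\,\langle\Psi_{\varepsilon},a_{\varepsilon}^{*}(g_{\psi,\varphi})\Psi_{\varepsilon}\rangle_{\mathscr{F}_{\varepsilon}}\Big),\qquad g_{\psi,\varphi}(k) := \mathcal{F}(\bar{\psi}\varphi)(k)\,\tfrac{\chi(k)}{\sqrt{\omega(k)}}.$$
I would then invoke the standard annihilation-operator bound
$$\|a_{\varepsilon}(g)\Psi\|_{\mathscr{F}_{\varepsilon}} \leq \|g/\sqrt{\omega}\|_{L^{2}_{k}}\,\langle\Psi,\mathrm{d}\Gamma_{\varepsilon}(\omega)\Psi\rangle^{1/2},$$
which follows at once from Cauchy--Schwarz in the $k$-integral defining $a_{\varepsilon}(g)$ together with the identity $\int\omega(k)\|a_{\varepsilon}(k)\Psi\|^{2}\,\mathrm{d}k = \langle\Psi,\mathrm{d}\Gamma_{\varepsilon}(\omega)\Psi\rangle$. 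Combining this with Cauchy--Schwarz in $\mathscr{F}_{\varepsilon}$ between $\Psi_{\varepsilon}$ and $a_{\varepsilon}(g_{\psi,\varphi})\Psi_{\varepsilon}$ controls the sesquilinear form by $\|g_{\psi,\varphi}/\sqrt{\omega}\|_{L^{2}_{k}}\,\|\Psi_{\varepsilon}\|\,\langle\Psi_{\varepsilon},\mathrm{d}\Gamma_{\varepsilon}(\omega)\Psi_{\varepsilon}\rangle^{1/2}$. Since $g_{\psi,\varphi}/\sqrt{\omega} = \mathcal{F}(\bar{\psi}\varphi)\,\chi/\omega$, \cref{lem:Bound-psi-phi-chi-over-omega} bounds this factor by $\|\mathcal{F}(\bar{\psi}\varphi)\|_{L^{6,2}}$ and \eqref{eq:bound_Veps} follows.

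For the second part, I would argue by duality. Given $\phi,\psi\in\mathcal{S}(\mathbb{R}^{3})$, apply \eqref{eq:bound_Veps} to the pair $(\psi,|\vec{P}|^{-1/2}\phi)$ and control the resulting Lorentz norm via \cref{lem:control-of-FPsiPhi-by-Sobolev} at $d=3$, $p=6$, $q=2$ with the split $\alpha_{1}=0$, $\alpha_{2}=1/2$, which yields
$$\|\mathcal{F}(\bar{\psi}\,|\vec{P}|^{-1/2}\phi)\|_{L^{6,2}} \lesssim \|\psi\|_{L^{2}}\,\||\vec{P}|^{-1/2}\phi\|_{\dot{H}^{1/2}} = \|\psi\|_{L^{2}}\,\|\phi\|_{L^{2}}.$$
Then \cref{hyp:A-Nel} absorbs the Fock-space factors $\|\Psi_{\varepsilon}\|$ and $\langle\Psi_{\varepsilon},\mathrm{d}\Gamma_{\varepsilon}(\omega)\Psi_{\varepsilon}\rangle^{1/2}$ into an $\varepsilon$-independent constant, producing $|\langle\psi,V_{\varepsilon}\,|\vec{P}|^{-1/2}\phi\rangle_{L^{2}_{x}}|\lesssim \|\psi\|_{L^{2}}\|\phi\|_{L^{2}}$. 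A density argument then extends the inequality to $\psi,\phi\in L^{2}$ and delivers the claimed uniform boundedness.

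I do not anticipate a serious obstacle: the argument is essentially an assembly of Cauchy--Schwarz, the standard annihilation-operator bound, and the two Lorentz-space inequalities prepared in \cref{sec:preliminaries}. The only step requiring a touch of care is the justification of the polarization identity, namely verifying that for Schwartz $\psi,\varphi$ the function $g_{\psi,\varphi}$ does belong to $\mathfrak{h}^{\mathrm{Nel}} = L^{2}_{k}$ so that $a_{\varepsilon}^{*}(g_{\psi,\varphi})$ is a bona fide Fock-space operator; this follows from \cref{hyp:A-omega,hyp:A-chi} together with the rapid decay of $\mathcal{F}(\bar{\psi}\varphi)$.
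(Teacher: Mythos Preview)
Your proposal is correct and follows essentially the same approach as the paper: Cauchy--Schwarz in $\mathscr{F}_{\varepsilon}$ and in the $k$-integral (you package the latter as the standard bound $\|a_{\varepsilon}(g)\Psi\|\le\|g/\sqrt{\omega}\|_{L^2}\langle\Psi,\mathrm{d}\Gamma_{\varepsilon}(\omega)\Psi\rangle^{1/2}$, the paper unfolds it pointwise in $k$), followed by \cref{lem:Bound-psi-phi-chi-over-omega} for \eqref{eq:bound_Veps} and \cref{lem:control-of-FPsiPhi-by-Sobolev} with $\alpha_1=0$, $\alpha_2=1/2$ for the operator bound. One cosmetic slip: the displayed polarized identity with $2\Re$ is not literally valid when $\bar\psi\varphi$ is not real---the correct version is a sum of an $a_{\varepsilon}^{*}$-term and an $a_{\varepsilon}$-term, each bounded identically, which is exactly how the paper writes it in the proof of \cref{prop:Convergence-V}.
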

% \begin{rem}
%   \label{rem:unif1} Proposition~\ref{prop:estimate-Vepsilon} and
%   Lemma~\ref{lem:Bound-psi-phi-chi-over-omega} imply that
% \end{rem}
\begin{proof}
  Taking the modulus of both sides of the identity
  \begin{equation*}
    \braketr{\psi\otimes\Psi_{\varepsilon}}{a_{\varepsilon}^{*}\lf(e^{ix\cdot k}\tfrac{\chi}{\sqrt{\omega}}\ri)\,\varphi\otimes\Psi_{\varepsilon} }
    % & =\sum_{n}\langle\psi\otimes\Psi_{\varepsilon}^{[n+1]}\,,\,a_{\varepsilon}^{*}(e^{ix\cdot k}\frac{\chi}{\sqrt{\omega}})\,\varphi\otimes\Psi_{\varepsilon}^{[n]}\rangle\\
    % & =\sum_{n}\sqrt{n+1}\int\overline{\psi(x)}\varphi(x)e^{ix\cdot k}\frac{\chi(k)}{\sqrt{\omega}(k)}\overline{\Psi_{\varepsilon}^{[n+1]}(k_{1},\dots,k_{n},k)}|\Psi_{\varepsilon}^{[n]}(k_{1},\dots,k_{n})\D x\D k_{1}\cdots \D k_{n}\D k\\
    % =\sum_{n}\sqrt{n+1}\int\overline{\mathcal{F}}(\overline{\psi}\varphi)(k)\frac{\chi(k)}{\omega(k)}\, \\
    % \times \sqrt{\omega(k)} \,
    % \overline{\Psi_{\varepsilon}^{[n+1]}(k_{1},\dots,k_{n},k)}\Psi_{\varepsilon}^{[n]}(k_{1},\dots,k_{n})\D
    % k_{1}\cdots \D k_{n} \D k
    =\int \overline\psi(x)\varphi(x)e^{ix\cdot k}\frac{\chi(k)}{\sqrt{\omega(k)}} \braketr{ \Psi_\varepsilon}{a^*_\varepsilon(k) \Psi_\varepsilon}_{\mathscr{F}_{\eps}}\D x\D k
  \end{equation*}
  yields
  \begin{equation*}
   	\lf| \braketr{\psi\otimes\Psi_{\varepsilon}}{a_{\varepsilon}^{*}\lf(e^{ix\cdot k}\tfrac{\chi}{\sqrt{\omega}}\ri)\,\varphi\otimes\Psi_{\varepsilon} } \ri|
    \leq \lf\|\overline{\mathcal{F}}(\overline{\psi}\varphi)\tfrac{\chi}{\omega} \ri\|_{L^{2}_k} \lf\| \sqrt{\omega(\cdot)}\lf\| a_\varepsilon(\cdot)\Psi_\varepsilon \ri\|_{\mathscr{F}_{\eps}} \ri\|_{L^2_k} \lf\|\Psi_\varepsilon \ri\|_{\mathscr{F}_{\eps}}.
  \end{equation*}
  We have that
  $ \lf\| \sqrt{\omega(\cdot)} \lf\|a_\varepsilon(\cdot)\Psi_\varepsilon \ri\|_{\mathscr{F}_{\eps}} \ri\|_{L^2_k}= \meanlr{\Psi_{\varepsilon}}{\D\Gamma_{\varepsilon}(\omega)}{\Psi_{\varepsilon}}^{1/2} $
  and the term  $ \lf\|\overline{\mathcal{F}}(\overline{\psi}\varphi)\frac{\chi}{\omega}\ri\|_{L^{2}_k}$
  % The expression on the second line can then be bounded using
  % $\langle\Psi_{\varepsilon},d\Gamma_{\varepsilon}(\omega)\Psi_{\varepsilon}\rangle$:
  % \begin{align*}
  %   \sum_{n}\sqrt{n+1} & \|\sqrt{\omega(k)}\int\overline{\Psi_{\varepsilon}^{[n+1]}(k_{1},\dots,k_{n},k)}|\Psi_{\varepsilon}^{[n]}(k_{1},\dots,k_{n})dk_{1}\cdots dk_{n}\|_{L_{k}^{2}}\\
  %   & \leq\sum_{n}\sqrt{n+1}\big\|\sqrt{\omega(k)}\|\Psi_{\varepsilon}^{[n+1]}(k_{1},\dots,k_{n},k)\|_{L_{k_{1},\dots,k_{n}}^{2}}\|\Psi_{\varepsilon}^{[n]}\|_{L^{2}}\big\|_{L_{k}^{2}}\\
  %   & \leq\sum_{n}\langle\Psi_{\varepsilon}^{[n+1]},d\Gamma_{\varepsilon}(\omega)\Psi_{\varepsilon}^{[n+1]}\rangle^{1/2}\|\Psi_{\varepsilon}^{[n]}\|_{L^{2}}\\
  %   &
  %   \leq\langle\Psi_{\varepsilon},d\Gamma_{\varepsilon}(\omega)\Psi_{\varepsilon}\rangle^{1/2}\|\Psi_{\varepsilon}\|_{\Gamma_{\rm{s}}}
  % \end{align*}
  % and the other term
 is bounded by \cref{lem:Bound-psi-phi-chi-over-omega}. This shows
  \eqref{eq:bound_Veps}. The facts that the operator %$|\vecp   |^{-\frac{1}{4}}\:V_{\varepsilon}\:|\vecp |^{-\frac{1}{4}}$ and
  $V_{\varepsilon}\:|\vecp |^{-1/2}$ is bounded on $L^{2}$
  uniformly in $\varepsilon$  follows from
  \cref{lem:control-of-FPsiPhi-by-Sobolev}.
\end{proof}

We can then make sense out of the effective potential $V_\mu$ (recall its definition in  \eqref{eq:def-Vmu}) obtained as the quasi-classical limit of
$V_\varepsilon$.

\begin{prop}[Estimates of $ V_{\mu} $]
  \label{prop:unif2} 
  \mbox{}	\\ 
  Suppose that \cref{hyp:A-omega,hyp:A-chi,hyp:A-Nel}
  hold. Let $\mu \in \mathscr{M}_{\omega}(\Psi_\varepsilon)$. Then, 
  \beq
     \lf| \meanlr{\psi}{V_{\mu}}{\varphi} \ri| \lesssim \lf\| \mathcal{F}(\bar{\psi}\varphi) \ri\|_{L^{6,2}} \lf( \int_{\mathfrak{h}_{\omega}} \lf\| \omega^{1/2} z \ri\|_{L^{2}}^{2}
    \, \D \mu(z) \ri)^{1/2}\,.
  \eeq
  Moreover, %$|\vecp |^{-\frac{1}{4}}\:V_{\mu}\:|\vecp  |^{-\frac{1}{4}}$ and 
  $V_{\mu}\:|\vecp |^{-1/2} \in \mathscr{B}(L^{2})$.
\end{prop}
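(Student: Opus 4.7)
The plan is to mirror the proof of \cref{prop:estimate-Vepsilon} with $\mu$ playing the role of the quasi-classical field state, replacing the one-particle expectation by integration against $\mu$. First I would extend the defining quadratic form of $V_\mu$ to a sesquilinear form by polarization: since the integrand in \eqref{eq:def-Vmu} is defined through $2\Re$, the function $V_\mu$ is real-valued, so
\begin{equation*}
  \meanlr{\psi}{V_\mu}{\varphi} = \int_{\mathfrak{h}_\omega} 2\Re \Big\langle \sqrt{\omega}\,z \Big| (2\pi)^{3/2}\, \overline{\mathcal{F}(\bar\psi\varphi)}\, \tfrac{\chi}{\omega}\Big\rangle_{L^2_k} \D\mu(z)\,.
\end{equation*}

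Next I would apply Cauchy--Schwarz in $L^2_k$ inside the integral, obtaining
\begin{equation*}
  \lf|\meanlr{\psi}{V_\mu}{\varphi}\ri| \lesssim \int_{\mathfrak{h}_\omega} \lf\|\omega^{1/2} z\ri\|_{L^2_k} \lf\|\overline{\mathcal{F}}(\bar\psi\varphi)\,\tfrac{\chi}{\omega}\ri\|_{L^2_k} \D\mu(z)\,,
\end{equation*}
and then estimate the second $L^2_k$ factor using \cref{lem:Bound-psi-phi-chi-over-omega}, which yields the bound $\lesssim \|\mathcal{F}(\bar\psi\varphi)\|_{L^{6,2}}$; since this factor is independent of $z$ I would pull it out of the integral over $\mathfrak{h}_\omega$. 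Finally, I would apply Cauchy--Schwarz (or equivalently Jensen) with respect to the probability measure $\mu$ to pass from $\int \|\omega^{1/2}z\|_{L^2}\D\mu(z)$ to $(\int \|\omega^{1/2}z\|_{L^2}^2\D\mu(z))^{1/2}$, giving the claimed inequality.

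The integral $\int_{\mathfrak{h}_\omega}\|\omega^{1/2}z\|_{L^2}^2\D\mu(z)$ is then finite thanks to \cref{prop:semicl-analys-1} applied with $\delta=1$, which is admissible under \cref{hyp:A-Nel}. For the second part of the statement, I would combine the just-proved estimate with \cref{lem:control-of-FPsiPhi-by-Sobolev}: taking $\alpha_1=0$, $\alpha_2=1/2$ gives $(\alpha_1+\alpha_2)/3=1/6$, hence
\begin{equation*}
  \lf\|\mathcal{F}(\bar\psi\varphi)\ri\|_{L^{6,2}} \lesssim \lf\|\psi\ri\|_{L^2} \lf\|\varphi\ri\|_{\dot{H}^{1/2}} = \lf\|\psi\ri\|_{L^2} \lf\||\vecp|^{1/2}\varphi\ri\|_{L^2}\,,
\end{equation*}
and substituting $\varphi = |\vecp|^{-1/2}\tilde\varphi$ shows that the sesquilinear form $(\psi,\tilde\varphi)\mapsto \langle\psi,V_\mu|\vecp|^{-1/2}\tilde\varphi\rangle$ extends by density from $\mathcal{S}(\mathbb{R}^3)\times\mathcal{S}(\mathbb{R}^3)$ to a bounded form on $L^2\times L^2$, so $V_\mu|\vecp|^{-1/2}\in\mathscr{B}(L^2)$.

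The only mildly delicate point is making sure that the defining quadratic form \eqref{eq:def-Vmu} indeed yields (a posteriori) a genuine multiplication operator by a measurable function whose sesquilinear extension coincides with what is produced by polarization; but this is inherited from the corresponding statement for $V_\mu$ being well defined as a function in some Lorentz space, established (or to be established) in \cref{sec:Nelson} as announced after \eqref{eq:def-Vmu}. All other steps are routine applications of the lemmata of \cref{sec:preliminaries} and of the existence/convergence result \cref{prop:semicl-analys-1}.
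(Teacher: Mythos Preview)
Your proposal is correct and follows essentially the same route as the paper's proof: Cauchy--Schwarz in $L^2_k$ (together with Jensen/Cauchy--Schwarz in the probability measure $\mu$) followed by \cref{lem:Bound-psi-phi-chi-over-omega}, and then \cref{lem:control-of-FPsiPhi-by-Sobolev} with $\alpha_1=0$, $\alpha_2=1/2$ for the boundedness of $V_\mu|\vecp|^{-1/2}$. The paper merely compresses the two Cauchy--Schwarz steps into one line and omits the polarization discussion and the appeal to \cref{prop:semicl-analys-1} for finiteness, but the substance is identical.
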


\begin{proof}
  The Cauchy-Schwarz inequality followed by
  \cref{lem:Bound-psi-phi-chi-over-omega} yield
  \bmln{
 	\lf| \meanlr{\psi}{V_{\mu}}{\varphi} \ri| \lesssim \lf\|\mathcal{F}(\psi\overline{\varphi})\chi/\omega \ri\|_{L^{2}}\lf( \int_{\mathfrak{h}_{\omega}} \lf\| \omega^{1/2} z \ri\|_{L^{2}}^{2}
    \, \D \mu(z) \ri)^{1/2} \\
    \lesssim \lf\|\mathcal{F}(\bar{\psi}\varphi) \ri\|_{L^{6,2}} \lf\|\tfrac{\chi}{\omega} \ri\|_{L^{3,\infty}} \lf( \int_{\mathfrak{h}_{\omega}} \lf\| \omega^{1/2} z \ri\|_{L^{2}}^{2}
    \, \D \mu(z) \ri)^{1/2}\, .
  }
  As in the previous proof, \cref{lem:control-of-FPsiPhi-by-Sobolev}
  shows that the operator %$|\vecp |^{-\frac{1}{4}}\:V_{\mu}\:|\vecp  |^{-\frac{1}{4}}$ and 
  $V_{\mu}\:|\vecp |^{-1/2}$ 
  is bounded on
  $L^{2}$.
\end{proof}

We are now ready to prove a convergence result of $V_\varepsilon$ towards
$V_\mu$:
\begin{prop}[Convergence of $ V_{\eps} $]
  \label{prop:Convergence-V}
  \mbox{}	\\
  Suppose that \cref{hyp:A-omega,hyp:A-chi,hyp:A-Nel} hold. Let $\mu \in \mathscr{M}_{\omega}(\Psi_\varepsilon)$ and $ \lf\{ \eps_n \ri\}_{n \in \N} $ be such that $  \Psi_{\varepsilon_n} \xrightarrow[n \to +
  \infty]{\omega \mathrm{-sc}} \mu $. If
  $\mathcal{F}(\bar{\psi}\varphi) \in L^{6,2}$, then
  \begin{equation}
	    \int_{\R^3} (V_{\varepsilon_n}-V_{\mu}) \, \bar{\psi}\varphi  \xrightarrow[n \to +\infty]{}0.\label{eq:conv-V}
  \end{equation}
  Moreover, the family of operators
  % \begin{equation}\label{eq:conv-Vepsilon-Vmu-nabla}
  $(V_{\varepsilon_n}-V_{\mu})\:|\vecp |^{-1/2} $ is bounded on $ L^{2}$ uniformly in $ n $ and converges weakly to
  $0$.
\end{prop}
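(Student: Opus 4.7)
The plan is to deduce \eqref{eq:conv-V} directly from the semiclassical convergence \eqref{eq:exp creation} of \cref{rem:Nelson qc} for creation-operator expectations, and then to upgrade this to the weak operator statement via a density argument that relies on the uniform bounds of \cref{prop:estimate-Vepsilon,prop:unif2}.

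For \eqref{eq:conv-V}, I would first polarize the real sesquilinear form: evaluating \eqref{eq:def-Vepsilon} and \eqref{eq:def-Vmu} on the four vectors $\psi\pm\varphi$, $\psi\pm i\varphi$ and combining, one rewrites
\begin{equation*}
\int_{\mathbb{R}^3}(V_{\varepsilon_n}-V_{\mu})\,\bar{\psi}\varphi
=\bigl\langle\Psi_{\varepsilon_n}\big|\bigl(a^{*}_{\varepsilon_n}(g_{1})+a_{\varepsilon_n}(g_{2})\bigr)\Psi_{\varepsilon_n}\bigr\rangle_{\mathscr{F}_{\varepsilon_n}}
-\int_{\mathfrak{h}^{\mathrm{Nel}}_\omega}\!\bigl(\bigl\langle\omega^{1/2}z\big|\omega^{-1/2}g_{1}\bigr\rangle_{\mathfrak{h}^{\mathrm{Nel}}}+\bigl\langle\omega^{-1/2}g_2\big|\omega^{1/2}z\bigr\rangle_{\mathfrak{h}^{\mathrm{Nel}}}\bigr)\D\mu(z),
\end{equation*}
for suitable functions $g_1,g_2$ built out of $\mathcal{F}(\bar\psi\varphi)\,\chi/\sqrt{\omega}$ times the appropriate numerical constants. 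The key integrability point is that $g_1,g_2\in\mathfrak{h}^{\mathrm{Nel}}_{\omega^{-1}}$, and this is precisely the bound $\|(\chi/\omega)\mathcal{F}(\bar\psi\varphi)\|_{L^2}\lesssim\|\mathcal{F}(\bar\psi\varphi)\|_{L^{6,2}}$ of \cref{lem:Bound-psi-phi-chi-over-omega}. With this verified, \eqref{eq:exp creation} (and its conjugate for $a_{\varepsilon_n}$) drives the displayed quantity to $0$ as $n\to\infty$.

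For the operator statement, uniform boundedness of $(V_{\varepsilon_n}-V_\mu)|\vecp|^{-1/2}$ on $L^2$ is immediate from the triangle inequality together with \cref{prop:estimate-Vepsilon,prop:unif2}. Weak convergence to $0$ then reduces, by uniform boundedness, to testing on a dense subset of $L^2\times L^2$: for $\psi,\varphi\in L^2$, set $\tilde\varphi:=|\vecp|^{-1/2}\varphi\in\dot{H}^{1/2}$, and apply \cref{lem:control-of-FPsiPhi-by-Sobolev} with $\alpha_1=0$, $\alpha_2=1/2$ to obtain $\|\mathcal{F}(\bar\psi\tilde\varphi)\|_{L^{6,2}}\lesssim\|\psi\|_{L^2}\|\varphi\|_{L^2}<\infty$. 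Thus the hypothesis of the first step is met on the pair $(\psi,\tilde\varphi)$, so $\langle\psi,(V_{\varepsilon_n}-V_\mu)|\vecp|^{-1/2}\varphi\rangle_{L^2}=\int(V_{\varepsilon_n}-V_\mu)\bar\psi\tilde\varphi\to 0$. No genuinely difficult step arises; the only real point of care is the Lorentz-space book-keeping ensuring that the test functions always sit in the weighted one-particle spaces where \cref{prop:semicl-analys-1} is applicable without any ultraviolet cut-off, which is exactly what the machinery of \cref{sec:preliminaries} is set up to guarantee.
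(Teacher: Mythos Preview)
Your proposal is correct and follows essentially the same route as the paper: both identify the pairing $\int(V_{\varepsilon_n}-V_\mu)\bar\psi\varphi$ with expectations of $a^{\dagger}_{\varepsilon_n}(g_1)+a_{\varepsilon_n}(g_2)$ for test functions $g_j$ lying in $\mathfrak{h}^{\mathrm{Nel}}_{\omega^{-1}}$ by \cref{lem:Bound-psi-phi-chi-over-omega}, apply \cref{prop:semicl-analys-1}, and then deduce the weak operator convergence from \cref{lem:control-of-FPsiPhi-by-Sobolev} with $\alpha_1=0$, $\alpha_2=1/2$ together with the uniform bounds of \cref{prop:estimate-Vepsilon,prop:unif2}. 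The only cosmetic difference is that you recover the sesquilinear expression by polarization of \eqref{eq:def-Vepsilon}, whereas the paper writes it down directly (and your mention of a density argument is superfluous, since the test actually works for all $\psi,\varphi\in L^2$).
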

\begin{proof}
  If $\mathcal{F}(\bar{\psi}\varphi) \in L^{6,2}$, we have
  % if $\mathcal{F}(|\varphi|^{2})$ is in~$L^{6,2}$,
  % \begin{multline*}
  %   \langle\varphi,V_{\varepsilon}\varphi\rangle_{L_{x}^{2}}=\langle\varphi\otimes\Psi_{\varepsilon},\Re a_{\varepsilon}^{*}(e^{ix\cdot k}\frac{\chi}{\sqrt{\omega}})\:\varphi\otimes\Psi_{\varepsilon}\rangle=\langle\Psi_{\varepsilon},\Re a_{\varepsilon}^{*}(\langle\varphi,e^{ix\cdot k}\varphi\rangle_{L_{x}^{2}}\frac{\chi}{\sqrt{\omega}})\Psi_{\varepsilon}\rangle\\
  %   \xrightarrow[\varepsilon\to0]{}\int_{L_{k}^{2}}\Re\langle\tilde{z},\langle\varphi,e^{ix\cdot
  %   k}\varphi\rangle_{L_{x}^{2}}\frac{\chi}{\omega}\rangle_{L_{k}^{2}}d\mu(\tilde{z})\,,
  % \end{multline*}
  \bmln{
    \meanlr{\psi}{V_{\varepsilon_n}}{\varphi}_{L_{x}^{2}}
    = \meanlr{\psi\otimes\Psi_{\varepsilon_n}}{\mathrm{Re} \, a_{\varepsilon_n}^{\dagger}\lf(e^{ix\cdot k}\tfrac{\chi}{\sqrt{\omega}}\ri)}{\varphi\otimes\Psi_{\varepsilon_n}} \\
    = \braketr{\Psi_{\varepsilon_n}}{a_{\varepsilon_n}^{\dagger} \lf( \braketr{\psi}{e^{ix\cdot k}\varphi}_{L_{x}^{2}} \tfrac{\chi}{\sqrt{\omega}} \ri) \Psi_{\varepsilon_n}}_{\mathscr{F}_{\eps_n}}
    + \braketr{\Psi_{\varepsilon_n}}{a_{\varepsilon_n} \lf( \braketr{\varphi}{e^{ix\cdot k}\psi}_{L_{x}^{2}} \tfrac{\chi}{\sqrt{\omega}}\ri) \Psi_{\varepsilon_n}}_{\mathscr{F}_{\eps_n}}\\
    % \xrightarrow[\varepsilon\to0]{}\int_{L_{k}^{2}}\Big(\langle\tilde{z},\langle\psi,e^{ix\cdot k}\varphi\rangle_{L_{x}^{2}}\frac{\chi}{\omega}\rangle_{L_{k}^{2}}+\langle\langle\varphi,e^{ix\cdot k}\psi\rangle_{L_{x}^{2}}\frac{\chi}{\omega},\tilde{z}\rangle_{L_{k}^{2}}\Big)\, \D \mu(\tilde{z})\,,\\
    \xrightarrow[n \to +\infty]{} \int_{\mathfrak{h}_{\omega}}  \meanlr{\psi}{\mathrm{Re}\,\lf( \braketr{\omega^{1/2} z}{e^{ix\cdot k} \tfrac{\chi}{\omega}}_{L_{k}^{2}} \ri)}{ \varphi}_{L_{x}^{2}}\, \D \mu(z)\,,
	}
	 where we used that $\mu \in \mathscr{M}_{\omega}(\Psi_\varepsilon)$
  (see \cref{def:muNelson}) and that $\overline{\mathcal{F}}(\overline{\psi}\varphi)\chi/\omega \in L^2$ by \cref{lem:Bound-psi-phi-chi-over-omega}.

  The uniform boundedness in $n$ of the family of operators
  % \begin{equation}\label{eq:conv-Vepsilon-Vmu-nabla}
  $(V_{\varepsilon_n}-V_{\mu})\:|\vecp |^{-1/2}$
  % \end{equation}
  in $L^{2}$ follows from \cref{prop:estimate-Vepsilon,prop:unif2}.
  The fact that $(V_{\varepsilon_n}-V_{\mu})\:|\vecp |^{-1/2}$ converges weakly
  to $0$ follows from \cref{lem:control-of-FPsiPhi-by-Sobolev} with
  $\alpha_1=0$, $\alpha_2=1/2$ and $d=3$: it shows that if $\varphi \in \dot{H}^{1/2}$ and $\psi\in L^2$, then
  $\mathcal{F}(\bar{\psi}\varphi) \in L^{6,2}$ and we can then
  conclude thanks to \eqref{eq:conv-V}.
\end{proof}

\begin{rem}
	\mbox{}	\\
  The previous proof also shows that
  $ \mathcal{F}(V_{\varepsilon_n}) \xrightarrow[n \to+ \infty]{} \mathcal{F}( V_{\mu}) $ weakly
  in~$L^{6/5,2}$.
\end{rem}

We are now ready to prove the main result of this section on the Nelson
model.

\begin{proof}[Proof of  \cref{thm:Quasiclassical-limit-Nelson}]
  Let us fix $\mu \in \mathscr{M}_{\omega}(\Psi_\varepsilon)$, and the
  sequence $\{\varepsilon_n\}_{n\in \mathbb{N}}$ such that
  $\Psi_{\varepsilon_n}\to \mu$.
  
  To justify that $H^{\mathrm{Nel}}_{\varepsilon_n}=-\Delta+U+V_{\varepsilon_n}$ and
  $\Hnelm=-\Delta+U+V_{\mu}$ define symmetric closed quadratic forms with form
  domain $\mathcal{Q}:=\mathcal{Q}(H^{\mathrm{Nel}}_{\varepsilon_n})=\mathcal{Q}(\Hnelm)=H^{1}(\R^3) \cap L^{2}(\R^3,
  U_{+}\, \D x)$, it suffices to argue as follows: first, since $U_+$ is
  non-negative and belongs to $L^1_{\mathrm{loc}}$, $-\Delta+U_+$ identifies with
  a self-adjoint operator with form domain $\mathcal{Q}$. Next, \cref{hyp:A-U} together
  with \cref{prop:estimate-Vepsilon} show that $U_-+V_{\varepsilon_n}$ is
  relatively form bounded w.r.t. $-\Delta$ (and hence w.r.t.
  $-\Delta+U_+$) with a relative bound $<1$. The KLMN Theorem (see
  e.g. \cite[Theorem X.17]{reed1975II}) then shows that
  $H^{\mathrm{Nel}}_{\varepsilon_n}$ identifies with a self-adjoint operator with form
  domain $\mathcal{Q}$. The same holds for $\Hnelm$, using \cref{prop:unif2}
  instead of \cref{prop:estimate-Vepsilon}.
  
 Now the goal is to prove $\Gamma$-convergence of
  the quadratic form $ \meanlr{\varphi}{H^{\mathrm{Nel}}_n}{\varphi}$ to
  $ \meanlr{\varphi}{H_{\mu}}{\varphi}$ as $n\to \infty$, in the common
  domain of definition $\mathcal{Q}$.

  Let us start with the $\Gamma\text{-}\limsup$. For any $\varphi\in
  \mathcal{Q}$, we have to construct a sequence $\{\varphi_{n}\}_{n\in
    \mathbb{N}}\subset \mathcal{Q}$ such that
  \begin{equation}
    \label{eq:gammalimsup}
    \,\varphi_n \xrightarrow[n\to +\infty]{\mathcal{Q}} \varphi\;,\qquad \limsup_{n\to +\infty}\: \meanlr{\varphi_n}{H^{\mathrm{Nel}}_n}{\varphi_n} \leq \meanlr{\varphi}{H_{\mu}}{\varphi}\;.
  \end{equation}
  In view of \cref{prop:Convergence-V}, it is sufficient to choose
  $\varphi_n\equiv \varphi$ the constant sequence, to get
  \begin{equation*}
    \lim_{n\to +\infty}\: \meanlr{\varphi_n}{H^{\mathrm{Nel}}_n}{\varphi_n} =  \meanlr{\varphi}{H_{\mu}}{\varphi}\;.
  \end{equation*}

For the $\Gamma\text{-}\liminf$, we apply  \cref{prop:2}: the fact that the assumptions of \cref{prop:2} are satisfied follows from \cref{prop:estimate-Vepsilon,prop:unif2,prop:Convergence-V} (note in particular that \eqref{eq:unif-bounds} holds with $\delta=1/4$). This concludes the proof.

\end{proof}

\section{The Pauli-Fierz  Model: Proof of  \cref{thm:Quasi-classical-limit-Pauli-Fierz}}\label{sec:Pauli-Fierz}

In this section, we set $L^{p} : = L^p(\R^3 \otimes \C^2) $, $L^{p,q} : = L^{p,q}(\R^3 \otimes \C^2) $ and $H^{s} : = H^s(\R^3 \otimes \C^2)$.
%
%\begin{assumption*}[$A_{\Psi}^{PF}$]
%  The family $(\Psi_{\varepsilon})_{\varepsilon}$ of elements of
%  $\Gamma_{\text{sym}}(L_{k}^{2})$ is such that
%  $\langle\Psi_{\varepsilon},(1+d\Gamma_{\varepsilon}(\omega)+d\Gamma_{\varepsilon}^{(2)}(\omega\otimes\omega))\,\Psi_{\varepsilon}\rangle$
%  is bounded, and
%  \[
%    C_{\chi,\omega,\Psi}=\sup_{\varepsilon}\varepsilon\sum_{n\geq1}\int\frac{\chi(k)^{2}}{\omega(k)}|\Psi_{\varepsilon}^{[n]}(k^{[n-1]},k)|^{2}dk^{[n-1]}dk<\infty\,,
%  \]
%\end{assumption*}
%
Recall the definition of the vector potential $\vec{A}_\varepsilon$ in \eqref{eq:def-Aepsilon} with form domain $H^{1}\cap
L^{2}(U_{+}\, \D x)$. With the
same proof as \cref{prop:estimate-Vepsilon} for linearly coupled
models, one gets:

\begin{prop}[Estimate of $ \vec{A}_{\eps} $]\label{prop:estimate-Aepsilon}
	\mbox{}	\\
	Suppose that \cref{hyp:A-omega,hyp:A-chi,hyp:A-PF}
  hold. If $\mathcal{F}(\bar{\psi}\varphi) \in L^{6,2}$, then, 
  \begin{equation}\label{eq:bound_Veps}
    \lf| \braket{\psi}{\vec{A}_{\varepsilon}\varphi}_{L_{x}^{2}} \ri| \lesssim \lf\|\mathcal{F}(\bar{\psi}\varphi) \ri\|_{L^{6,2}} \meanlr{\Psi_{\varepsilon}}{\D\Gamma_{\varepsilon}(\omega)}{\Psi_{\varepsilon}}^{1/2} \lf\|\Psi_{\varepsilon} \ri\|_{\mathscr{F}_{\eps}}\,.
  \end{equation}
  Moreover, %$|\vecp |^{-\frac{1}{4}}\:V_{\varepsilon}\:|\vecp  |^{-\frac{1}{4}}$ and 
  $\vec{A}_{\varepsilon}\:|\vecp |^{-1/2} \in \mathscr{B}(L^{2}) $ uniformly in $\varepsilon$.	
\end{prop}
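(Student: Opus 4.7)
The plan is to mirror the proof of \cref{prop:estimate-Vepsilon} for the Nelson model, treating the vector potential as a polarization-labelled family of scalar-like couplings and exploiting that the polarization vectors satisfy $|\vec{e}_\lambda(k)|=1$. First I would unfold the definition \eqref{eq:def-Aepsilon} with $\vec{w}_x$ given by \eqref{eq:defwx}, which yields, for $\psi,\varphi\in\mathcal{S}(\R^3)$,
\begin{equation*}
\braket{\psi}{\vec{A}_\varepsilon\varphi}_{L^2_x} = 2\Re\sum_{\lambda=1,2}\int_{\R^6} \overline{\psi(x)}\varphi(x)\,e^{ix\cdot k}\,\frac{\chi(k)}{\sqrt{\omega(k)}}\vec{e}_\lambda(k)\,\braketr{\Psi_\varepsilon}{a^\dagger_{\lambda,\varepsilon}(k)\Psi_\varepsilon}_{\mathscr{F}_\varepsilon}\D x\,\D k.
\end{equation*}

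Next, I would apply the Cauchy-Schwarz inequality in $L^2(\R^3\times\{1,2\};\D k\otimes\D\lambda)$. Since $|\vec{e}_\lambda(k)|=1$, performing the $x$-integral first gives $(2\pi)^{3/2}\overline{\mathcal{F}(\bar\psi\varphi)}(k)$ paired with $\chi/\sqrt{\omega}$, so that
\begin{equation*}
\lf|\braket{\psi}{\vec{A}_\varepsilon\varphi}_{L^2_x}\ri|\lesssim \lf\|\overline{\mathcal{F}}(\overline{\psi}\varphi)\tfrac{\chi}{\omega}\ri\|_{L^2_k}\lf(\sum_{\lambda=1,2}\int_{\R^3}\omega(k)\,\lf\|a_{\lambda,\varepsilon}(k)\Psi_\varepsilon\ri\|_{\mathscr{F}_\varepsilon}^2\,\D k\ri)^{\!1/2}\lf\|\Psi_\varepsilon\ri\|_{\mathscr{F}_\varepsilon}.
\end{equation*}
The first factor is bounded by $\|\mathcal{F}(\bar\psi\varphi)\|_{L^{6,2}}$ thanks to \cref{lem:Bound-psi-phi-chi-over-omega}; the second factor is precisely $\meanlr{\Psi_\varepsilon}{\D\Gamma_\varepsilon(\omega)}{\Psi_\varepsilon}^{1/2}$ by the usual pull-through identity for $\D\Gamma_\varepsilon(\omega)$ with two polarization components. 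This yields \eqref{eq:bound_Veps}.

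For the uniform boundedness of $\vec{A}_\varepsilon|\vecp|^{-1/2}$ on $L^2$, I would invoke \cref{lem:control-of-FPsiPhi-by-Sobolev} with $d=3$, $\alpha_1=0$, $\alpha_2=1/2$ (so $p=6$), which gives $\|\mathcal{F}(\bar\psi\varphi)\|_{L^{6,2}}\lesssim\|\psi\|_{L^2}\|\varphi\|_{\dot H^{1/2}}$. Combining this with the just-proven bound and the uniform-in-$\varepsilon$ control $\meanlr{\Psi_\varepsilon}{\D\Gamma_\varepsilon(\omega)}{\Psi_\varepsilon}\le C$ from \cref{hyp:A-PF}, and setting $\varphi=|\vecp|^{-1/2}\phi$ so that $\|\varphi\|_{\dot H^{1/2}}=\|\phi\|_{L^2}$, yields $|\braket{\psi}{\vec{A}_\varepsilon|\vecp|^{-1/2}\phi}_{L^2_x}|\lesssim\|\psi\|_{L^2}\|\phi\|_{L^2}$ uniformly in $\varepsilon$, which is the required operator bound.

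I do not expect any serious obstacle: the argument is essentially identical to the Nelson case because $|\vec{e}_\lambda(k)|=1$ reduces the vector-valued estimate to the scalar estimate up to a harmless sum over $\lambda\in\{1,2\}$. The only point to keep an eye on is that the extra polarization sum in $\D\Gamma_\varepsilon(\omega)$ (now acting on $\mathfrak{h}^{\mathrm{PF}}=L^2(\R^3;\C^2)$) is already built into \cref{hyp:A-PF}, so the same constant $C$ controls the field-theoretic factor.
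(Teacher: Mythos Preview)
Your proposal is correct and follows exactly the approach the paper indicates: the paper does not give a separate proof but simply states ``With the same proof as \cref{prop:estimate-Vepsilon} for linearly coupled models, one gets'' before the proposition. Your adaptation---absorbing the polarization sum via $|\vec{e}_\lambda(k)|=1$ and identifying the resulting field factor with $\meanlr{\Psi_\varepsilon}{\D\Gamma_\varepsilon(\omega)}{\Psi_\varepsilon}^{1/2}$ on $\mathfrak{h}^{\mathrm{PF}}$---is precisely what the paper intends.
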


Next we decompose the potential $W_\varepsilon$ defined in
\eqref{eq:def-Wepsilon} as
\[
  W_\varepsilon=W_{a^{*}a^*,\varepsilon}+W_{aa,\varepsilon}+2W_{a^{*}a,\varepsilon}
\]
with
\begin{equation}\label{eq:Waa-star}
  W_{a^{*}a,\varepsilon}(x) = \lf\| a_{\varepsilon} \lf(e^{ix\cdot k}\tfrac{\chi}{\sqrt{\omega}} \ri)\, \Psi_{\varepsilon}\ri\|^2_{\mathscr{F}_{\eps}},
\end{equation}
\begin{equation}\label{eq:Waa}
  W_{aa,\varepsilon}(x) = \braketr{\Psi_{\varepsilon}}{a_{\varepsilon}^2 \lf(e^{ix\cdot k}\tfrac{\chi}{\sqrt{\omega}} \ri) \, \Psi_{\varepsilon}}_{\mathscr{F}_{\eps}},
\end{equation}
and likewise for $W_{a^*a^*,\varepsilon}(x) =  \overline{W_{aa,\varepsilon}(x)}  $.

\begin{prop}[Estimate of $ W_{\eps} $]
	\label{prop:estimate-Wepsilon}
	\mbox{}	\\
  Suppose that \cref{hyp:A-omega,hyp:A-chi,hyp:A-PF}
  hold.  If $\mathcal{F}(\bar{\psi}\varphi) \in L^{3,2}$, then
  	\beq
    	\lf| \meanlr{\psi}{W_{a^{*}a,\varepsilon}}{\varphi}_{L_{x}^{2}} \ri| \lesssim \lf\|\mathcal{F}(\bar{\psi}\varphi) \ri\|_{L^{3,\infty}} \meanlr{\Psi_{\varepsilon}}{\D\Gamma_{\varepsilon}(\omega)}{\Psi_{\varepsilon}}\,
    ,%+C_{\chi,\omega,\Psi}\|\psi\|_{L^{2}}\|\varphi\|_{L^{2}}
	\eeq  
	and
  	\beq
    		\lf| \braketr{\psi}{W_{aa,\varepsilon}\varphi}_{L_{x}^{2}} \ri| \lesssim \lf\|\mathcal{F}(\bar{\psi}\varphi) \ri\|_{L^{3,2}}
   \lf\|\Psi_\varepsilon\ri\|_{\mathscr{F}_{\eps}} \meanlr{\Psi_{\varepsilon}}{\D\Gamma_{\varepsilon}^{(2)}(\omega\otimes\omega)}{\Psi_{\varepsilon}}^{1/2}\,
    .
  	\eeq
  	As a consequence,
	\beq    
		\lf|   \meanlr{\psi}{W_{\varepsilon}}{\varphi}_{L_{x}^{2}} \ri|
    \lesssim\lf\|\mathcal{F}(\bar{\psi}\varphi) \ri\|_{L^{3,2}}\, ,
	\eeq  
	and $W_{\varepsilon}\:|\vecp |^{-1} \in \mathscr{B}(L^{2})$
  uniformly in $\varepsilon$.
\end{prop}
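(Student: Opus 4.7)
The plan is to estimate the three constituents of $W_\varepsilon = 2\mathrm{Re}\,W_{aa,\varepsilon}+2W_{a^{*}a,\varepsilon}$ separately, and then to deduce the operator bound from \cref{lem:control-of-FPsiPhi-by-Sobolev}.

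First I would treat the positive term $W_{a^{*}a,\varepsilon}$. Expanding $\lf\|a_\varepsilon(\vec{w}_x)\Psi_\varepsilon \ri\|^{2}_{\mathscr{F}_\varepsilon}$ as a double momentum integral (with sum over polarizations) and performing Fourier inversion in $x$, one obtains
\begin{equation*}
  \meanlr{\psi}{W_{a^{*}a,\varepsilon}}{\varphi}_{L^{2}_{x}} = (2\pi)^{3/2}\sum_{\lambda,\lambda'}\int\!\!\int \overline{\mathcal{F}(\overline{\psi}\varphi)}(k'-k)\tfrac{\chi(k)\chi(k')}{\sqrt{\omega(k)\omega(k')}}\vec{e}_\lambda(k)\cdot \vec{e}_{\lambda'}(k')\braketr{a_{\lambda,\varepsilon}(k)\Psi_\varepsilon}{a_{\lambda',\varepsilon}(k')\Psi_\varepsilon}\D k\, \D k'.
\end{equation*}
Bounding $|\vec{e}_\lambda(k)\cdot \vec{e}_{\lambda'}(k')|\leq 1$, applying Cauchy--Schwarz on the Fock inner product, and factoring $\chi/\sqrt\omega=(\chi/\omega)\sqrt\omega$, the pairing reduces to $\int|\hat g(h)|(G\star G)(h)\,\D h$ where $g=\overline{\psi}\varphi$, $\hat g = \mathcal{F}(g)$, and $G(k) := (\chi(k)/\omega(k))\sqrt{\omega(k)}\sum_\lambda\lf\|a_{\lambda,\varepsilon}(k)\Psi_\varepsilon \ri\|_{\mathscr{F}_\varepsilon}$. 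A chain of three Lorentz-space inequalities then does the job: the Lorentz H\"older pairing $L^{3,\infty}$--$L^{3/2,1}$ bounds the integral by $\|\hat g\|_{L^{3,\infty}}\|G\star G\|_{L^{3/2,1}}$; Young's inequality in Lorentz spaces (exponents $p_1=p_2=6/5$, $q_1=q_2=2$) yields $\|G\star G\|_{L^{3/2,1}}\lesssim \|G\|^{2}_{L^{6/5,2}}$; and a final Lorentz H\"older produces $\|G\|_{L^{6/5,2}}\lesssim \|\chi/\omega\|_{L^{3,\infty}}\meanlr{\Psi_\varepsilon}{\D\Gamma_\varepsilon(\omega)}{\Psi_\varepsilon}^{1/2}$, which is finite by \cref{hyp:A-chi}.

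For $W_{aa,\varepsilon}$, the same procedure produces an analogous integral in which $\mathcal{F}(\overline{\psi}\varphi)(k+k')$ replaces $\overline{\mathcal{F}(\overline{\psi}\varphi)}(k'-k)$. Cauchy--Schwarz against $\Psi_\varepsilon$ followed by a direct Cauchy--Schwarz in $(k,k')$ (after factoring $\chi/\sqrt\omega$ as above) separates the bound into the product of $\|\Psi_\varepsilon\|$, the kernel norm $\|\mathcal{F}(\overline{\psi}\varphi)(k+k')\chi(k)\chi(k')/(\omega(k)\omega(k'))\|_{L^{2}_{kk'}}$ (controlled by $\|\mathcal{F}(\overline{\psi}\varphi)\|_{L^{3,2}}$ via \cref{lem:Bound-psi-phi-chi-over-omega-1} after a reflection $k'\mapsto -k'$, the Lorentz quasi-norms being invariant under reflection), and $\meanlr{\Psi_\varepsilon}{\D\Gamma^{(2)}_\varepsilon(\omega\otimes\omega)}{\Psi_\varepsilon}^{1/2}$. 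Summing the three contributions, invoking \cref{hyp:A-PF} for the uniform control of both Fock moments, and using the embedding $L^{3,2}\subset L^{3,\infty}$, yields the consolidated bound $|\meanlr{\psi}{W_\varepsilon}{\varphi}_{L^2_x}|\lesssim \|\mathcal{F}(\overline{\psi}\varphi)\|_{L^{3,2}}$. For the operator bound, \cref{lem:control-of-FPsiPhi-by-Sobolev} applied with $\alpha_1=0$, $\alpha_2=1$, $p=3$, $q=2$ gives $\|\mathcal{F}(\overline{\psi}\varphi)\|_{L^{3,2}}\lesssim \|\psi\|_{L^2}\lf\|\,|\vecp|\varphi \ri\|_{L^2}$, whence $W_\varepsilon\:|\vecp|^{-1}\in\mathscr{B}(L^2)$ uniformly in $\varepsilon$.

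The main technical obstacle is the sharp $L^{3,\infty}$ bound for $W_{a^{*}a,\varepsilon}$: a direct Cauchy--Schwarz in $(k,k')$, as used for $W_{aa,\varepsilon}$, would only yield an $L^{3,2}$ estimate (which is enough for the consolidated statement but not for the separate bound claimed). Improving the summability index from $2$ to $\infty$ requires identifying the autocorrelation structure $G\star G$ and exploiting the additional freedom in the second Lorentz index offered jointly by Young's and H\"older's inequalities, with exponents $(p_i,q_i)$ precisely matched so that the two factors $\chi/\omega\in L^{3,\infty}$ and $\sqrt{\omega}\|a_\varepsilon(\cdot)\Psi_\varepsilon\|\in L^{2}$ combine to land exactly in the dual $L^{3/2,1}$ of $L^{3,\infty}$.
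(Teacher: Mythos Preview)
Your argument is correct and matches the paper's proof essentially step for step: the same Lorentz H\"older/Young chain with the same exponents for $W_{a^{*}a,\varepsilon}$, the same Cauchy--Schwarz in $L^2_{k,k'}$ plus \cref{lem:Bound-psi-phi-chi-over-omega-1} for $W_{aa,\varepsilon}$, and the same appeal to \cref{lem:control-of-FPsiPhi-by-Sobolev} with $\alpha_1=0$, $\alpha_2=1$ for the operator bound. The only cosmetic difference is that the paper writes the $W_{a^{*}a,\varepsilon}$ double integral as $G$ paired with $|\hat g|\ast G$ (H\"older $L^{6/5,2}$--$L^{6,2}$, then Young) rather than your $|\hat g|$ paired with $G\star G$ (H\"older $L^{3,\infty}$--$L^{3/2,1}$, then Young); these are Fubini-equivalent reorganizations of the same estimate.
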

%
% \footnote{ Can we obtain an estimate of the type $
% \sup_{\varepsilon}\langle\Psi_{\varepsilon},d\Gamma_{\varepsilon}^{(2)}(\omega\otimes\omega)\Psi_{\varepsilon}\rangle<\infty
% $ for the ground state?  }
%
%\begin{question}
%  Can we obtain an estimate of the type
%  \[
%    \sup_{\varepsilon}\varepsilon\sum_{n\geq1}\int\frac{\chi(k)^{2}}{\omega(k)}|\Psi_{\varepsilon}^{[n]}(k^{[n-1]},k)|^{2}dk^{[n-1]}dk<\infty
%  \]
%  for the ground state? (If we assume some external potential $U$?)
%\end{question}
%
\begin{proof}
  By the definition of $W_{a^*a,\varepsilon}$\,, we have
  \begin{align*}
     \meanlr{\psi}{W_{a^{*}a,\varepsilon}}{\varphi}_{L_{x}^{2}} & = \braketr{a_{\varepsilon} \lf(e^{ix\cdot k}\tfrac{\chi}{\sqrt{\omega}}\ri)\,\psi\otimes\Psi_{\varepsilon}}{a_{\varepsilon} \lf(e^{ix\cdot k}\tfrac{\chi}{\sqrt{\omega}} \ri)\,\varphi\otimes\Psi_{\varepsilon}} \\
    &=\iiint \bar{\psi}(x) \varphi(x) e^{-ix\cdot (k-k')} \tfrac{\chi(k)}{\sqrt{\omega(k)}}\tfrac{\chi(k')}{\sqrt{\omega(k')}} \braketr{ a_\varepsilon(k)\Psi_\varepsilon}{a_\varepsilon(k')\Psi_\varepsilon}_{\mathscr{F}_{\eps}}\,\D x\D k\D k' \\
    &=\iint \mathcal{F}(\bar{\psi} \varphi)(k-k') \tfrac{\chi(k)}{\sqrt{\omega(k)}}\tfrac{\chi(k')}{\sqrt{\omega(k')}} \braketr{ a_\varepsilon(k)\Psi_\varepsilon}{a_\varepsilon(k')\Psi_\varepsilon}_{\mathscr{F}_{\eps}}\,\D k\D k' ,
    % &=\int \frac{\chi(k)}{\sqrt{\omega(k)}} \Big\langle
    % a_\varepsilon(k)\Psi_\varepsilon, \Big [ \mathcal{F}(\bar{\psi}
    % \varphi) * \big (\frac{\chi(\cdot)}{\sqrt{\omega(\cdot)}}
    % a_\varepsilon(\cdot) \big ) \Big ](k)
    % \Psi_\varepsilon\Big\rangle_{\Gamma_{s}}dk,
  \end{align*}
  and hence, taking the modulus of the previous equalities provides the bound
  \begin{align*}
    \lf| \meanlr{\psi}{W_{a^{*}a,\varepsilon}}{\varphi}_{L_{x}^{2}} \ri|
    &\le \iint \lf| \mathcal{F}(\bar{\psi} \varphi)(k-k') \ri| \tfrac{\chi(k)}{\sqrt{\omega(k)}}\tfrac{\chi(k')}{\sqrt{\omega(k')}} \lf\| a_\varepsilon(k)\Psi_\varepsilon \ri\|_{\mathscr{F}_{\eps}} \lf\| a_\varepsilon(k')\Psi_\varepsilon\ri\|_{\mathscr{F}_{\eps}} \,\D k\D k' \\
    &=\int \tfrac{\chi(k)}{\sqrt{\omega(k)}} \lf\| a_\varepsilon(k)\Psi_\varepsilon \ri\|_{\mathscr{F}_{\eps}} \lf[ \lf| \mathcal{F}(\bar{\psi} \varphi) \ri| * \tfrac{\chi(\cdot)}{\sqrt{\omega(\cdot)}} \lf\| a_\varepsilon(\cdot)\Psi_\varepsilon \ri\|_{\mathscr{F}_{\eps}} \ri](k)\D k\,.
  \end{align*}
  H\"older, Young and again H\"older's inequalities in Lorentz spaces yield
  \begin{align*}
    \lf| \meanlr{\psi}{W_{a^{*}a,\varepsilon}}{\varphi}_{L_{x}^{2}} \ri| & \lesssim \lf\| \tfrac{\chi(k)}{\sqrt{\omega(k)}} \lf\| a_\varepsilon(k)\Psi_\varepsilon \ri\|_{\mathscr{F}_{\eps}} \ri\|_{L^{6/5,2}_k} \lf\| \lf| \mathcal{F}(\bar{\psi} \varphi) \ri| * \tfrac{\chi(\cdot)}{\sqrt{\omega(\cdot)}} \lf\| a_\varepsilon(\cdot)\Psi_\varepsilon \ri\|_{\mathscr{F}_{\eps}} \ri\|_{L^{6,2}} \\
    &\lesssim \lf\| \tfrac{\chi(k)}{\sqrt{\omega(k)}} \lf\| a_\varepsilon(k)\Psi_\varepsilon \ri\|_{\mathscr{F}_{\eps}} \ri\|_{L^{6/5,2}_k}^2 \lf\| \mathcal{F}(\bar{\psi} \varphi) \ri\|_{L^{3,\infty}} \\
    &\lesssim \lf\| \tfrac{\chi}{\omega}  \ri\|_{L^{3,\infty}}^2 \lf\| \lf\| \omega(\cdot)^{1/2} a_\varepsilon(\cdot)\Psi_\varepsilon \ri\|_{\mathscr{F}_{\eps}} \ri\|_{L^2}^2 \lf\| \mathcal{F}(\bar{\psi} \varphi) \ri\|_{L^{3,\infty}} \\
    &= \lf\|\tfrac{\chi}{\omega}  \ri\|_{L^{3,\infty}}^2  \lf\| \mathcal{F}(\bar{\psi} \varphi) \ri\|_{L^{3,\infty}} \meanlr{\Psi_\varepsilon}{\D \Gamma_\varepsilon(\omega)}{\Psi_\varepsilon}.
  \end{align*}

  Next, by the definition of $W_{aa,\varepsilon}$\,,
  \begin{align*}
    & \braketr{\psi}{W_{aa,\varepsilon}\varphi}_{L_{x}^{2}}= \braketr{\psi\otimes\Psi_{\varepsilon}}{a_{\varepsilon}^2 \lf(e^{ix\cdot k}\tfrac{\chi}{\sqrt{\omega}} \ri)\, \varphi\otimes\Psi_{\varepsilon}}\\
    % & =\iiint\overline{\psi(x)}\varphi(x)e^{ix\cdot(k+k')}\frac{\chi(k)}{\sqrt{\omega(k)}}\frac{\chi(k')}{\sqrt{\omega(k')}} \langle \Psi_\varepsilon,a_\varepsilon(k)a_\varepsilon(k')\Psi_\varepsilon\rangle_{\Gamma_{s}}\,\D k\D k'\D x\\
    & =\iint \mathcal{F}(\bar{\psi}\varphi)(-k-k')\tfrac{\chi(k)}{\omega(k)}\tfrac{\chi(k')}{\omega(k')} \sqrt{\omega(k)\omega(k')} \braketr{\Psi_\varepsilon}{a_\varepsilon(k)a_\varepsilon(k')\Psi_\varepsilon}_{\mathscr{F}_{\eps}}\,\D k\D k'\,.
  \end{align*}
  The Cauchy-Schwarz inequality in $L^{2}_{k,k'}$ along with
   \cref{lem:Bound-psi-phi-chi-over-omega-1} yield
  \begin{align*}
    & \lf| \braketr{\psi}{W_{aa,\varepsilon}\varphi}_{L_{x}^{2}} \ri|\\
    & \leq \lf\|\mathcal{F}(\bar{\psi}\varphi)(-k-k')\tfrac{\chi(k)}{\omega(k)}\tfrac{\chi(k')}{\omega(k')} \ri\|_{L_{k,k'}^{2}} \lf\| \sqrt{\omega(k)\omega(k')} \braketr{\Psi_\varepsilon}{a_\varepsilon(k)a_\varepsilon(k')\Psi_\varepsilon}_{\mathscr{F}_{\eps}}  \ri\|_{L_{k,k'}^{2}} \\
    & \lesssim \lf \|\tfrac{\chi}{\omega} \ri\|_{L^{3,\infty}}^2 \lf\| \mathcal{F}(\bar{\psi}\varphi) \ri\|_{L^{3,2}} \lf\|\Psi_\varepsilon \ri\|_{\mathscr{F}_{\eps}} \;\lf\| \sqrt{\omega(k)\omega(k')} \lf \|a_\varepsilon(k)a_\varepsilon(k')\Psi_\varepsilon \ri\|_{\mathscr{F}_{\eps}} \ri\|_{L_{k,k'}^{2}} \\
    & = \lf\|\tfrac{\chi}{\omega} \ri\|_{L^{3,\infty}}^2 \lf\| \mathcal{F}(\bar{\psi}\varphi) \ri\|_{L^{3,2}} \lf\|\Psi_\varepsilon\ri\|_{\mathscr{F}_{\eps}} \meanlr{\Psi_{\varepsilon}}{\D \Gamma_{\varepsilon}^{(2)}(\omega\otimes\omega)}{\Psi_{\varepsilon}}^{1/2}
  \end{align*}
  where we recall that $\D \Gamma_{\varepsilon}^{(2)}(\omega\otimes\omega)=\D
  \Gamma_{\varepsilon}(\omega)^{2}-\varepsilon \D
  \Gamma_{\varepsilon}(\omega^{2})$~.

  We deduce the uniform bound on the norm of $W_{\varepsilon}\:|\vecp |^{-1}$ from
   \cref{lem:control-of-FPsiPhi-by-Sobolev} with $\alpha_1=0$,
  $\alpha_2=1$ and $d=3$, since, if $\varphi \in \dot{H}^1$ and
  $\psi \in L^2$, then $\mathcal{F}(\bar{\psi}\varphi) \in L^{3,2}$.
\end{proof}

In our proof of the $\Gamma$-convergence below, we will need a
related estimate on
\begin{equation*}
  \vec{B}_\varepsilon := \vec{\nabla}\wedge\vec{A}_\varepsilon.
\end{equation*}
In this respect, it turns out that \cref{hyp:A-chi} is not
sufficient for our purpose. The next result holds under the slightly stronger
 \cref{hyp:A'-chi}.

\begin{prop}[Estimate of $ \mathbf{B}_{\eps} $]
	\label{prop:est Beps}
  \mbox{}	\\
  Suppose that \cref{hyp:A-omega,hyp:A'-chi,hyp:A-PF}
  hold.  If $\bar{\psi}\varphi \in L^{2}$, then
  \[
    \lf| \meanlr{\psi}{\vec{B}_{\varepsilon}}{\varphi} \ri| \lesssim \lf\|
    \bar{\psi}\varphi \ri\|_{L^{2}}.
  \]
  As a consequence, $ |\vecp |^{-3/4} \vec{B}_{\varepsilon}\:|\vecp
  |^{-3/4}  \in \mathscr{B}(L^{2})$ uniformly in
  $\varepsilon$.
\end{prop}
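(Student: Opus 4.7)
The plan is to bring the $x$-derivative inside the expectation so that $\vec{B}_\varepsilon$ takes the same form as $\vec{A}_\varepsilon$ but with an extra factor $\vec{k}\wedge\vec{e}_\lambda(k)$ in the mode function, and then to re-run the estimate of \cref{prop:estimate-Aepsilon} while exploiting the \emph{uniform} bound on $|k|\chi/\omega$ granted by \cref{hyp:A'-chi}. More precisely, starting from $\vec{A}_\varepsilon(x) = 2\Re\langle \Psi_\varepsilon | a_\varepsilon^*(\vec{w}_x)\Psi_\varepsilon\rangle_{\mathscr{F}_\varepsilon}$ with $\vec{w}_x(k,\lambda)=\chi(k)\omega^{-1/2}(k)\vec{e}_\lambda(k)e^{ix\cdot k}$, the gradient acts only on the plane-wave factor and produces $\vec{\nabla}_x\wedge\vec{w}_x(k,\lambda) = i\chi(k)\omega^{-1/2}(k)(\vec{k}\wedge\vec{e}_\lambda(k))e^{ix\cdot k}$; the Coulomb-gauge condition $\vec{k}\cdot\vec{e}_\lambda(k)=0$ then implies the key geometric identity $|\vec{k}\wedge\vec{e}_\lambda(k)| = |\vec{k}|$.

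With this in hand, I would expand $\langle \psi | \vec{B}_\varepsilon \varphi\rangle_{L^2_x}$, carry out the $x$-integration to obtain $\mathcal{F}(\bar{\psi}\varphi)(-k)$ up to a constant, and then split $|\vec{k}|\chi(k)/\sqrt{\omega(k)}=(|k|\chi(k)/\omega(k))\sqrt{\omega(k)}$ before applying the Cauchy--Schwarz inequality in $L^2_k$. This gives
\[
  \lf| \langle\psi | \vec{B}_\varepsilon \varphi\rangle_{L^2_x} \ri| \lesssim \Big\|\tfrac{|k|\chi}{\omega}\Big\|_{L^\infty_k} \lf\| \mathcal{F}(\bar{\psi}\varphi) \ri\|_{L^2_k}\, \sum_{\lambda=1,2} \lf( \int_{\mathbb{R}^3} \omega(k) \lf| \langle\Psi_\varepsilon | a^*_{\lambda,\varepsilon}(k)\Psi_\varepsilon\rangle_{\mathscr{F}_\varepsilon} \ri|^2\,\mathrm{d}k \ri)^{1/2}.
\]
The first norm is finite by \cref{hyp:A'-chi}, Plancherel identifies the second with $\|\bar{\psi}\varphi\|_{L^2}$, and a further Cauchy--Schwarz in $\mathscr{F}_\varepsilon$ controls the last factor by $\|\Psi_\varepsilon\|_{\mathscr{F}_\varepsilon}\langle\Psi_\varepsilon | \mathrm{d}\Gamma_\varepsilon(\omega)\Psi_\varepsilon\rangle^{1/2}_{\mathscr{F}_\varepsilon}$, which is uniformly bounded in $\varepsilon$ by \cref{hyp:A-PF}.

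For the operator statement, the plan is to combine the above form estimate with the three-dimensional Sobolev embedding $\dot{H}^{3/4}(\mathbb{R}^3)\hookrightarrow L^4(\mathbb{R}^3)$ and H\"older's inequality, yielding
\[
  \lf\| \bar\psi\varphi \ri\|_{L^2}\le \lf\| \psi \ri\|_{L^4} \lf\| \varphi \ri\|_{L^4} \lesssim \lf\| \psi \ri\|_{\dot H^{3/4}} \lf\| \varphi \ri\|_{\dot H^{3/4}} = \lf\| |\vec{P}|^{3/4}\psi \ri\|_{L^2} \lf\| |\vec{P}|^{3/4}\varphi \ri\|_{L^2},
\]
which translates the estimate on $\vec{B}_\varepsilon$ into the claimed boundedness of $|\vec{P}|^{-3/4}\vec{B}_\varepsilon|\vec{P}|^{-3/4}$ on $L^2$ uniformly in $\varepsilon$.

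The main obstacle, and the precise reason why \cref{hyp:A-chi} must be strengthened to \cref{hyp:A'-chi}, is the extra factor of $|\vec{k}|$ produced by the curl: the Lorentz-space estimates of \cref{lem:Bound-psi-phi-chi-over-omega,lem:Bound-psi-phi-chi-over-omega-1} that served for $\vec{A}_\varepsilon$ and $W_\varepsilon$ leave no room to accommodate such an additional derivative, whereas the bound $|k|\chi/\omega\in L^\infty$ absorbs it trivially at the Fourier level. Once this observation is made, everything reduces to a straightforward adaptation of the pattern already established for $\vec{A}_\varepsilon$ in \cref{prop:estimate-Aepsilon}.
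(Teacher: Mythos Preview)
Your proposal is correct and follows essentially the same route as the paper: bring the curl inside to produce the factor $\vec{k}\wedge\vec{e}_\lambda(k)$, use $|k|\chi/\omega\in L^{\infty}$ from \cref{hyp:A'-chi} together with Cauchy--Schwarz and Plancherel to obtain the $\|\bar\psi\varphi\|_{L^2}$ bound, and then pass to the operator statement. The only cosmetic difference is that the paper invokes \cref{lem:control-of-FPsiPhi-by-Sobolev} with $\alpha_1=\alpha_2=3/4$, $d=3$ for the last step, whereas you spell out the equivalent $\dot{H}^{3/4}\hookrightarrow L^4$ embedding plus H\"older directly.
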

\begin{proof}
  Note that
  \begin{equation*}
    \meanlr{\psi}{\vec{B}_{\varepsilon}(x)}{\varphi} = \meanlr{\psi\otimes\Psi_{\varepsilon}}{\vec{\mathbb{B}}_{\varepsilon}(x)}{\varphi\otimes\Psi_{\varepsilon}},
  \end{equation*}
  where
  \[
    \vec{\mathbb{B}}_\varepsilon(x)=\sqrt{\varepsilon}\sum_{\lambda=1}^{2}\int
    \tfrac{\chi(k)}{\sqrt{\omega(k)}} k\wedge \vec{e}_{\lambda}(k) \lf(e^{ik\cdot
      x}a_{\lambda}(k)+e^{-ik\cdot x}a_{\lambda}^{\dagger}(k) \ri )\, \D k\,.
  \]
  Proceeding in the same way as for $\vec{A}_\varepsilon$ or $V_\varepsilon$,
  we can thus estimate
  \begin{equation*}
    \lf| \meanlr{\psi}{\vec{B}_{\varepsilon}(x)}{\varphi}  \ri| \lesssim \lf\| \tfrac {|k|\chi}{\omega} \mathcal{F}(\bar{\psi}\varphi) \ri\|_{L^{2}} \lesssim \lf\| \bar{\psi}\varphi \ri\|_{L^{2}},
  \end{equation*}
  since $|k|\chi/\omega \in L^\infty$ thanks to \cref{hyp:A'-chi}.

  The uniform boundedness of $ |\vecp |^{-3/4}
  \vec{B}_{\varepsilon}\:|\vecp |^{-3/4} $ then follows from
  \cref{lem:control-of-FPsiPhi-by-Sobolev} with $\alpha_1=\alpha_2=3/4$
  and $d=3$.
  % \[
  %   \|\bar{\psi}\varphi\|_{L^{2}}\lesssim\|\psi\|_{L^{4}}\|\varphi\|_{L^4}
  %   \lesssim \|\psi\|_{\dot{H}^{3/4}} \|\varphi\|_{\dot{H}^{3/4}}\,,
  % \]
  % and this concludes the proof.
\end{proof}

% \begin{rem}
%   We note the following property that will be used in the proof of the
%   convergence of the resolvents below. We introduce a momentum cutoff by
%   setting $\chi_\Lambda:=\chi \eta_\Lambda$, where $\eta_\Lambda$ is a
%   smooth ultraviolet cutoff, $\eta_\Lambda(k)=1$ if $|k|\le\Lambda$,
%   $\eta_\Lambda(k)=0$ if $|k|\ge2\Lambda$ and $\eta_\Lambda \in
%   C_0^\infty(\mathbb{R}^3)$. Defining $\vec{A}_{\varepsilon,\Lambda}$ by
%   the same expression as $\vec{A}_\varepsilon$, with $\chi_\Lambda$
%   replacing $\chi$, we then observe that
%   \[
%     \big \|( \vec{A}_{\varepsilon} -
%     \vec{A}_{\varepsilon,\Lambda})\:|-i\nabla|^{-\frac{1}{2}}\big\|
%     \lesssim
%   \]
% \end{rem}
Recall that $ \vec{A}_\mu$ has been defined in \eqref{eq:def-Amu}. Similarly as for
$\vec{B}_\varepsilon$, we set
\begin{equation*}
  \vec{B}_\mu := \vec{\nabla}\wedge\vec{A}_\mu.
\end{equation*}

\begin{prop}[Estimate of $ \mathbf{B}_{\mu} $]
  \label{prop:unif2-1} 
  \mbox{}	\\
  Suppose that \cref{hyp:A-omega,hyp:A-chi}
  hold. Let $\mu \in \mathscr{M}_{\omega}(\Psi_\varepsilon)$.  If $\mathcal{F}(\bar{\psi}\varphi) \in L^{6,2}$, then
 	\beq
     		\lf| \meanlr{\psi}{\vec{A}_{\mu}}{\varphi} \ri| \lesssim \lf\|\mathcal{F}(\bar{\psi}\varphi) \ri\|_{L^{6,2}}
    \lf( \int_{\mathfrak{h}_{\omega}} \lf \|\omega^{1/2}z \ri\|_{L^{2}}^{2}\,\D \mu(z) \ri)^{1/2}\,
  	\eeq
  and $\vec{A}_{\mu}\:|\vecp |^{-1/2} \in \mathscr{B}(L^2) $.  Likewise, if $\mathcal{F}(\bar{\psi}\varphi) \in L^{3,2}$, then
  	\beq
    		\lf| \meanlr{\psi}{W_{\mu}}{\varphi} \ri| \lesssim \lf\|\mathcal{F}(\bar{\psi}\varphi) \ri\|_{L^{3,2}}
	\eeq
	 and $W_{\mu}\:|\vecp |^{-1} \in \mathscr{B}(L^{2}) $.
  
  Moreover, if \cref{hyp:A-chi} is replaced by the stronger \cref{hyp:A'-chi} and if $\bar{\psi}\varphi \in L^{2}$, then
	\beq    
		\lf| \meanlr{\psi}{\vec{B}_{\mu}}{\varphi} \ri| \lesssim \lf\|\bar{\psi}\varphi \ri\|_{L^{2}}	
	\eeq  
	and $ |\vecp |^{-3/4} \vec{B}_{\mu}\:|\vecp |^{-3/4}  \in \mathscr{B}(L^{2}) $.
\end{prop}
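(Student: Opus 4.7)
The plan is to prove the three estimates by closely mimicking the proofs of \cref{prop:estimate-Aepsilon,prop:estimate-Wepsilon,prop:est Beps}, replacing each expectation value $\langle\Psi_\varepsilon,\mathrm{Op}^{\mathrm{Wick}}_\varepsilon(\mathcal{S})\Psi_\varepsilon\rangle$ by the corresponding $\mathrm{d}\mu$-integral of the classical symbol $\mathcal{S}(z)$. The finiteness of the relevant moment $\int_{\mathfrak{h}_\omega}\|\omega^{1/2}z\|_{L^2}^2\,\mathrm{d}\mu(z)<\infty$ used throughout is guaranteed by \cref{prop:semicl-analys-1} combined with \cref{hyp:A-PF}.

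First, for $\vec{A}_\mu$, I would start from the defining quadratic form \eqref{eq:def-Amu} (polarised in $\psi,\varphi$) and write $\langle\psi,\vec{A}_\mu\varphi\rangle_{L^2_x}$ as a $\mathrm{d}\mu(z)$-integral of $\int\overline{z(k)}\vec{w}_x(k)\bar\psi(x)\varphi(x)\,\mathrm{d}x\,\mathrm{d}k$ plus its complex conjugate. Performing the $x$-integration produces $\mathcal{F}(\bar\psi\varphi)(-k)$, and Cauchy--Schwarz in $k$ after the decomposition $\overline{z(k)}\chi(k)/\sqrt{\omega(k)}=\overline{\omega^{1/2}(k)z(k)}\cdot\chi(k)/\omega(k)$ bounds the integrand by $\|\omega^{1/2}z\|_{L^2}\,\|(\chi/\omega)\mathcal{F}(\bar\psi\varphi)\|_{L^2}$. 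Integrating against $\mathrm{d}\mu$, applying Cauchy--Schwarz with $\mu$ a probability measure, and invoking \cref{lem:Bound-psi-phi-chi-over-omega} delivers the announced bound.

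For $W_\mu$, the integrand in \eqref{eq:def-Wmu} is a product of two real parts; I would expand it into the three Wick-type contributions proportional to $\overline{z(k_1)}\,\overline{z(k_2)}$, $z(k_1)z(k_2)$ and $\overline{z(k_1)}z(k_2)$. The $x$-integration yields Fourier transforms $\mathcal{F}(\bar\psi\varphi)$ evaluated at $\mp k_1\mp k_2$ in the holomorphic/antiholomorphic cases and at $k_1-k_2$ in the cross case. Cauchy--Schwarz in $L^2_{k_1,k_2}$ isolates the double $\chi/\omega$ factor against $\mathcal{F}(\bar\psi\varphi)$, which \cref{lem:Bound-psi-phi-chi-over-omega-1} controls by $\|\mathcal{F}(\bar\psi\varphi)\|_{L^{3,2}}$; the remaining $k$-norm is precisely $\|\omega^{1/2}z\|_{L^2}^2$. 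Integrating against $\mathrm{d}\mu$ closes the estimate. For $\vec{B}_\mu=\nabla\wedge\vec{A}_\mu$, the $k$-integrand of the $\vec{A}_\mu$-estimate acquires an extra factor $|k|$, which the stronger \cref{hyp:A'-chi} allows me to absorb as $\|\,|k|\chi/\omega\|_{L^\infty}$; Plancherel then replaces the Lorentz bound and yields the desired $\|\bar\psi\varphi\|_{L^2}$-estimate.

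Finally, the three operator-norm statements reduce to passing from the sesquilinear form bounds to Sobolev-norm bounds on $\psi,\varphi$: I would apply \cref{lem:control-of-FPsiPhi-by-Sobolev} with $(\alpha_1,\alpha_2,d)=(0,1/2,3)$ for $\vec{A}_\mu|\vec{P}|^{-1/2}$, with $(0,1,3)$ for $W_\mu|\vec{P}|^{-1}$, and with $(3/4,3/4,3)$ for $|\vec{P}|^{-3/4}\vec{B}_\mu|\vec{P}|^{-3/4}$ (using Plancherel to read $\|\bar\psi\varphi\|_{L^2}=\|\mathcal{F}(\bar\psi\varphi)\|_{L^{2,2}}$). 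The main bookkeeping difficulty I anticipate lies in the $W_\mu$ piece: one must check that the $\overline{z}z$ cross term, with Fourier argument $k_1-k_2$ instead of $-k_1-k_2$, still falls under the scope of \cref{lem:Bound-psi-phi-chi-over-omega-1}, which is clear by a change of variables since the right-hand side is translation-invariant. Apart from this, the argument is a direct classical translation of the microscopic estimates, with the Wigner measure $\mu$ taking over the role played by the quantum state $\Psi_\varepsilon$.
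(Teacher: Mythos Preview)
Your proposal is correct and follows essentially the same approach as the paper. The paper's own proof is in fact even terser than yours: it writes out only the $\vec{A}_\mu$ estimate (via Cauchy--Schwarz and \cref{lem:Bound-psi-phi-chi-over-omega}, exactly as you do) and then dismisses $W_\mu$ and $\vec{B}_\mu$ with ``can be proven in the same way,'' implicitly pointing back to the $W_\varepsilon$ and $\vec{B}_\varepsilon$ arguments in \cref{prop:estimate-Wepsilon,prop:est Beps}; your filled-in details for those pieces, and your use of \cref{lem:control-of-FPsiPhi-by-Sobolev} with the stated exponents for the operator-norm conclusions, are precisely what that deferral amounts to.
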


\begin{proof}
  To prove the first estimate, it suffices to write
	\bmln{	    
		\lf| \meanlr{\psi}{\vec{A}_{\mu}}{\varphi} \ri| \lesssim \lf\|\mathcal{F}(\psi\overline{\varphi})\chi/\omega \ri\|_{L^{2}}\lf( \int_{\mathfrak{h}_{\omega}} \lf \|\omega^{1/2}z \ri\|_{L^{2}}^{2}\,\D \mu(z) \ri)^{1/2} \\
		\lesssim \lf\|\mathcal{F}(\bar{\psi}\varphi) \ri\|_{L^{6,2}} \lf\| \tfrac{\chi}{\omega} \ri\|_{L^{3,\infty}} \lf( \int_{\mathfrak{h}_{\omega}} \lf \|\omega^{1/2}z \ri\|_{L^{2}}^{2}\,\D \mu(z) \ri)^{1/2}
	}  
	where we used \cref{lem:Bound-psi-phi-chi-over-omega-1} as before. The
  statements concerning $W_\mu$ and $\vec{B}_\mu$ can be proven in the same
  way.
\end{proof}

\begin{prop}[Convergence of $ \vec{A}_{\eps} $, $W_{\eps} $, $ \vec{B}_{\eps} $]
  \label{prop:Convergence-V-1}
  \mbox{}	\\
  Suppose that \cref{hyp:A-omega,hyp:A-chi,hyp:A-PF} hold. Let $\mu \in \mathscr{M}_{\omega}(\Psi_\varepsilon)$ and $ \lf\{ \eps_n \ri\}_{n \in \N} $ be such that $  \Psi_{\varepsilon_n} \xrightarrow[n \to +
  \infty]{\omega \mathrm{-sc}} \mu $.
  If $\mathcal{F}(\bar{\psi}\varphi) \in L^{6,2}$, then
  \begin{equation}
    	\int_{\R^3} \lf(\vec{A}_{\varepsilon_n}-\vec{A}_{\mu} \ri) \, \bar{\psi}\varphi  \xrightarrow[n \to + \infty]{}0\label{eq:conv-A}
  \end{equation}
  and the family of operators $(\vec{A}_{\varepsilon_n}-\vec{A}_{\mu})\:|\vecp
  |^{-1/2}$ is bounded on $L^2$ uniformly in $n $ and converges weakly to $0$.  Likewise, if $\mathcal{F}(\bar{\psi}\varphi) \in L^{3,2}$, then
  \beq
    \int_{\R^3} \lf(W_{\varepsilon_n}-W_{\mu}\ri) \, \bar{\psi}\varphi \xrightarrow[n \to + \infty]{}0
  \eeq
  and the family of operators $(W_{\varepsilon_n}-W_{\mu})\:|\vecp |^{-1}$ is bounded on $L^2$ uniformly in $n$ and converges weakly to $0$.
  
Moreover, if \cref{hyp:A-chi} is replaced by the stronger \cref{hyp:A'-chi} and if $\bar \psi \varphi \in L^{2}$, then
  \begin{equation}
    \int_{\R^3} \lf(\vec{B}_{\varepsilon_n}-\vec{B}_{\mu}\ri) \, \bar{\psi}\varphi \xrightarrow[n \to +\infty]{}0,
  \end{equation}
  and the family of operators $ |\vecp
  |^{-3/4}(\vec{B}_{\varepsilon_n}-\vec{B}_{\mu})\:|\vecp |^{-3/4}$  is bounded on $L^2$ uniformly in $n$ and converges weakly to $0$.
\end{prop}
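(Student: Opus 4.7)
The plan is to follow the strategy of \cref{prop:Convergence-V} for the Nelson model, adapted to the three quantities at hand and to the fact that $W_{\eps}$ is of second order in the creation/annihilation operators. In each case, I would rewrite $\int_{\R^{3}}\bar{\psi}\varphi\,(\,\cdot\,)$ as a Fock-space expectation on $\Psi_{\eps}$ of a Wick-ordered monomial, invoke the semiclassical convergence of \cref{prop:semicl-analys-1} and \cref{rem:PF qc} to pass to the limit $\eps_n\to 0$, and deduce the weak operator convergence on $L^{2}$ from the uniform bounds of \cref{prop:estimate-Aepsilon,prop:estimate-Wepsilon,prop:est Beps,prop:unif2-1} combined with the Sobolev-type control of \cref{lem:control-of-FPsiPhi-by-Sobolev}.

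For $\vec{A}_{\eps_n}\to\vec{A}_\mu$ the argument will be essentially a copy of the proof of \cref{prop:Convergence-V}. First, I would use Fubini to rewrite
\beq
\int_{\R^{3}}\bar{\psi}\varphi\,\vec{A}_{\eps}=2\Re\braketr{\Psi_{\eps}}{a_{\eps}^{\dagger}(h_{\psi,\varphi})\Psi_{\eps}}_{\mathscr{F}_{\eps}},\qquad h_{\psi,\varphi}(k,\lambda):=\mathcal{F}(\bar{\psi}\varphi)(-k)\tfrac{\chi(k)}{\sqrt{\omega(k)}}\vec{e}_{\lambda}(k).
\eeq
\cref{lem:Bound-psi-phi-chi-over-omega} then places $h_{\psi,\varphi}$ in $\mathfrak{h}^{\mathrm{PF}}_{\omega^{-1}}$ as soon as $\mathcal{F}(\bar{\psi}\varphi)\in L^{6,2}$, and \cref{rem:PF qc} yields the limit. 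For $\vec{B}_{\eps_n}\to\vec{B}_\mu$ the same argument applies, with $h_{\psi,\varphi}$ multiplied by $k\wedge\vec{e}_\lambda(k)$; the stronger \cref{hyp:A'-chi} ensures $|k|\chi/\omega\in L^{\infty}$ and hence places this modified test function in $\mathfrak{h}^{\mathrm{PF}}_{\omega^{-1}}$ whenever $\bar{\psi}\varphi\in L^{2}$. The uniform operator bounds and the weak convergences of $(\vec{A}_{\eps_n}-\vec{A}_\mu)|\vecp|^{-1/2}$ and $|\vecp|^{-3/4}(\vec{B}_{\eps_n}-\vec{B}_\mu)|\vecp|^{-3/4}$ on $L^{2}$ will then follow from the cited a priori bounds together with \cref{lem:control-of-FPsiPhi-by-Sobolev} used with $\alpha_{1}=0,\alpha_{2}=1/2$ and $\alpha_{1}=\alpha_{2}=3/4$ respectively.

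The main work is for $W_{\eps}$. I would split $W_{\eps}=2W_{a^{*}a,\eps}+2\Re\,W_{aa,\eps}$ and, using Fubini, express each piece as a trace-type pairing of the form
\beq
\int_{\R^{3}}\bar{\psi}\varphi\,W_{a^{*}a,\eps}=\iint_{\R^{6}}K_{\psi,\varphi}(k,k')\braketr{a_{\eps}(k)\Psi_{\eps}}{a_{\eps}(k')\Psi_{\eps}}_{\mathscr{F}_{\eps}}\D k\,\D k',
\eeq
with $K_{\psi,\varphi}(k,k'):=\mathcal{F}(\bar{\psi}\varphi)(k-k')\chi(k)\chi(k')/\sqrt{\omega(k)\omega(k')}$, and an analogous expression for $W_{aa,\eps}$. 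A direct application of \cref{rem:PF qc} is obstructed because $\vec{w}_{x}$ need not lie in $\mathfrak{h}^{\mathrm{PF}}_{\omega^{-1}}$ under the sole assumption $\chi/\omega\in L^{3,\infty}$. The key observation will be the Hilbert--Schmidt bound $\|\omega^{-1/2}K_{\psi,\varphi}\omega^{-1/2}\|_{\mathrm{HS}}\lesssim\|\mathcal{F}(\bar{\psi}\varphi)\|_{L^{3,2}}$ already contained in the proof of \cref{lem:Bound-psi-phi-chi-over-omega-1}. This allows me to truncate the singular value decomposition of $\omega^{-1/2}K_{\psi,\varphi}\omega^{-1/2}$ at rank $N$, producing an approximation $K_{\psi,\varphi}^{N}=\sum_{n\leq N}\lambda_{n}|\omega^{1/2}e_{n}\rangle\langle\omega^{1/2}f_{n}|$ whose rank-one summands involve test functions $\omega^{1/2}e_{n},\omega^{1/2}f_{n}\in\mathfrak{h}^{\mathrm{PF}}_{\omega^{-1}}$ (since $\|\omega^{1/2}e_{n}\|_{\mathfrak{h}_{\omega^{-1}}}=\|e_{n}\|_{\mathfrak{h}^{\mathrm{PF}}}=1$). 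For each fixed $N$, \cref{rem:PF qc} then applies term by term and yields $\int\bar{\psi}\varphi\,W_{\eps_{n}}^{N}\to\int\bar{\psi}\varphi\,W_{\mu}^{N}$.

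The hard step will be the uniform-in-$n$ control of the tail $W_{\eps}-W_{\eps}^{N}$, because a naive truncation of $\chi/\omega$ does not decay in $L^{3,\infty}$. Here I would exploit the Hilbert--Schmidt structure: writing the remainder pairing as
\beq
\mathrm{Tr}_{\mathfrak{h}^{\mathrm{PF}}}\bigl(\omega^{-1/2}(K_{\psi,\varphi}-K_{\psi,\varphi}^{N})\omega^{-1/2}\,\cdot\,\omega^{1/2}R_{\eps}\omega^{1/2}\bigr),
\eeq
where $R_{\eps}$ denotes the one-particle density kernel of $\Psi_{\eps}$, a Cauchy--Schwarz inequality in Hilbert--Schmidt norm gives an upper bound of the form $\bigl(\sum_{n>N}\lambda_{n}^{2}\bigr)^{1/2}\meanlr{\Psi_{\eps}}{\D\Gamma_{\eps}(\omega)}{\Psi_{\eps}}$, which is uniformly small in $\eps$ thanks to \cref{hyp:A-PF}. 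An analogous estimate involving $\D\Gamma_{\eps}^{(2)}(\omega\otimes\omega)$ handles the $W_{aa,\eps}$ contribution, and the $\mu$-side is dealt with identically via \cref{prop:unif2-1}. A standard three-term approximation argument then produces $\int\bar{\psi}\varphi\,(W_{\eps_{n}}-W_{\mu})\to 0$. Finally, the uniform-in-$n$ boundedness and the weak convergence of $(W_{\eps_{n}}-W_{\mu})|\vecp|^{-1}$ on $L^{2}$ follow from \cref{prop:estimate-Wepsilon,prop:unif2-1} and \cref{lem:control-of-FPsiPhi-by-Sobolev} applied with $\alpha_{1}=0$, $\alpha_{2}=1$.
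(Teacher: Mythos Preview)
Your treatment of $\vec{A}_{\eps_n}$ and $\vec{B}_{\eps_n}$ matches the paper exactly: the paper simply says the argument for $\vec{A}_{\eps_n}$ is the same as \cref{prop:Convergence-V}, and the $\vec{B}_{\eps_n}$ case is handled by the same scheme under \cref{hyp:A'-chi}; the uniform boundedness and weak operator convergences are read off from \cref{prop:estimate-Aepsilon,prop:estimate-Wepsilon,prop:est Beps,prop:unif2-1} just as you propose.

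For $W_{\eps_n}$ the paper takes a shorter route than you do. It does not introduce a singular-value truncation. Instead, after integrating out $x$ it observes that the resulting two-point kernel satisfies $\overline{\mathcal{F}}(\bar{\psi}\varphi)(k\pm k')\tfrac{\chi(k)}{\omega(k)}\tfrac{\chi(k')}{\omega(k')}\in L^{2}_{k,k'}$ (this is exactly \cref{lem:Bound-psi-phi-chi-over-omega-1}) and that $\mu$ is concentrated on $\{\omega^{1/2}z\in L^{2}\}$, and then invokes the semiclassical convergence of quadratic Wick symbols directly at the Hilbert--Schmidt level. In other words, the paper treats \cref{prop:semicl-analys-1} as already covering degree-two symbols with an $L^{2}_{k,k'}$ kernel in the $\omega$-weighted sense, rather than only the rank-one products listed in~\eqref{eq: cylindrical}. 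Your finite-rank approximation argument is precisely the standard way to \emph{reduce} the Hilbert--Schmidt case to the rank-one case of \cref{rem:PF qc}, so your proof is a correct and more self-contained justification of the step the paper asserts in one line. What the paper's approach buys is brevity (one display and a reference to the $L^{2}_{k,k'}$ bound); what yours buys is that it works verbatim from the rank-one statement actually written in \cref{prop:semicl-analys-1}, with the tail control coming for free from \cref{hyp:A-PF} via the $\D\Gamma_{\eps}(\omega)$ and $\D\Gamma_{\eps}^{(2)}(\omega\otimes\omega)$ bounds you identified.
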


\begin{proof}The proof of the convergence of $\vec{A}_{\varepsilon_n}$ is the
  same as the proof of \cref{prop:Convergence-V}.
  % Indeed, if $\mathcal{F}(|\varphi|^{2})$ is in~$L^{6,2}$,
  % \begin{multline*}
  %   \langle\psi,\vec{A}_{\varepsilon}\varphi\rangle_{L_{x}^{2}}=\langle\psi\otimes\Psi_{\varepsilon},\Re a_{\varepsilon}^{*}(e^{ix\cdot k}\frac{\chi}{\sqrt{\omega}})\:\varphi\otimes\Psi_{\varepsilon}\rangle=\langle\Psi_{\varepsilon},\langle\psi,\Re a_{\varepsilon}^{*}(e^{ix\cdot k}\frac{\chi}{\sqrt{\omega}})\varphi\rangle_{L_{x}^{2}}\Psi_{\varepsilon}\rangle\\
  %   \xrightarrow[\varepsilon\to0]{}\int_{L_{k}^{2}}\langle\psi,\Re\langle\tilde{z},e^{ix\cdot
  %   k}\frac{\chi}{\omega}\rangle_{L_{k}^{2}}\,\varphi\rangle_{L_{x}^{2}}\,
  %   \D \mu(\tilde{z})
  % \end{multline*}
  % which does indeed make sense again by
  % Lemma~\ref{lem:Bound-psi-phi-chi-over-omega-1}.  The
  % limit~\eqref{eq:conv-A} thus follows.

  Likewise, using that $\mu \in \mathscr{M}_{\omega}(\Psi_\varepsilon)$ (see
  \cref{def:muNelson}), we have that
  \begin{align*}
    \meanlr{\psi}{W_{\varepsilon_n}}{\varphi}
    &= \meanlr{\psi(x)\otimes\Psi_{\varepsilon_n}}{ \lf( a^{\dagger}_{\varepsilon_n}(\vec{w}_x) \ri)^2 + \lf(a_{\varepsilon_n}(\vec{w}_x) \ri)^2 + 2 a^{\dagger}_{\varepsilon_n}(\vec{w}_x)a_{\varepsilon_n}(\vec{w}_x)}{ \varphi(x)\otimes\Psi_{\varepsilon_n}} \\
    &= \braketr{\Psi_{\varepsilon_n}}{\meanlr{\psi(x)}{\lf( a^{\dagger}_{\varepsilon_n}(\vec{w}_x) \ri)^2 + \lf( a_\varepsilon(\vec{w}_x) \ri)^2 + 2 a^{\dagger}_{\varepsilon_n}(\vec{w}_x)a_{\varepsilon_n}(\vec{w}_x) }{\varphi(x)}_{L_{x}^{2}}\Psi_{\varepsilon_n}}_{\mathscr{F}_{\eps_n}}\\
    &\xrightarrow[n \to +\infty]{} \int_{\mathfrak{h}_{\omega}} \meanlr{\psi(x)}{ \lf( 2\mathrm{Re}\, \braket{z}{w_{x}} \ri)^{2}}{\varphi(x)} \, \D \mu(z) =\langle \psi \mid W_{\mu}  \varphi \rangle ,
  \end{align*}
  since
  $\overline{\mathcal{F}}(\bar{\psi}\varphi)(k+k')\frac{\chi(k)}{\omega(k)}\frac{\chi(k')}{\omega(k')}\in
  L_{k,k'}^{2}$ and $\mu$ is supported on functions $z$ such that
  $\sqrt{\omega}z\in L^{2}$.

  % For the boundedness statement, let us, say prove the statement for the
  % family $\vec{B}_\varepsilon$. It is sufficient to remark that, the Young
  % inequality in Lorentz spaces,
  % \[
  %   \|\mathcal{F}(\bar{\psi}\varphi)\|_{L^2} \lesssim
  %   \|\hat{\psi}\|_{L^{4/3,4}} \|\hat{\varphi}\|_{L^{4/3,4}}
  % \]
  % and then, by the H\"older inequality in Lorentz spaces and the embeddings
  % in Lorentz spaces,
  % \[
  %   \|\hat{\varphi}\|_{L^{4/3,4}}\lesssim \||k|^{-3/4}\|_{L^{4,\infty}}
  %   \||k|^{3/4}\hat{\varphi}\|_{L^{2,4}}\lesssim
  %   \|\hat{\varphi}\|_{H^{3/4}}\,.
  % \]
  % The boundedness and weak convergence results for the other families can
  % be proven in the same way.

  The weak convergence results follow similarly and the uniform boundedness
  results follow from \cref{prop:estimate-Aepsilon,prop:estimate-Wepsilon,prop:est Beps,prop:unif2-1}.
\end{proof}

\begin{rem}
	\mbox{}	\\
  The previous proof also implies that
  \begin{itemize}
  \item
    $\mathcal{F}\lf(\vec{A}_{\varepsilon_n} \ri) \xrightarrow[n \to +\infty]{} \mathcal{F}\lf(\vec{A}_{\mu} \ri)$
    weakly in~$L^{6/5,2}$,
  \item $ \mathcal{F}\lf({W}_{\varepsilon_n}\ri) \xrightarrow[n \to +\infty]{} \mathcal{F}\lf({W}_{\mu}\ri) $
    weakly in~$L^{3/2,2}$,
    \item  $\vec{B}_{\varepsilon_n}  \xrightarrow[n \to +\infty]{} \vec{B}_\mu$ weakly in $L^2$.
  \end{itemize}
\end{rem}

We are now ready to prove our main result about the Pauli-Fierz model.

\begin{proof}[Proof of \cref{thm:Quasi-classical-limit-Pauli-Fierz}]

We proceed as in the proof of \cref{thm:Quasiclassical-limit-Nelson}.  Let us fix $\mu\in \mathscr{M}_{\omega}(\Psi_{\varepsilon})$, and the
  sequence $\{\varepsilon_n\}_{n\in \mathbb{N}}$ such that
  $\Psi_{\varepsilon_n} \to \mu$.

  To justify that 
  $H^{\mathrm{PF}}_{\varepsilon_n}$ and
  $H^{\mathrm{PF}}_{\mu}$ define symmetric closed quadratic forms with
  form
  domains $\mathcal{Q}:=H^{1}(\R^3)
  \cap L^{2}(\R^3, U_{+}\, \D x)$, it suffices to argue exactly as in the proof of  \cref{thm:Quasiclassical-limit-Nelson}, using that $H^{\mathrm{PF}}_{\varepsilon_n}-(-\Delta+U_+)$ and $H^{\mathrm{PF}}_{\varepsilon_n}-(-\Delta+U_+)$ are relatively form bounded w.r.t. $-\Delta$ with a relative bound $<1$, which follows from \cref{hyp:A-U} and \cref{prop:estimate-Aepsilon,prop:estimate-Wepsilon,prop:est Beps,prop:unif2-1}.

 Next, we prove $\Gamma$-convergence of
  the quadratic forms $ \meanlr{\varphi}{H^{\mathrm{PF}}_{\varepsilon_n}}{\varphi} $ to $ \meanlr{\varphi}{H^{\mathrm{PF}}_{\mu}}{\varphi}$.

  For the $\Gamma\text{-}\limsup$ we take again the constant sequence and use
  \cref{prop:Convergence-V-1}: for any $\varphi\in \mathcal{Q}$,
  \bdm
     \meanlr{\varphi}{H_{\eps_n}^{\mathrm{PF}}-H^{\mathrm{PF}}_{\mu}}{\varphi} = 2\mathrm{Re} \braket{\sigmav\cdot\vec{P}\varphi}{\sigmav\cdot \lf(\vec{A}_{\mu}-\vec{A}_{\varepsilon_n} \ri)\varphi} +  \meanlr{\varphi}{W_{\varepsilon_n}-W_{\mu}}{\varphi}
    \xrightarrow[n\to +\infty]{}0\;.
  \edm

For the $\Gamma\text{-}\liminf$, we apply \cref{prop:2}: the fact that the assumptions of \cref{prop:2} are satisfied follows from \cref{prop:estimate-Aepsilon,prop:estimate-Wepsilon,prop:est Beps,prop:unif2-1,prop:Convergence-V-1} (in particular, \eqref{eq:unif-bounds} holds with $\delta=3/8$). This concludes the proof.

\end{proof}

\medskip

\begin{footnotesize}
\noindent
\textbf{Acknowledgments.} M.C. \& M.F. acknowledges the supports of PNRR Italia Domani and Next Generation Eu through the ICSC National Research Centre for High Performance Computing, Big Data and Quantum Computing, and of the MUR grant ``Dipartimento di Eccellenza 2023-2027'' of Dipartimento di Matematica, Politecnico di Milano.
\end{footnotesize}
% \rule[0.5ex]{1\columnwidth}{1pt}
% \begin{proof}
%   Now we suppose that $U$ is confining and prove the convergence in norm of
%   the resolvents. We use again the resolvent equation,
%   \[
%     \big\|(H_{\varepsilon}-\mathrm{i})^{-1}-(H_{\mu}-\mathrm{i})^{-1}\big\|^{2}=\big\|(H_{\varepsilon}+\mathrm{i})^{-1}(V_{\mu}-V_{\varepsilon})(H_{\mu}+\mathrm{i})^{-1}(H_{\mu}-\mathrm{i})^{-1}(V_{\mu}-V_{\varepsilon})(H_{\varepsilon}-\mathrm{i})^{-1}\big\|
%   \]
%   Since
%   $(H_{\varepsilon}+\mathrm{i})^{-1}(V_{\mu}-V_{\varepsilon})(H_{\mu}+\mathrm{i})^{-1}$
%   converges strongly to $0$ and $(H_{\mu}-\mathrm{i})^{-1}$ is compact, it
%   follows that
%   \[
%     \big\|(H_{\varepsilon}+\mathrm{i})^{-1}(V_{\mu}-V_{\varepsilon})(H_{\mu}+\mathrm{i})^{-1}(H_{\mu}-\mathrm{i})^{-1}\big\|\to0.
%   \]
%   Since in addition
%   $(H_{\varepsilon}-\mathrm{i})^{-1}(V_{\mu}-V_{\varepsilon})$ is bounded
%   uniformly in $\varepsilon$, this concludes the proof.
% \end{proof}

\appendix

\section{Weak $\Gamma$-convergence for Schr\"{o}dinger and Pauli operators}
\label{sec:weak-gamma-conv}\label{sec:Gamma-conv}

In this short appendix, we will prove some results concerning the limes
inferior of a sequence of operators in the weak topology that are useful for
proving $\Gamma$-convergence of quasi-classical operators in the main text. %The
%first result (\cref{prop:1}) is rather general, but unfortunately we do not
%have strong convergence of the perturbation; we state it nevertheless because
%it could be of interest.
We consider
Schr\"{o}dinger or Pauli type operators, with perturbations that are
uniformly KLMN-relatively-small and converge weakly.

%To state the first result, let us consider a sequence of densely defined
%self-adjoint operators $\{T_n\}_{n\in \mathbb{N}}$ with common form domain $\mathcal{Q}(A)\subseteq \mathfrak{h}$,
%$\mathfrak{h}$ a separable Hilbert space, and a limit operator $T_{\infty}$, also
%self-adjoint with the same form domain $\mathcal{Q}(A)$, respectively defined as
%\begin{equation}
%  \label{eq:1}
%  T_n=A+B_n\;,\qquad T_{\infty}= A+ B_{\infty}\;,
%\end{equation}
%with $A\geq 1$ (and the $\mathcal{Q}(A)$ above being the form domain of $A$). A possible
%applications could be the case in which $B_n$ is a KLMN-small perturbation of
%a non-negative operator $H_0$, and $A= H_0 +\lambda $, with $\lambda\geq 1$.
%\begin{prop}
%  \label{prop:1}
%  Let $\{T_n\}_{n\in \mathbb{N}},T_{\infty}$ be defined as above, and such that there exists
%  $\delta>0$ so that $A+B_n\geq \delta A$, uniformly with respect to $n$. If for any $\psi\in \mathfrak{h}$,
%  \begin{equation*}
%    \lim_{n\to \infty}\lVert A^{-1/2}(B-B_n)A^{-1/2}\psi  \rVert_{}=0\;,
%  \end{equation*}
%  then for any $\{\psi_n\}\subset \mathcal{Q}(A)$ such that $\psi_n\xrightarrow[n\to
%  \infty]{\mathrm{w}-\mathfrak{h}}\psi$ and $\psi\in \mathcal{Q}(A)$, it holds that
%  \begin{equation}
%    \label{eq:3}
%    \liminf_{n\to \infty} \langle \psi_n  , T_n\psi_n \rangle_{}\geq \langle \psi  , T_{\infty} \psi \rangle_{}\;.
%  \end{equation}
%\end{prop}
%To prove this result, 
We first prove a preliminary useful lemma.
\begin{lem}
  \label{lemma:1}
  Let $Q:\mathcal{Q}\to \mathbb{R}_+$ be a non-negative, densely defined quadratic form, and
  $Q[\,\cdot\, ,\,\cdot\, ]$ the associated sesquilinear form. Then,
  \begin{equation*}
    Q[\psi]=\sup_{\phi\in \mathcal{Q}} \mathrm{Re} \, Q[\phi,2\psi-\phi]\;.
  \end{equation*}
\end{lem}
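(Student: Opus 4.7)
The plan is to exploit sesquilinearity to rewrite $Q[\phi,2\psi-\phi]$ and then complete the square. By sesquilinearity,
$$Q[\phi,2\psi-\phi]=2Q[\phi,\psi]-Q[\phi,\phi]=2Q[\phi,\psi]-Q[\phi],$$
and since $Q[\phi]\in\mathbb{R}_+$, taking real parts gives
$$\mathrm{Re}\,Q[\phi,2\psi-\phi]=2\,\mathrm{Re}\,Q[\phi,\psi]-Q[\phi].$$

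The key observation is then the identity
$$Q[\psi]-\mathrm{Re}\,Q[\phi,2\psi-\phi]=Q[\psi]-2\,\mathrm{Re}\,Q[\phi,\psi]+Q[\phi]=Q[\psi-\phi]\geq 0,$$
where the last inequality uses the non-negativity of $Q$. This yields the uniform upper bound $\mathrm{Re}\,Q[\phi,2\psi-\phi]\leq Q[\psi]$ for every $\phi\in\mathcal{Q}$, hence $\sup_{\phi\in\mathcal{Q}}\mathrm{Re}\,Q[\phi,2\psi-\phi]\leq Q[\psi]$.

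To see that the supremum is attained (and therefore is equal to $Q[\psi]$), I simply take $\phi=\psi\in\mathcal{Q}$: then $Q[\psi,2\psi-\psi]=Q[\psi,\psi]=Q[\psi]$, which is real. This concludes the proof. There is no substantive obstacle here; the statement is really just the polarization identity $Q[\psi-\phi]=Q[\psi]-2\,\mathrm{Re}\,Q[\phi,\psi]+Q[\phi]$ rearranged, and non-negativity provides the one-sided bound while $\phi=\psi$ witnesses saturation.
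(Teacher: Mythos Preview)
Your proof is correct and follows essentially the same route as the paper: both arguments reduce to the identity $\mathrm{Re}\,Q[\phi,2\psi-\phi]=Q[\psi]-Q[\psi-\phi]$ (you obtain it by expanding via sesquilinearity, the paper via the polarization identity), and then conclude from non-negativity and the choice $\phi=\psi$.
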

\begin{proof}
  By the polarization identity, we can write
  \begin{multline*}
    \mathrm{Re} \, Q[\phi,2\psi-\phi] = \tfrac{1}{4}\mathrm{Re} \, \Bigl\{Q[2\psi-\phi+\phi]-Q[2\psi-\phi-\phi]+iQ[2\psi-\phi-i\phi]-iQ[2\psi-\phi-i\phi]\Bigr\}\\=\tfrac{1}{4}\Bigl(Q[2\psi]-Q[2(\psi-\phi)]\Bigr)= Q[\psi]-Q[\psi-\phi]\;.
  \end{multline*}
  Therefore,
  \begin{equation*}
    \sup_{\phi\in \mathcal{Q}} \mathrm{Re} \, Q[\phi,2\psi-\phi]= Q[\psi]-\inf_{\phi\in \mathcal{Q}} Q[\psi-\phi]= Q[\psi]\;,
  \end{equation*}
  since $Q[\,\cdot \,]\geq 0$ (choose $\phi=\psi$).
\end{proof}
Let $\mathfrak{h} = L^2(\mathbb{R}^d;\mathbb{C}^\nu)$ with $\nu=1$ in the case of Schr\"odinger operators and $\nu=2$ for Pauli operators, and
consider 
\begin{equation}
  \label{eq:4}
  H_n= -\Delta+U + V_n\;,
\end{equation}
where $U$ satisfies \cref{hyp:A-U}, $V _n=W_n$ for Schr\"odinger, $V_n=W_n+\nabla\cdot \vec{A}_n+\sigma\cdot \vec{B}_n$ for Pauli, with $W_n$ measurable from $\mathbb{R}^d$ to $\mathbb{R}$, $\nabla\cdot \vec{A}_n=\vec{A}_n\cdot\nabla$, and $\lf( \vec{A}_{n} \ri)_j$, $ \lf( \vec{B}_{n} \ri)_j$, for $1\le j\le d$, measurable from $\mathbb{R}^d$ to $\mathbb{R}$. Let us suppose that $V_n$ converges to
some $V_{\infty}=W_\infty+\nabla\cdot \vec{A}_\infty+\sigma\cdot \vec{B}_\infty$ weakly on $\mathcal{Q}(-\Delta+U)$: for all $\psi,\phi\in \mathcal{Q} (-\Delta+U)$,
\begin{equation*}
  \lim_{n\to \infty} \meanlr{\phi}{V_n}{\psi}_{}= \meanlr{\phi}{V_{\infty}}{\psi}_{}\;.
\end{equation*}
We set $H_\infty:=-\Delta+U+V_\infty$. Furthermore, we suppose that for some $0<\delta<\frac12$ and $\lambda_0>0$,
\begin{equation}\label{eq:unif-bounds}
\begin{split}
&  \Bigl\lVert (-\Delta+\lambda_0)^{-\delta} W_n (-\Delta+\lambda_0)^{-\delta}  \Bigr\rVert_{}^{}\leq C\;, \\
&\Bigl\lVert \vec{A}_n (-\Delta+\lambda_0)^{-\delta}  \Bigr\rVert_{}^{}\leq C\;,\; \\
& \Bigl\lVert (-\Delta+\lambda_0)^{-\delta} \vec{B}_n (-\Delta+\lambda_0)^{-\delta}  \Bigr\rVert_{}^{}\leq C \;,
\end{split}
\end{equation}
all uniformly with respect to $n\in\mathbb{N}\cup\{+\infty\}$. Let us remark that this implies that $V_n$ (for $n\in\mathbb{N}\cup\{+\infty\}$) is a $-\Delta$-relatively small
perturbation  in the sense of quadratic forms: there exist $a<1$ and $b\geq 0$ such
that for all $\psi\in H^1(\mathbb{R}^d)$ and $n\in \mathbb{N}\cup\{+\infty\}$,
\begin{equation}\label{eq:Vn_inf_small}
  \lf| \meanlr{\psi}{V_n}{\psi} \ri| \leq a \meanlr{\psi}{-\Delta}{\psi}+ b\lVert \psi  \rVert_{}^2\;.
\end{equation}
In turn, this implies that there exists $m\in \mathbb{R}$ bounding from
below the spectrum of all $H_n$ and $H_{\infty}$, and that any non-uniformly
bounded sequence in $\mathcal{Q}(-\Delta+U)$ makes $ \meanlr{\psi_n}{H_n}{\psi_n}
 $ diverge as $n\to \infty$. 
%Finally, let us suppose that for some $0<\delta<\frac12$ and $\lambda_0>0$,
%\begin{equation}\label{eq:unif-bounds}
%\begin{split}
%&  \Bigl\lVert (-\Delta+\lambda_0)^{-\delta} W_n (-\Delta+\lambda_0)^{-\delta}  \Bigr\rVert_{}^{}\leq C\;, \\
%&\Bigl\lVert A_n (-\Delta+\lambda_0)^{-\delta}  \Bigr\rVert_{}^{}\leq C\;,\; \\
%& \Bigl\lVert (-\Delta+\lambda_0)^{-\delta} B_n (-\Delta+\lambda_0)^{-\delta}  \Bigr\rVert_{}^{}\leq C \;,
%\end{split}
%\end{equation}
%all uniformly with respect to $n$. 

%  The potential $U:\mathbb{R}^{3}\to\mathbb{R}$ is such that $U_{+} \in
%  L_{\mathrm{loc}}^{1}(\R^3) $ and $ U_{-} $ is KLMN form-bounded w.r.t. $ -
%  \Delta $, i.e., there exists $ a \in (0,1) $ and $ b \in \R $ such that
%  \beq \meanlr{\psi}{U_-}{\psi} \leq a \meanlr{\psi}{-\Delta}{\psi} + b \lf\|
%  \psi \ri\|^2, \qquad \forall \psi \in H^1(\R^3).  \eeq

\begin{prop}[weak $\Gamma$-lower bound]
  \label{prop:2}
  \mbox{}	\\
  Let $H_n=-\Delta+U+V_n$, and $H_{\infty}=-\Delta+U+V_{\infty}$ be defined as above, with $U$ satisfying \cref{hyp:A-U} and such that:
%  $V_n$ (for $n\in\mathbb{N}\cup\{\infty\}$) is uniformly $-\Delta$-relatively small in the sense of quadratic forms as in \eqref{eq:Vn_inf_small},
  $V_n$ converges weakly on $\mathcal{Q}:=\mathcal{Q}(-\Delta+U)$ to $V_{\infty}$, and \eqref{eq:unif-bounds} holds uniformy in $\mathbb{N}$ for
  some $\delta>0$ and $\lambda_0>0$.
%  \begin{equation*}
%    \Bigl\lVert A_n (-\Delta+\lambda)^{-\delta}  \Bigr\rVert_{}^{}\leq C\;,\;\Bigl\lVert B_n (-\Delta+\lambda)^{-\delta}  \Bigr\rVert_{}^{}\leq C\;,\; \Bigl\lVert \nabla\cdot A_n (-\Delta+\lambda)^{-\delta}  \Bigr\rVert_{}^{}\leq C
%  \end{equation*}
%  uniformly with respect to $n$. 
Then, for any $\{\psi_n\}_{n \in \N} \subset \mathcal{Q}$ such that
  $\psi_n\xrightarrow[n\to \infty]{\mathrm{w}-L^2}\psi$ and $\psi\in \mathcal{Q}$, it holds that for
  all $\lambda>\max(-m,\lambda_0)$,
  \begin{equation}
    \label{eq:5}
    \liminf_{n\to \infty} \meanlr{\psi_n}{H_n+\lambda}{\psi_n} \geq  \meanlr{\psi}{H_{\infty}+\lambda}{\psi}\;.
  \end{equation} 
\end{prop}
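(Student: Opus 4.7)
My plan is to apply the polarization identity of \cref{lemma:1} to the nonnegative forms $Q_n[\varphi]:=\meanlr{\varphi}{H_n+\lambda}{\varphi}$ (nonnegative because $\lambda>-m$) and reduce the $\Gamma$-$\liminf$ inequality to a termwise limit against a dense family of smooth compactly supported test vectors, where a local compactness argument controls the weak-weak pairing involving $V_n$. \cref{lemma:1} supplies the one-sided bound
\begin{equation*}
  Q_n[\psi_n]\,\geq\,2\,\mathrm{Re}\,Q_n[\phi,\psi_n]-Q_n[\phi]\qquad\text{for every }\phi\in\mathcal{Q}.
\end{equation*}
I may assume $\liminf_n Q_n[\psi_n]<+\infty$ (else \eqref{eq:5} is trivial) and pass to a subsequence attaining the liminf; the coercivity bound \eqref{eq:Vn_inf_small} then forces $\{\psi_n\}$ to be bounded in $\mathcal{Q}$, and by reflexivity of $\mathcal{Q}$ together with uniqueness of weak limits, $\psi_n\rightharpoonup\psi$ weakly in $\mathcal{Q}$, hence in $H^1$.

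Fix $\phi\in C_c^\infty(\mathbb{R}^d;\mathbb{C}^\nu)$ with $K:=\mathrm{supp}\,\phi$. The weak convergence of $V_n$ on $\mathcal{Q}$ yields $Q_n[\phi]\to Q_\infty[\phi]$, and the kinetic and $U$-parts of $Q_n[\phi,\psi_n]$ converge to those of $Q_\infty[\phi,\psi]$ by the weak $\mathcal{Q}$-convergence of $\psi_n$. The only non-trivial contribution is $\meanlr{\phi}{V_n}{\psi_n}$, a weak-weak pairing; my core claim is that $V_n\phi\to V_\infty\phi$ \emph{strongly} in $H^{-1}(\mathbb{R}^d;\mathbb{C}^\nu)$. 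The bounds \eqref{eq:unif-bounds} render $W_n\phi$ and $\sigmav\cdot\vec{B}_n\phi$ uniformly bounded in $H^{-2\delta}$ and $\vec{A}_n\cdot\nabla\phi$ uniformly bounded in $L^2$, all three being supported in $K$. On distributions supported in a fixed compact set $K$, the embeddings $H^{-2\delta}\hookrightarrow H^{-1}$ and $L^2\hookrightarrow H^{-1}$ are compact (dual form of Rellich--Kondrachov, using $2\delta<1$), so any subsequence has a further subsequence converging strongly in $H^{-1}$; the limit is identified as $V_\infty\phi$ via the weak convergence assumption together with the density of $C_c^\infty$ in $H^{2\delta}$, and by uniqueness the full sequence converges strongly. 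Pairing this strong $H^{-1}$-convergence with the weak $H^1$-convergence of $\psi_n$ delivers $\meanlr{\phi}{V_n}{\psi_n}\to\meanlr{\phi}{V_\infty}{\psi}$.

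Combining these limits, for every $\phi\in C_c^\infty$,
\begin{equation*}
  \liminf_n Q_n[\psi_n]\,\geq\,2\,\mathrm{Re}\,Q_\infty[\phi,\psi]-Q_\infty[\phi]\,=\,Q_\infty[\psi]-Q_\infty[\psi-\phi].
\end{equation*}
Since $C_c^\infty$ is dense in $\mathcal{Q}$ under \cref{hyp:A-U} (by standard truncation and mollification) and $Q_\infty$ is continuous on $\mathcal{Q}$, taking the supremum over $\phi$ yields $\inf_{\phi\in C_c^\infty}Q_\infty[\psi-\phi]=0$, and hence \eqref{eq:5}. The principal difficulty is the strong $H^{-1}$-convergence of $V_n\phi$ for compactly supported smooth $\phi$: this is precisely the step that exploits the quantitative bounds \eqref{eq:unif-bounds} with $\delta<1/2$, which upgrade the qualitative weak convergence of $V_n$ to uniform boundedness in a Sobolev space of order strictly better than $-1$, thereby activating the local compact embedding.
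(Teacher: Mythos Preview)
Your proof is correct and follows the same architecture as the paper's (the polarization trick of \cref{lemma:1}, reduction to $\mathcal{Q}$-bounded sequences, and testing against $\phi\in C_c^\infty$). The difference lies entirely in how the mixed term $\meanlr{\phi}{V_n}{\psi_n}$ is handled, and your treatment is in fact more economical.

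The paper splits $\meanlr{\phi}{V_n}{\psi_n}=\meanlr{\phi}{V_n}{\psi}+\meanlr{\phi}{V_n}{\psi_n-\psi}$, inserts a spatial cutoff $\chi_{B_0(R_\kappa)}$ in the second piece, and then needs a commutator estimate for $[\chi_{B_0(R_\kappa)},(-\Delta+\lambda_0)^{-1/2}]$ (via a resolvent integral representation) together with compactness of $(-\Delta+\lambda_0)^{-1/2+\delta}\chi_{B_0(R_\kappa)}$; this produces an $O(R_\kappa^{-1})$ residual error that is removed only at the very end, and the commutator $[\chi_{B_0(R_\kappa)},V_n]=\vec A_n\cdot\nabla\chi_{B_0(R_\kappa)}$ is treated separately. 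You bypass all of this by observing that, for fixed $\phi\in C_c^\infty$, the functional $\psi\mapsto\meanlr{\phi}{V_n}{\psi}$ is represented (thanks to the gauge condition $\nabla\cdot\vec A_n=\vec A_n\cdot\nabla$, which moves the derivative onto $\phi$) by a compactly supported distribution lying in $H^{-2\delta}$ or $L^2$, hence by dual Rellich is precompact in $H^{-1}$; strong $H^{-1}$-convergence then pairs directly with the weak $H^1$-convergence of $\psi_n$. This avoids the resolvent commutator step and the two-parameter ($n$ then $\kappa$) limit.

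Two small remarks. First, your argument for the $\vec A_n$ piece silently uses the divergence-free condition from the setup: without it, the functional $\psi\mapsto\langle\vec A_n\phi\,|\,\nabla\psi\rangle$ would only lie in $H^{-1}$ (no gain), and the compact-embedding step would fail; it is worth making the integration by parts explicit. Second, when you write ``density of $C_c^\infty$ in $H^{2\delta}$'' to identify the limit, what you actually need is density in $H^1$ (the predual of $H^{-1}$, where the cluster points live); this is of course still true and changes nothing.
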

\begin{proof}
  First of all, since a non-uniformly-bounded sequence $\psi_n$ on $\mathcal{Q}:=\mathcal{Q}(-\Delta+U)$
   makes the l.h.s. of the inequality to prove diverge, we can
  restrict to uniformly bounded weakly convergent sequences, hence to
  sequences $\psi_n\xrightarrow[n\to+ \infty]{\mathrm{w}-\mathcal{Q}}\psi$. Let $\psi_{\kappa}\in C_0^{\infty}(\mathbb{R}^d)$ be such that
  \begin{equation*}
    \lVert \psi_{\kappa}-\psi  \rVert_{\mathcal{Q}}^{}< \kappa\;,
  \end{equation*}
  and $\mathrm{supp} (\psi_{\kappa})\subset B_0(R_{\kappa})$, with $R_{\kappa}\to \infty$ as $\kappa\to 0$. 
  
  Using
  \cref{lemma:1}, we can write
  \begin{multline*}
    \meanlr{\psi_n}{H_n+\lambda}{\psi_n} =\sup_{\phi\in \mathcal{Q}} \mathrm{Re} \meanlr{\phi}{H_n+\lambda}{2\psi_n-\phi} \geq \mathrm{Re} \meanlr{\psi_{\kappa}}{H_n+\lambda}{2\psi_n-\psi_{\kappa}} \\=-  \meanlr{\psi_{\kappa}}{H_n+\lambda}{\psi_{\kappa}}+ 2\mathrm{Re} \meanlr{\psi_{\kappa}}{H_n+\lambda}{\psi_n}\;. 
  \end{multline*}
  Now, the first term on the rightmost hand side converges by weak
  convergence of $V_n$:
  \begin{equation*}
    \lim_{n\to \infty} \meanlr{\psi_{\kappa}}{H_n+\lambda}{\psi_{\kappa}} = \meanlr{\psi_{\kappa}}{H_\infty+\lambda}{\psi_{\kappa}}\;.
  \end{equation*}
  We rewrite the remaining term as
  \begin{equation*}
    \mathrm{Re} \meanlr{\psi_{\kappa}}{H_n+\lambda}{\psi_n} = \mathrm{Re} \meanlr{\psi_{\kappa}}{-\Delta+U+\lambda}{\psi_n}+ \mathrm{Re}  \meanlr{\psi_{\kappa}}{V_n}{\psi_n}\;.
  \end{equation*}
  Since $\psi_n\xrightarrow[n\to +\infty]{\mathrm{w}-\mathcal{Q}}\psi$,
  \begin{equation*}
    \mathrm{Re}  \meanlr{\psi_{\kappa}}{-\Delta+U+\lambda}{\psi_n} \xrightarrow[n\to +\infty]{} \mathrm{Re} \meanlr{ \psi_{\kappa}}{-\Delta+U+\lambda}{\psi}\;.
  \end{equation*}
  Finally, it remains to consider $\mathrm{Re} \meanlr{\psi_{\kappa}}{V_n}{\psi_n} = \mathrm{Re} \meanlr{\psi_{\kappa}}{ V_n}{\psi} + \mathrm{Re} \meanlr{\psi_{\kappa}}{V_n}{\psi_n- \psi} $: since $V_n$ converges weakly on $\mathcal{Q}$ to $V_{\infty} $, we have
  \begin{equation*}
    \lim_{n\to +\infty}\mathrm{Re} \meanlr{\psi_{\kappa}}{V_n}{\psi} = \mathrm{Re} \meanlr{\psi_{\kappa}}{V_\infty}{\psi} \;.
  \end{equation*}
Next, we write
  \begin{multline*}
    \mathrm{Re} \meanlr{\psi_{\kappa}}{V_n}{\psi_n- \psi} = \mathrm{Re} \meanlr{\chi_{B_0(R_{\kappa})} \psi_{\kappa}}{V_n}{\psi_n- \psi} \\
    = \mathrm{Re} \meanlr{\psi_{\kappa}}{V_n}{\chi_{B_0(R_{\kappa})}(\psi_n- \psi)} + \mathrm{Re} \braketr{ \psi_{\kappa}}{[\chi_{B_0(R_{\kappa})},V_n] (\psi_n- \psi)} \;,
  \end{multline*}
  where $\chi_{B_0(R_{\kappa})}$ is a smooth characteristic function that is $ \equiv 1 $
  inside $B_0(R_{\kappa})$, and is supported on $B_0(2R_{\kappa})$. 
  For the first term on the r.h.s., observing that $(-\Delta+\lambda_0)^{-1/2}V_n(-\Delta+\lambda_0)^{-\delta}$ is uniformly bounded in $n$ by \eqref{eq:unif-bounds} and that $\|\psi_\kappa\|_{\mathcal{Q}}$ is uniformly bounded in $0<\kappa<1$, we can estimate
  \begin{multline*}
  \lf| \meanlr{\psi_{\kappa}}{V_n}{\chi_{B_0(R_{\kappa})}(\psi_n- \psi)} \ri| \\\le \lf\|\psi_\kappa \ri\|_{\mathcal{Q}} \lf\| (-\Delta+\lambda_0)^{-1/2}V_n(-\Delta+\lambda_0)^{-\delta} \ri\| \lf\| (-\Delta+\lambda_0)^{\delta} \chi_{B_0(R_{\kappa})}(\psi_n- \psi) \ri\| \\
  \le C \lf\| (-\Delta+\lambda_0)^{\delta}\chi_{B_0(R_{\kappa})}(\psi_n- \psi) \ri\|.
  \end{multline*}
  Now we can write
  \begin{align*}
  (-\Delta+\lambda_0)^{\delta}\chi_{B_0(R_{\kappa})}(\psi_n- \psi) %=(-\Delta+\lambda_0)^{\delta}\chi_{B_0(R_{\varepsilon})}(-\Delta+\lambda_0)^{-\frac12}(-\Delta+\lambda_0)^{\frac12} (\psi_n- \psi) \\
&  =(-\Delta+\lambda_0)^{-1/2+\delta}\chi_{B_0(R_{\kappa})} (-\Delta+\lambda_0)^{1/2} (\psi_n- \psi) \\
 &\quad - (-\Delta+\lambda_0)^{\delta} \lf[ \chi_{B_0(R_{\kappa})} , (-\Delta+\lambda_0)^{-1/2} \ri] (-\Delta+\lambda_0)^{1/2} (\psi_n- \psi) .
  \end{align*}
  For any fixed $0<\kappa<1$, the first term goes to $0$ in norm as $n\to+\infty$ since $(-\Delta+\lambda_0)^{-1/2+\delta}\chi_{B_0(R_{\kappa})}$ is compact and $(-\Delta+\lambda_0)^{1/2} (\psi_n- \psi)\to 0$ as $n\to+\infty$ weakly in $L^2$. As for the second term, we use the representation formula 
  	\bdm
  		(-\Delta+\lambda_0)^{-1/2}=\pi^{-1}\int_0^\infty s^{-1/2}\big( -\Delta+\lambda_0+s \big )^{-1} \mathrm{d}s,
	\edm 
which gives
  	\bdm
  		\lf[\chi_{B_0(R_{\kappa})} , (-\Delta+\lambda_0)^{-1/2} \ri]=\pi^{-1}\int_0^\infty s^{-1/2}\big( -\Delta+\lambda_0+s \big )^{-1}\lf [-\Delta,\chi_{B_0(R_{\kappa})}\ri]\big( -\Delta+\lambda_0+s \big )^{-1} \mathrm{d}s.
	\edm 
Since $\delta<\frac12$, standard estimates exploiting the scaling properties of the Laplacian resolvent then show that 
    \begin{equation*}
	  \lf\|(-\Delta+\lambda_0)^{\delta} \lf[\chi_{B_0(R_{\kappa})} , (-\Delta+\lambda_0)^{-1/2} \ri] \ri\|\le CR_\kappa^{-1},
  \end{equation*}
  and hence, since in addition $ \lf\|\psi_n \ri\|_{H^1}$ is uniformly bounded in $n$,
  \begin{equation*}
   \lf\|(-\Delta+\lambda_0)^{\delta} \lf[\chi_{B_0(R_{\kappa})} , (-\Delta+\lambda_0)^{-1/2} \ri] (-\Delta+\lambda_0)^{1/2} (\psi_n- \psi) \ri\| \le CR_\kappa^{-1} .
  \end{equation*}
The previous estimates yield
  \begin{equation*}
    \limsup_{n\to +\infty} \lf| \meanlr{\psi_{\kappa} }{V_n\chi_{B_0(R_{\kappa})}}{\psi_n- \psi} \ri| \le CR_\kappa^{-1}.
  \end{equation*}
  
  It remains to consider the term $\mathrm{Re} \braketr{\psi_{\kappa}}{[\chi_{B_0(R_{\kappa})},V_n] (\psi_n- \psi)}$. We compute
  \begin{equation*}
 \mathrm{Re} \braketr{\psi_{\kappa}}{\lf[V_n,\chi_{B_0(R_{\kappa})} \ri] (\psi_n- \psi)} \\= \mathrm{Re} \braketr{ \psi_{\kappa}}{\vec{A}_n\cdot (\nabla\chi_{B_0(R_{\kappa})})(\psi_n- \psi)}\;.
  \end{equation*}
  This term converges to zero by the Rellich-Kondrachov theorem, since $\vec{A}_n\psi_\kappa$ is uniformly bounded in $n$ by \eqref{eq:unif-bounds}, 
  $\nabla\chi_{B_0(R_{\kappa})}$ is a compactly supported function and $\psi_n$ converges weakly to $\psi$ in $H^1$. 
  
  Putting all together, we obtain
  that
  \begin{equation*}
    \liminf_{n\to + \infty} \meanlr{\psi_n}{H_n+\lambda}{\psi_n} \geq - \meanlr{\psi_{\kappa}}{H_{\infty}+\lambda}{\psi_{\kappa}} + 2 \mathrm{Re} \meanlr{\psi_{\kappa}}{H_\infty+\lambda}{\psi} - CR_\kappa^{-1}\;.
  \end{equation*}
  Now, the right hand side converges, as $\kappa\to 0$, to $\meanlr{\psi}{H_{\infty}+\lambda}{\psi} $, thus
  completing the proof.
\end{proof}

{\footnotesize
  % \bibliography{bib.bib}
 % \printbibliography{}

}

\end{document}